\def\BibTeX{{\rm B\kern-.05em{\sc i\kern-.025em b}\kern-.08em
    T\kern-.1667em\lower.7ex\hbox{E}\kern-.125emX}}
\DeclarePairedDelimiter{\ceil}{\lceil}{\rceil}
\newcommand {\cA}{{\mathcal{A}}}
\newcommand {\cB}{{\mathcal{B}}}
\newcommand {\cT}{{\mathcal{T}}}
\newcommand {\bX} {\mathbf{X}}
\newcommand {\bY} {\mathbf{Y}}
\newcommand {\bW} {\mathbb{W}}
\newcommand {\bx} {{\bf x}}
\newcommand {\by} {{\bf y}}
\newcommand {\fttimes} {\text{\footnotesize $\times$}}
\newcommand {\onu}{\overline{\nu}}
\newcommand {\bpi} {\boldsymbol{\pi}}
\newcommand {\obpi} {\overline{\boldsymbol{\pi}}}
\newcommand {\bvarrho}{\boldsymbol{\varrho}}
\newcommand{\Eqdef}{:=}
\newcommand {\N} {\mathbb{N}}
\newcommand {\R} {{\rm I\kern-2.5pt R}}
\newcommand {\C} {{\rm I\kern-5pt C}}
\newtheorem{lemma}{Lemma}
\newtheorem{prop}{Proposition}
\newtheorem{coro}{Corollary}
\newtheorem{theorem}{Theorem}
\newtheorem{remark}{Remark}
\newtheorem{exmp}{Example}
\newtheorem{defn}{Definition}
\newtheorem{problem}{Problem}
\newcommand{\beqa}{\begin{eqnarray}}
\newcommand{\eeqa}{\end{eqnarray}}
\newcommand{\beqan}{\begin{eqnarray*}}
\newcommand{\eeqan}{\end{eqnarray*}}
\newcommand{\beq}{\begin{equation}}
\newcommand{\eeq}{\end{equation}}
\newcommand{\bfl}{\begin{flushleft}}
\newcommand{\efl}{\end{flushleft}}
\newcommand{\myb}{\hspace{-0.1in}}
\newcommand{\myeq}{& \hspace{-0.1in} = & \hspace{-0.1in}}
\newcommand{\lb}{\nonumber \\}
\newcommand{\myarr}{\begin{array}{lll}}
\newcommand{\mygeq}{& \myb \geq & \myb}
\newcommand{\myleq}{& \myb \leq & \myb}
\newcommand{\bitem}{\begin{itemize}}
\newcommand{\eitem}{\end{itemize}}
\newcommand{\benum}{\begin{enumerate}}
\newcommand{\eenum}{\end{enumerate}}
\newcommand{\norm}[1]{\left| \left| #1 \right| \right|}
\newcommand{\myhb}{\hspace{-0.2in}}
\newcommand{\myhf}{\hspace{0.3in}}
\newcommand{\myskip}{\\ \vspace{-0.1in}}
\newcommand{\oby}{\overline{{\bf y}}}
\newcommand{\os}{\overline{s}}
\newcommand{\ow}{\overline{w}}
\newcommand{\opi}{\overline{\pi}}
\newcommand{\obY}{\overline{\bf Y}}
\DeclarePairedDelimiter\floor{\lfloor}{\rfloor}
\DeclareMathOperator{\EX}{\mathbb{E}}
\DeclareMathOperator*{\plim}{plim}
\newcommand{\remind}[1]{{\bf {\color{red}[\marginpar[\hbox{{\large%
$\circ$}\raisebox{0ex}{\large $\longrightarrow$}}]%
{\hbox{\raisebox{0ex}{\large $\longleftarrow$}{\large
$\circ$}}}#1]} } }
\title{\LARGE \bf
Queueing Subject To Action-Dependent Server Performance: \\ Utilization Rate Reduction
}
\author{Michael Lin \and Nuno C. Martins \and Richard J. La \thanks{The authors are with the 
Department of Electrical \& Computer 
Engineering and 
the Institute for Systems Research, the
University of Maryland, College Park, MD 20742. 
Email: \{mlin1025, hyongla, nmartins\}@umd.edu.}
\thanks{This work is supported in part by 
AFOSR Grant FA95501510367 and NSF Grant ECCS 1446785.}
}
\begin{document}

\maketitle
\thispagestyle{empty}
\pagestyle{plain}

\begin{abstract}

We consider a discrete-time system comprising a first-come-first-served queue, a non-preemptive server, 
and a stationary non-work-conserving scheduler. New tasks enter the queue according to a Bernoulli process with a pre-specified arrival rate. At each instant, the server is either busy working 
on a task or is available. When the server is available, the scheduler 
either assigns a new task to the server or allows it to remain available (to \emph{rest}). In addition to the aforementioned {\em availability} 
state, we assume that the server has an integer-valued {\em 
activity state}.
The  activity state is non-decreasing during work 
periods, and is non-increasing otherwise.
In a typical application of our framework, the server performance (understood as task completion probability) worsens as the activity state increases. In this article, we build on and transcend recent stabilizability results obtained for the same framework. Specifically, we establish methods to design scheduling policies that not only stabilize the queue but also reduce the {\em utilization rate} \textemdash understood as the infinite-horizon time-averaged portion of time the server is working. This article has a main theorem leading to two key results: (i)~We put forth a tractable method to determine, using a finite-dimensional linear program (LP), the infimum of all utilization rates that can be achieved by scheduling policies that are stabilizing, for a given arrival rate. (ii)~We propose a design method, also based on finite-dimensional LPs, to obtain stabilizing scheduling policies that can attain a utilization rate arbitrarily close to the aforementioned infimum. We also establish structural and distributional convergence properties, which are used throughout the article, and are  significant in their own right.



\end{abstract}

\section{Introduction}
In this article, we adopt the discrete-time framework proposed in~\cite{Lin2019Scheduling-task}, in which a scheduler governs when tasks waiting in a first-come-first-served queue are assigned to a server. The server is non-preemptive, and has an internal state comprising two components: (i) the \emph{availability state} and (ii) \emph{activity state}. The former indicates whether the server is busy or available, and the latter takes values in a finite set $\{1,\ldots,n_s\}$ that accounts for the intensity of the effort put in by the server. The activity state depends on current and previous scheduling decisions, and it is useful for modelling performance-influencing factors, such as the state of charge of the batteries of an energy harvesting module that powers one or more components of the server. As a rule, the activity state may increase while the server is busy and, otherwise, decrease gradually while the server is available ({\em or resting}). 

In our framework, which follows~\cite{Lin2019Scheduling-task}, an {\it instantaneous service rate function} ascribes to each possible activity state a probability that the server can complete a task in one time-step. According to our assumption of non-preemption, once the server becomes busy working on a task, it becomes available again only when the task is completed. When the server is available, the scheduler decides,  based on the activity state and the size of the queue, whether to assign a new task to the server. Although our results remain valid for any instantaneous service rate function, in many applications it is decreasing, which causes the server performance (understood as task completion probability) to worsen as the activity state increases. The vital trade-off the scheduler faces, in this case, is whether to assign a new task when the server is available or allow it to remain available~(rest) to possibly ameliorate the activity state as a way to improve future performance.


\subsection{Problem Statements and Comparison to~\cite{Lin2019Scheduling-task}}

Besides introducing and justifying in detail the formulation adopted here, in~\cite{Lin2019Scheduling-task} the authors characterize the supremum of all arrival rates for which there is a scheduler that can stabilize the queue. The analysis in~\cite{Lin2019Scheduling-task} also shows that such a supremum can be computed by a finite search, and identifies simple stabilizing scheduler structures, such as those with a threshold-type configuration.

In this article, we build on the analysis in~\cite{Lin2019Scheduling-task} to design schedulers that not only guarantee stability but also lessen the \underline{utilization rate}, which we will define precisely later on and can be interpreted as the proportion of time in which the server is working. Specifically, throughout this article, we will investigate and provide solutions to the following two problems.

\begin{problem}\label{Problem:ComputeBound} Given a server and a stabilizable arrival rate\footnote{A given arrival rate is deemed {\it stabilizable} when there is a scheduling policy for which the queue is stable in the sense specified in~\cite{Lin2019Scheduling-task} and that will be precisely defined also in this article later on.}, determine a tractable method to compute the infimum of all utilization rates that can be achieved by a stabilizing scheduling policy. Such a fundamental limit is important to determine how effective any given stabilizing policy is in terms of the utilization rate.
\end{problem}

\begin{problem}\label{Problem:ComputePolicy} Given a server and a stabilizable arrival rate, determine a tractable method to design stabilizing scheduling policies whose utilization rate is arbitrarily close to the fundamental limit. 
\end{problem}


\subsection{Overview of Main Results and Technical Approach}

In \S\ref{sec:MainResults}, Theorem~\ref{thm:PolicyDesign} states our main result, from which we obtain Corollaries~\ref{cor:Equality} and~\ref{cor:PolicyDesign} that constitute our solutions to Problems~1 and~2, respectively. The following are key consequences of these corollaries. (i)~According to Corollary~\ref{cor:Equality}, the infimum utilization rate (alluded to in Problem~1) can be computed by solving a finite-dimensional linear program (LP). (ii)~If the arrival rate is stabilizable by the server, then Corollary~\ref{cor:PolicyDesign} guarantees that, for each positive gap $\delta$, there is a stabilizing scheduling policy whose utilization rate exceeds the infimum (characterized by Corollary~\ref{cor:Equality}) by at most $\delta$. Notably, such a scheduling policy can be obtained from a solution of a suitably-specified finite-dimensional LP. 

Our technical approach builds on the concepts and techniques introduced in~\cite{Lin2019Scheduling-task}. In particular, we use an appropriately-constructed auxiliary finite-state controlled Markov chain (denoted in~\cite{Lin2019Scheduling-task} as \emph{reduced process}) to obtain the above-mentioned LP-based solution methods. 

This article is mathematically more intricate than~\cite{Lin2019Scheduling-task}, which is unsurprising considering that it tackles not only stabilization but also regulation of the utilization rate. Among the new concepts and techniques put forth to prove Theorem~\ref{thm:PolicyDesign}, the distributional convergence results of \S\ref{sec:DistributionalConvergence}, and the potential-like method used to establish them, are of singular importance \textemdash they are also original and relevant in their own right.

\subsection{Related Literature}

As mentioned earlier, to the best of our knowledge, 
our work is the first to study the problem of 
lessening the utilization rate of a server whose
performance is time-varying and dependent on 
an internal state that reflects its activity
history. For this reason, there are no 
other results to which we can directly compare our findings. 

An earlier study that examined a system that closely
resembles ours is that of Savla and Frazzoli~\cite{Savla2012A-Dynamical-Que}. They studied
the problem of designing a maximally stabilizing
task release control policy, using a differential
system model. Under an assumption that the service time function is convex, they 
derived bounds on the maximum 
throughput achievable by any admissible
policy for a fixed task workload
distribution. 
In addition, they showed the existence of 
a maximally stabilizing threshold policy
when the tasks have the same workload. 
Finally, they demonstrated that
the maximum achievable throughput 
increases when the task workload
is not deterministic. 
However, they did not consider the problem 
of minimizing utilization rate in 
their study. 

In addition to the aforementioned
study, there are a few research
fields that share a key aspect of
our problem, which is to design a scheduling
policy to optimize the performance with
respect to one objective, subject to one or
more constraints. For instance, wireless
energy transfer has emerged as a potential
solution to powering small devices that
have low-capacity batteries or cannot be
easily recharged, e.g., Internet-of-Things
(IoTs) devices~\cite{Bi16, Niyato17}. 
Since the devices need to collect sufficient
energy before they can transmit and 
the transmission rate is a function of 
transmit power, a transmitter has to 
decide (i) when to harvest energy and 
(ii) when to transmit and at what 
rate. For example, the
studies reported in \cite{Che15, Ju14, Yang15} 
examined the problem of maximizing throughput
in wireless networks in 
which communication devices are powered
by hybrid access points via
wireless energy transfer. In a related
study, Shan et al.~\cite{Shan16} studied
the problem of minimizing the total 
transmission delay or completion time
of a given set of packets.

Integrated production scheduling and 
(preventive) maintenance planning in 
manufacturing, where machines can fail 
with time-varying rates, shares similar
issues as scheduling devices powered
by wireless energy transfer
\cite{Cassady05, Najid11, Yao05}. 
In more traditional approaches, the problems 
of production scheduling and maintenance 
scheduling are considered separately, and
equipment failures are treated as random 
events that need to be coped with. 
When the machine failure probability, or
rate, is time-varying and depends on the 
length of time (age) elapsed
since the last (preventive) maintenance, 
the overall production efficiency can 
be improved by jointly considering both
problems. For instance, the authors of
\cite{Yao05} formulated the problem 
using an MDP model with the state 
consisting of the system's age (since the
last preventive maintenance) and the
inventory level, and investigated the 
structural properties of optimal policies. 

Another area that shares a similar objective
is the maximum hand-offs control or
sparse control~\cite{Nagahara16, Chatterjee16,
Ikeda16, Ikeda16b, Ikeda19}. 
The goal of the maximum hands-off
control is to design a control signal that
maximizes the time at which the control 
signal is equal to zero and inactive. 
For instance, the authors of
\cite{Nagahara16} showed that, under the
normality condition, the optimal solution 
sets of a maximum hands-off control 
problem and an associated $L^1$-optimal
control problem coincide. Moreover, 
they proposed a self-triggered feedback
control algorithm for infinite-horizon
problems, which leads to a control signal
with a provable sparsity rate, while
achieving practical stability of the 
system. In another study~\cite{Chatterjee16}, 
Chatterjee et al. provided both 
necessary conditions and sufficient
conditions for maximum hands-off
control problem. Ikeda and Nagahara
\cite{Ikeda16} considered a linear
time-invariant system and showed 
that, if the system is 
controllable and the dynamics matrix
is nonsingular, the optimal value
of the optimal control problem
for the maximum hands-off control
is continuous and convex in the 
initial condition.

Finally, another research problem, which garnered
much attention in wireless sensor networks 
and is somewhat related to the maximum 
hands-off control, 
is duty-cycle scheduling of sensors. A common
objective for the problem 
is to minimize the total energy 
consumption subject to performance 
constraints on delivery reliability
and delays \cite{Ergen08}. The authors
of \cite{Liu06} proposed using a 
reinforcement learning-based control mechanism
for inferring the states of neighboring sensors
in order to minimize the active periods. 
In another study, Vigorito et al. studied
the problem of achieving energy neutral
operation (i.e., keep the battery charge
at a sufficient level) 
while maximizing the awake times
\cite{Vigorito07}. In order to design a
good control policy, they formulated the
problem as an optimal tracking problem, 
more precisely a linear quadratic tracking
problem, with the aim of keeping the
battery level around some target value.

\subsection{Paper Structure}

This article has five sections. After the introduction, in~\S\ref{sec:TechnicalFramework}, we describe the technical framework, including the controlled Markov chain that models the server. In~\S\ref{sec:TechnicalFramework}, we also introduce a relevant auxiliary {\em reduced} process, define key quantities and maps that quantify the utilization rate, characterize key policy sets, specify the notion of stability used throughout the article, and establish certain preliminary results. Our main theorem and key results are stated in~\S\ref{sec:MainResults}, while~\S\ref{sec:ContinuityProperties} and~\S\ref{sec:DistributionalConvergence} present continuity and distributional convergence properties, respectively, that are required in the proof of our main theorem. We defer the most intricate proofs, some of which also require additional auxiliary results, to the appendices at the end of the article. The main body of the article ends with brief conclusions in~\S\ref{sec:conclusions}.

\section{Technical Framework and Key Definitions}
\label{sec:TechnicalFramework}


 This section starts with a synopsis of the discrete-time framework put forth thoroughly in~\cite{Lin2019Scheduling-task}. Henceforth, we replicate from~\cite{Lin2019Scheduling-task} what is strictly necessary to make this article self-contained. In this section, we also introduce the concepts, sets, operators and notation that are required to formalize and later on solve Problems~\ref{Problem:ComputeBound} and~\ref{Problem:ComputePolicy}.


\begin{remark} According to the approach in~\cite{Lin2019Scheduling-task}, each discrete-time~$k$ represents a continuous-time interval, or epoch, whose duration can be made arbitrarily small. Considering that this representation is described in detail in~\cite{Lin2019Scheduling-task}, here we proceed directly to the description of the resulting discrete-time framework and we refer to each epoch $k$ simply as time~(instant) $k$, with $k$ in $\mathbb{N}:=\{0,1,2,\ldots\}$.
\end{remark}

\subsection{Stochastic Discrete-Time Framework}

As in~\cite{Lin2019Scheduling-task}, we consider that the server is represented by the MDP $\bY:=\{\bY_k \in \mathbb{Y} : k \in \mathbb{N} \}$. The state of the server at time $k$ is $\bY_k :=(S_k,W_k)$, whose components are the activity state $S_k$ and the availability state $W_k$ taking values in $\mathbb{S}:=\{1,\cdots,n_s\}$ and $\mathbb{W}:=\{\mathcal{A},\mathcal{B}\}$, respectively. Here, $W_k=\mathcal{A}$ indicates that the server is available at time $k$, while $W_k=\mathcal{B}$ signals that the server is busy. Consequently, the state-space of the server is represented as
\begin{equation}
\mathbb{Y} := \mathbb{S} \times \mathbb{W}. 
\end{equation}

The MDP $\bX := \{ \bX_k \in \mathbb{X} : k \in \mathbb{N} \}$ represents the overall system comprising the server $\bY$ and the queue length. More specifically, the state of the system is $\bX_k : = (\bY_k,Q_k)$, where $Q_k$ is the length of the queue at time $k$, and the state-space of $\bX$ is:
\begin{equation}
\label{eq:XDef}
\mathbb{X} : = \mathbb{S} \times \Big ( \big ( \mathbb{W} \times \mathbb{N} \big) \diagdown (\mathcal{B},0) \Big )
\end{equation}
Notice that $\mathbb{X}$ excludes the impossible case in which the server would be busy working with an empty queue.

The action of the scheduler at time $k$ is represented by $A_k$, which takes values in the set $\mathbb{A} : = \{ \mathcal{R},\mathcal{W} \}$. The scheduler directs the server to work at time $k$ when $A_k = \mathcal{W}$ and instructs the server to rest when $A_k = \mathcal{R}$. Since the server is non-preemptive, once it is busy working on a task it is not allowed to rest until the task is completed and it becomes available again.
This constraint and the fact that 
no new tasks can be assigned when the queue is empty, lead to the 
following set of admissible actions for each possible state 
${\bx=(s,w,q)}$ in~$\mathbb{X}$: 
\begin{equation}
\label{ActionConstraints}
\mathbb{A}_\bx
:= \begin{cases}
	\{\mathcal{R}\} & \text{if  $q = 0, \  $ {\small (impose 
		`rest' when queue is empty)}} \\
    \{\mathcal{W}\} & \text{if  $q > 0$ and  $w = \mathcal{B}$, \ {\small 
    	(non-preemptive server)}} \\
    \mathbb{A} & \text{otherwise}. \\
	\end{cases}
\end{equation} 

We assume that tasks arrive according 
to a Bernoulli process $\{B_k; \ k \in \N\}$. 
The {\em arrival rate} is denoted by $\lambda:
= {P(B_k=1)}$.

\subsubsection{\underline{Activity-Dependent Server Performance}}
  
In our formulation, the efficiency or performance of the 
server is modeled with the help of 
an {\em instantaneous service rate} function $\mu: \mathbb{S} \to 
(0, 1)$. More specifically, if the server works on a task
at time $k$, \underline{the probability} that it 
completes the task before time $k+1$ is $\mu(S_k)$. 
This holds irrespective of whether the task is newly 
assigned or inherited as ongoing work.
Thus, $\mu$ quantifies the effect 
of the activity state on the performance of the 
server. {\bf The results presented throughout this 
article are valid for {\em any} choice of $\mu$ with 
codomain~$(0,1)$}.

\subsubsection{\underline{Dynamics of the activity state}}

We assume that (i)~$S_{k+1}$ is equal to either $S_k$ 
or $S_k+1$ when $A_k$ is 
$\mathcal{W}$ and (ii)~$S_{k+1}$ is either 
$S_k$ or $S_k - 1$ if $A_k$ 
is $\mathcal{R}$.  The state-transition probabilities for $S_k$ are specified below for every $s$ and $s'$ in $\mathbb{S}$:
\begin{subequations}
\label{Def-SDynamics}
\begin{align}
 P_{S_{k+1} | S_k, A_k}(s' \ | \ s, \mathcal{W}) & =
\begin{cases}
	\rho_{s,s+1} & \mbox{if }  
    	s' = s + 1 \\
    1 - \rho_{s, s+1} & \mbox{if } 
    	s' = s \\
    0 & \mbox{otherwise}
	\end{cases} \\
 P_{S_{k+1} | S_k, A_k}(s' \ | \ s, \mathcal{R}) &= 
\begin{cases}
	\rho_{s,s-1} & \mbox{if }
    	s' = s-1 \\
    1 - \rho_{s, s-1} & \mbox{if } 
    	s' = s \\
    0 & \mbox{otherwise}
	\end{cases}
\end{align}
\end{subequations} where the parameters $\rho_{s,s'}$ quantify the likelihood that the activity 
state will transition to a greater or lesser value, depending on 
whether the action is $\mathcal{W}$ or $\mathcal{R}$, respectively.
Here, we assume that $\{\rho_{s,s+1} : 1\leq s < n_s \}$ and $\{\rho_{s,s-1}: 1< s \leq n_s \}$ take values in $(0,1)$. We also adopt the convention that $\rho_{1,0}=\rho_{n_s,n_s+1}=0$.

\subsubsection{\underline{Transition probabilities for $\bX_k$}}

We consider that $S_{k+1}$ is independent of {${(W_{k+1}, Q_{k+1})}$} 
when conditioned on {{$(\bX_k, A_k)$}}. Under this assumption, the 
transition probabilities for $\bX_k$ can be written as follows: 
\beqa
\label{Eqdef:MDP}
&& \myhb
P_{\bX_{k+1}|\bX_k,A_k}( \bx'  \ | \ \bx, a) \lb
\myeq  P_{S_{k+1} | \bX_k, A_k}(s' \ | \ \bx, a) \lb 
&& \times P_{W_{k+1},Q_{k+1} | \bX_k, A_k}(w', q' \ | \ 
	\bx, a) \lb 
\myeq P_{S_{k+1} | S_k, A_k}(s' \ | \ s, a) 
	\label{eq:Xt+1} \\
&& \times P_{W_{k+1},Q_{k+1} | \bX_k, A_k}(w', q' \ | \ 
	\bx, a)
	\nonumber
\eeqa for every $\bx$, $\bx'$ in $\mathbb{X}$ and  $a$ in $\mathbb{A}_{\bx}$.

We assume that, at each time $k$, the events that
(i) there is a new task arrival and (ii) a 
task being serviced is completed are independent when conditioned on $\bX_k$ and 
$\{A_k=\mathcal{W}\}$. Hence, the transition probability 
$P_{W_{k+1},Q_{k+1} | \bX_k, A_k}$ in (\ref{eq:Xt+1}) is 
given by the following:
\begin{subequations}
\label{WTransitionDef}
\beqa
&& \myhb P_{W_{k+1},Q_{k+1} | \bX_k, A_k}(w', q' 
	\ | \ \bx, \mathcal{W}) \label{eq:WQW} \\
\myeq \left\{ \begin{array}{ll}
	\mu(s) \ \lambda & \mbox{if } w' = \cA \mbox{ and }
    	q' = q, \\
	\mu(s) \ (1-\lambda) & \mbox{if } w' = \cA \mbox{ and } 
    	q' = q - 1, \\
	(1-\mu(s)) \ \lambda & \mbox{if } w' = \cB \mbox{ and }
    	q' = q + 1, \\
	(1-\mu(s)) (1-\lambda) & \mbox{if } w' = \cB 
    	\mbox{ and } q' = q, \\
	0 & \mbox{otherwise, } \\
	\end{array} \right. \lb 
&& \myhb P_{W_{k+1},Q_{k+1} | \bX_k, A_k}(w', q' \ | \ 
	\bx, \mathcal{R}) \label{eq:WQR} \\
\myeq \left\{ \begin{array}{ll}
	\lambda & \mbox{if } w' = \cA \mbox{ and } q' = q+1, \\
	1-\lambda & \mbox{if } w' = \cA \mbox{ and } q' = q, \\
	0 & \mbox{otherwise.} \\
	\end{array} \right.
    \nonumber
\eeqa
\end{subequations}

\begin{defn}{\bf (MDP $\bX$)} The MDP with input $A_k$ and state $\bX_k$, which at this point is completely defined, is denoted by~$\bX$. 
\end{defn}

Table~\ref{table:Notation} summarizes the notation for MDP~$\bX$.
\begin{table}[h]
\begin{center}
\begin{tabular}{c|l} \hline
$\mathbb{S}$ & set of activity states $\{1, 	
	\ldots, n_s\}$ \\
$\bW \Eqdef \{\cA, \cB\}$ & server availability ($\cA =$ 
	available, $\cB =$ busy) \\
$W_k$ & server availability at time $k$ (takes values in 
	$\mathbb{W}$) \\
$\mathbb{Y}$ & server state components $\mathbb{S}\times
	\mathbb{W}$\\
$\bY_k \Eqdef (S_k, W_k)$ & server state at time $k$ 
	(takes values in $\mathbb{Y}$) \\
$\mathbb{N}$ & natural number system $\{0,1,2,\ldots \}$. \\
$Q_k$ & queue size at time $k$ (takes values in 
	$\mathbb{N}$) \\
$\mathbb{X}$ & state space formed by $\mathbb{S}\times 
	\Big( (\mathbb{W}\times\mathbb{N})\diagdown 
	(\mathcal{B},0) \Big)$ \\
$\mathbf{X}_k \Eqdef (\bY_k, Q_k)$ & system state at 
	time $k$ (takes values in $\mathbb{X}$)\\
$\mathbf{X}$ & MDP whose state is $\mathbf{X}_k$ at time 
	$k \in \N$ \\
$\mathbb{A} \Eqdef \{\mathcal{R},\mathcal{W}\}$ & possible 
	actions ($\mathcal{R}$ = rest, $\mathcal{W}$ = work) \\
$\mathbb{A_\mathbf{x}}$ & set of actions admissible at a 
	given state $\mathbf{x}$ in $\mathbb{X}$\\
$A_k$ & action chosen at time $k$. \\
PMF & probability mass function \\
\hline
\end{tabular}
\vspace{.1 in}
\caption{A summary of notation describing MDP $\bX$.}
\label{table:Notation}
\end{center}
\end{table}

\subsubsection{Stationary Policies, Stability and Stabilizability}

We start by defining the class of policies that we consider 
throughout the paper. 

\begin{defn} A stationary randomized policy is specified by
a mapping $\theta: \mathbb{X} \to [0, 1]$ that determines the 
probability that the server is assigned to work on a task or rest, 
as a function of the system state, according to
\begin{align*}
P_{A_k|\mathbf{X}_k,\ldots,\mathbf{X}_0}(\mathcal{W} \ | \ x_k,\ldots,x_0) 
	&= \theta(x_k) \ \mbox{ and } \\
P_{A_k|\mathbf{X}_k,\ldots,\mathbf{X}_0}(\mathcal{R} \ | \ x_k,\ldots,x_0) 
	&= 1-\theta(x_k).
\end{align*}
\\ \vspace{-0.35in}
\end{defn}

\begin{defn}	\label{def:PhiR}
The set of stationary randomized policies satisfying 
(\ref{ActionConstraints}) is denoted by $\Theta_R$.  
\myskip
\end{defn}

\paragraph*{Convention} Although the statistical properties of 
$\bX$ subject to a 
given policy depend on the parameters specifying $\bX$, including $\lambda$, we simplify our notation by not representing this dependence, unless noted otherwise. With the exception of $\lambda$, which we think of as a variable, we assume that all the other parameters for $\bX$ are given and fixed throughout the paper.

From (\ref{eq:Xt+1}) - (\ref{eq:WQR}), we conclude that $\mathbf{X}$ 
subject to a policy $\theta$ in  $\Theta_R$ evolves according to a
time-homogeneous Markov chain (MC), which we denote by
$\bX^\theta := \{\bX^\theta_k : k \in \N \}$.
Also, provided that it is clear from the context, 
we refer to $\bX^\theta$ as {\em the system}. 


The following is the notion of system stability we adopt 
throughout this article.
\myskip 

\begin{defn}[System stability, stabilizability and $\Theta_S(\lambda)$] \label{def:Stability} 
For a given policy $\theta$ in $\Theta_R$, 
the system $\bX^\theta$ is stable if it satisfies the following 
properties: 

\noindent (i)~The number of transient states is finite and, hence, 
there is at least one recurrent communicating class. \\
(ii)~All recurrent communicating classes are positive recurrent.

An arrival rate $\lambda$ is said to be stabilizable when there is a policy $\theta$ in $\Theta_R$ for which $\bX^\theta$ is stable. We also define $\Theta_S(\lambda)$ to 
be the set of randomized policies in $\Theta_R$ that
stabilize the system for a stabilizable arrival rate $\lambda$.
\end{defn}

Before we proceed, let us point out a useful fact
under any stabilizing policy $\theta$ in 
$\Theta_S(\lambda)$. 
 
\begin{lemma} \cite[Lemma~1]{Lin2019Scheduling-task}	\label{lemma:UniquePMF}
A stable system $\bX^\theta$ has a unique
positive recurrent communicating class, which
is aperiodic. Therefore, there is a unique 
stationary probability mass function (PMF) 
for $\bX^\theta$. 
\end{lemma}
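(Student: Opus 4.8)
The plan is to split the claim into two parts — (a) $\bX^\theta$ has a unique recurrent communicating class, and (b) that class is aperiodic — after which the lemma follows from the standard fact that a Markov chain with finitely many transient states (guaranteed by stability, Definition~\ref{def:Stability}(i)) possessing a single positive-recurrent (Definition~\ref{def:Stability}(ii)) aperiodic class admits a unique stationary PMF, supported on that class. I may assume $0<\lambda<1$: a stabilizable arrival rate must satisfy $\lambda<1$ (if $\lambda=1$ a task arrives every step while $\mu(\cdot)<1$, so the queue cannot be kept stable), and the case $\lambda=0$ is immediate because the queue length is then non-increasing, so any state with positive queue is visited at most once and, since the activity state can grow only while working, $(1,\mathcal{A},0)$ — which is absorbing — is the only state that can be recurrent. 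A tempting shortcut for $\lambda>0$ would be to show $(1,\mathcal{A},0)$ is reachable from everywhere and hence lies in every recurrent class, but that is false in general: for some stabilizing $\theta$ that always rests at small queue lengths, $(1,\mathcal{A},0)$ is transient. Instead I would attach to each recurrent class $R$ a \emph{canonical state} $\bx_R:=(1,\mathcal{A},q_R)$, where $q_R:=\min\{q:(s,w,q)\in R\text{ for some }s,w\}$, show $\bx_R\in R$, and then show that distinct recurrent classes would be forced to share such a canonical state.

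First I would establish that every minimum-queue state of $R$ has the server available: a state $(s,\mathcal{B},q_R)$ forces action $\mathcal{W}$, after which, by~(\ref{eq:WQW}) and~(\ref{Def-SDynamics}), with probability $\mu(s)(1-\lambda)>0$ the task completes with no arrival and the queue drops to $q_R-1$, contradicting minimality (and $q_R=0$ forces $w=\mathcal{A}$ anyway). Likewise, at a minimum-queue state $(s,\mathcal{A},q_R)$ the action must be $\mathcal{R}$ with probability one — forced when $q_R=0$, and otherwise because $\theta(s,\mathcal{A},q_R)>0$ would, through a single work step completing the task with no arrival, again let the queue fall to $q_R-1$. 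Since $R$ is closed and, under $\mathcal{R}$, $(s,\mathcal{A},q_R)$ moves to $(s-1,\mathcal{A},q_R)$ with probability $(1-\lambda)\rho_{s,s-1}>0$ for $s>1$, iterating drives the activity down to $1$, so $\bx_R=(1,\mathcal{A},q_R)\in R$. Next I would prove a ``ladder'' step: $(1,\mathcal{A},q)\to(1,\mathcal{A},q+1)$ has positive probability for every $q\ge 0$. If $\theta(1,\mathcal{A},q)<1$ (in particular if $q=0$, where $\mathcal{R}$ is forced), an arrival under $\mathcal{R}$ does it, the activity staying at $1$ since $\rho_{1,0}=0$; if $\theta(1,\mathcal{A},q)=1$, a work step reaches $(1,\mathcal{B},q+1)$ and then a simultaneous completion-and-arrival reaches $(1,\mathcal{A},q+1)$, the activity staying at $1$ with probability $1-\rho_{1,2}>0$ (with $\rho_{1,2}=0$ in the degenerate case $n_s=1$). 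Hence from $\bx_{R_1}$ one reaches $(1,\mathcal{A},q)$ for all $q\ge q_{R_1}$; so if $R_1\ne R_2$ were recurrent classes with $q_{R_1}\le q_{R_2}$, then $\bx_{R_2}=(1,\mathcal{A},q_{R_2})$ would be reachable from $R_1$, forcing $\bx_{R_2}\in R_1\cap R_2$ — impossible. This proves (a).

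For (b), at the canonical state $\bx_R$ the action is $\mathcal{R}$, and by~(\ref{eq:WQR}) and~(\ref{Def-SDynamics}) the chain returns to $\bx_R$ in one step with probability $(1-\lambda)(1-\rho_{1,0})=1-\lambda>0$ (recall $\lambda<1$), a self-loop; a communicating class containing a state with a self-loop has period $1$, so $R$ is aperiodic, and the lemma follows from the standard Markov-chain fact quoted at the outset. I expect the main obstacle to be the cluster of claims pinning down the canonical state — recognizing that $(1,\mathcal{A},0)$ itself may be transient, that the minimum-queue layer of a recurrent class forces resting, and that the ladder to higher queue levels holds no matter how aggressively $\theta$ assigns work; once those are in place, the intersection-of-closed-classes contradiction and the self-loop observation are routine.
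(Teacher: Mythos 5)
Your proposal is sound, but note that the paper you were given never proves this lemma: it is cited verbatim as~\cite[Lemma~1]{Lin2019Scheduling-task}, so there is no in-paper argument to compare against. On its own merits, though, your reconstruction is correct and careful. The main substantive observation \textemdash that $(1,\mathcal{A},0)$ need not be in the recurrent class, so one must instead anchor on the minimum-queue layer $q_R$ of each recurrent class $R$ \textemdash is exactly right, and your verification that this layer forces $w=\mathcal{A}$ and forces action $\mathcal{R}$ (else the queue could drop below $q_R$, contradicting closure of $R$) is clean. The ladder step $(1,\mathcal{A},q)\to(1,\mathcal{A},q+1)$ correctly handles both the $\theta(1,\mathcal{A},q)<1$ and $\theta(1,\mathcal{A},q)=1$ branches, including the degenerate $n_s=1$ case where $\rho_{1,2}=0$, and the self-loop at the canonical state settles aperiodicity. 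The only harmless over-statement is that aperiodicity is not actually required for uniqueness of the stationary PMF \textemdash a single positive-recurrent class plus the finitely-many-transient-states assumption of Definition~\ref{def:Stability}(i) already suffices for that \textemdash but since the lemma also asserts aperiodicity, establishing it is necessary anyway, and your self-loop argument does so immediately.
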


\begin{defn} \label{def:pi}
Given an arrival 
rate $\lambda > 0$ and a stabilizing policy 
$\theta$ in $\Theta_S(\lambda)$, we denote the unique 
stationary PMF and positive recurrent communicating
class of $\bX^\theta$ by ${\bpi^{\theta} = 
(\pi^{\theta}(\bx) : \bx \in \mathbb{X})}$ 
and $\mathbb{C}_\theta$, respectively.
\end{defn}

\subsection{Utilization Rate: Definition and Infimum}


Subsequently, we proceed to define the concepts and maps required to formalize the analysis and computation of the utilization rate, and its infimum alluded to in the statements of Problems~\ref{Problem:ComputeBound} and~\ref{Problem:ComputePolicy}.

\begin{defn}{\bf (Utilization rate function)} The function that determines the utilization rate in terms of a given stabilizable arrival rate $\lambda$ and a stabilizing policy $\theta$, is defined as:
\begin{equation*}
\mathscr{U}(\lambda,\theta) 
:= \sum_{ \bx \in \mathbb{X}} \pi^{\theta}(\bx) 	
	\theta(\bx), 
	\quad \lambda \in (0,\lambda^*), 
	\ \theta \in \Theta_S(\lambda)
\end{equation*}
\end{defn}
The utilization rate quantifies the probability that the server is working, in the stationary limit. Notably, $\mathscr{U}(\lambda,\theta)$, computed for $\bX$ with arrival late $\lambda$ and stabilized by $\theta$, coincides with the probability limit of the utilization rate, as defined for instance in~\cite{Jog2019Channels-learni} (with $\mathbb{U}=\{0,1\}$), when the averaging horizon tends to infinity. Using our notation, the aforesaid probability limit can be stated as follows:
\begin{multline*}
\myb \plim_{N \rightarrow \infty} 
	\frac{\sum_{k=0}^N \mathcal{I}_{A_k = \mathcal{W}}}
		{N+1} 
= \mathscr{U}(\lambda,\theta),
\quad \lambda \in (0,\lambda^*) , \ \theta \in \Theta_S(\lambda) 
\end{multline*} 
where $\mathcal{I}_{A_k = \mathcal{W}}$ is $1$ 
when $A_k = \mathcal{W}$ and $0$ otherwise. Hence, the utilization rate can also be viewed as the proportion of time in which the server is working, in the infinite time-horizon limit.

\begin{defn} The infimum utilization rate for a given stabilizable arrival rate $\lambda$ is defined as
\begin{equation*}
\mathscr{U}^*(\lambda) 
: = \inf_{\theta \in \Theta_S(\lambda)} 
	\mathscr{U}(\lambda,\theta), 
	\quad \lambda \in (0,\lambda^*).
\end{equation*}
\end{defn}

\subsection{Auxiliary MDP $\overline{\mathbf{Y}}$}

We proceed with describing an underlying controlled Markov chain
whose state takes values in $\mathbb{Y}$ 
and approximates the server state of $\mathbf{X}$ 
under a subclass of policies in $\Theta_R$, subject to an 
assumption that the queue always has a task to service
whenever the server becomes available. We  
endow it with a reward function, which 
is the utilization of the server, and denote 
this auxiliary MDP by $\overline{\mathbf{Y}}$ and 
its state at time $k$ by $\overline{\bY}_k
:=(\overline{S}_k,\overline{W}_k)$ in order to 
emphasize that it takes values in~$\mathbb{Y}$. 
The action chosen at time $k$ is denoted by 
$\overline{A}_k$. We use the overline to denote 
the auxiliary MDP and any other variables 
associated with it, in order to distinguish them 
from those of the server state in $\bX$. 

Under certain conditions, which we will identify later on, we can determine important properties of  $\bX$ by analysing~$\overline{\bY}$. Notably, we will use the fact that $\mathbb{Y}$ is finite to compute $\mathscr{U}^*$ via a finite-dimensional LP, and also to simplify the proofs of our main results. 

As the queue size is no longer a component of the state of~$\overline{\bY}$,  we eliminate the dependence of the admissible
action sets on $q$, which was explicitly specified in 
(\ref{ActionConstraints}) for MDP $\bX$, while 
still ensuring that the server is non-preemptive. More 
specifically, the set of admissible actions at each 
element ${\overline{\by}:=(\overline{s},\overline{w})}$ of 
$\mathbb{Y}$ is given by 
\begin{equation}
\label{ActionConstraintsauxiliary}
\overline{\mathbb{A}}_{\overline{w}}
\Eqdef \begin{cases}
    \{\mathcal{W}\} & \text{if $\overline{w} = \cB$, \quad 
    	(non-preemptive server)} \\
    \mathbb{A} & \text{if $\overline{w} = \cA$}. \\
	\end{cases}
\end{equation} 
Consequently, for any given realization of the current state $
\overline{\by}_k = (\overline{s}_k,\overline{w}_k)$, $\overline{A}_k$ 
is required to take values in $\overline{\mathbb{A}}_{\overline{w}_k}$.

We define the transition probabilities that specify $\overline{\bY}$, 
as follows:
\beqa
P_{\overline{\bY}_{k+1}|\overline{\bY}_k,\overline{A}_k}( \overline{\by}' 
 	 \ | \ \overline{\by}, \overline{a}) 
& \myb \Eqdef & \myb 
P_{\overline{S}_{k+1} | \overline{S}_k, \overline{A}_k}(\overline{s}' 
	\ | \ \overline{s}, \overline{a}) 
	\label{Eqdef:MDPRED} \\ 
&& \fttimes P_{\overline{W}_{k+1} | 
		\overline{\bY}_k, \overline{A}_k}(\overline{w}' \ |\ \overline{\by}, 
			\overline{a}), 
	\nonumber
\eeqa
where $\overline{\by}$ and $\overline{\by}'$ are in $\mathbb{Y}$, and 
$\overline{a}$ is in $\overline{\mathbb{A}}_{\overline{w}}$. The right-hand terms of (\ref{Eqdef:MDPRED}) 
are defined, in connection with $\bX$, as follows:
\begin{equation}
\label{SDef}
P_{\overline{S}_{k+1} | \overline{S}_k, \overline{A}_k}(\overline{s}' 
	\ | \ \overline{s}, \overline{a}) 
\Eqdef P_{S_{k+1} | S_k, A_k}(\overline{s}' \ | \ \overline{s}, \overline{a})  \\
\vspace{-0.1in}
\end{equation}
\begin{subequations}
\label{eq:REDWTran}
\begin{align}
P_{\overline{W}_{k+1} | \overline{\bY}_k, \overline{A}_k}(\overline{w}' 
	\ | \ \overline{\by}, \mathcal{W}) 
\Eqdef & \begin{cases} 
	\mu(\overline{s}) & \text{if $\overline{w}'=\mathcal{A}$} \\ 
	1-\mu(\overline{s}) & \text{if $\overline{w}'=\mathcal{B}$} 
	\end{cases} \\
P_{\overline{W}_{k+1} | \overline{\bY}_k, \overline{A}_k}(\overline{w}' 
	\ | \ \overline{\by}, \mathcal{R})
\Eqdef & \begin{cases} 
	1 & \text{if $\overline{w}'=\mathcal{A}$} \\ 
	0 & \text{if $\overline{w}'=\mathcal{B}$} 
	\end{cases}
\end{align}
\end{subequations}


From~(\ref{eq:REDWTran}) and (\ref{WTransitionDef}), 
we can deduce 
the following equality valid for all $q \geq 1$, 
\beqa
&& \myhb P_{\overline{W}_{k+1} | \overline{\bY}_k, \overline{A}_k} 
	\big( \overline{w}' 
	\ | \ \overline{\by}, \mathcal{W} \big) \lb
\myeq \sum_{q' = 0}^\infty P_{W_{k+1},Q_{k+1} | \bX_k, A_k} 
	\big( (\overline{w}',q') \ | \ (\ \overline{\by},q), \mathcal{W} 
		\big),
	\label{equivalenceXandY-W} 
\eeqa
which holds for any $\overline{w}'$ in $\mathbb{W}$ and $\overline{\by}$ 
in $\mathbb{Y}$. Notice that the right-hand side (RHS) of 
(\ref{equivalenceXandY-W}) does not change when we vary $q$ across the 
positive integers. From this, in conjunction with 
(\ref{eq:Xt+1}), (\ref{Eqdef:MDPRED})
and (\ref{SDef}), we also have, for all $q \geq 1$, 
\beqa
&& \myhb P_{\overline{\bY}_{k+1} | \overline{\bY}_k, \overline{A}_k} 
	\big(\overline{\by}' \ | \ \overline{\by}, \mathcal{W}) \lb 
\myeq \sum_{q'=0}^{\infty} P_{\bX_{k+1} | \bX_k, A_k} 
	\big( (\overline{\by}',q') \ | \ ( \overline{\by},q), \mathcal{W} 
		\big).
	\label{equivalenceXandY} 
\eeqa
The equality in (\ref{equivalenceXandY}) indicates that 
$P_{\overline{\bY}_{k+1} | \overline{\bY}_k, \overline{A}_k}$ 
also characterizes the transition probabilities of 
the server state $\bY_k = (S_k, W_k)$ in $\bX$ 
when the current queue size is 
positive. This is consistent with our earlier viewpoint that 
$\overline{\bY}$ {\it behaves} as the server state in $\bX$ when the queue is nonempty. 

\subsection{Stationary Policies and Stationary PMFs of $\overline{\mathbf{Y}}$}

Analogously to the MDP $\bX$, we only consider 
stationary randomized policies for $\overline{\bY}$, 
which are defined below.  
\myskip

\begin{defn}[$\Phi_R$] 
\label{def:StationaryPoliciesForbY}
We restrict our attention to stationary randomized policies acting on 
$\overline{\bY}$, which are specified by 
a mapping ${\phi: \mathbb{Y} \to [0, 1]}$, as follows:
\beqan
P_{\overline{A}_k | \overline{\bY}_k,\ldots,\overline{\bY}_0 } 
	(\mathcal{W} \ | \ \overline{\by}_k,\ldots,\overline{\by}_0)
\myeq \phi(\overline{\by}_k) \\
P_{\overline{A}_k | \overline{\bY}_k,\ldots,\overline{\bY}_0 } 
	(\mathcal{R} \ | \ \overline{\by}_k,\ldots,\overline{\by}_0)
\myeq 1- \phi(\overline{\by}_k)
\eeqan
for every $k$ in $\N$ and $\overline{\by}_k,\ldots,
\overline{\by}_0$ in $\mathbb{Y}$. We define $\Phi_R$ as 
the set of all stationary randomized policies for $\overline{\bY}$ 
that satisfy (\ref{ActionConstraintsauxiliary}).
\end{defn}

Henceforth, we use $\obY^{\phi}$ to denote the 
the auxiliary MDP $\obY$ under a policy
$\phi$ in $\Phi_R$.

Following the approach in~\cite{Lin2019Scheduling-task}, we restrict our analysis to the subset $\Phi^+_R$ of $\Phi_R$ defined as follows: 
\begin{equation*}
\Phi^+_R 
:= \{ \phi \in \Phi_R \ | \ \phi(1,\cA) > 0 \}
\end{equation*} 
The main benefit of focusing on policies in $\Phi^+_R$, 
as stated in \cite[Corollary~1]{Lin2019Scheduling-task}, 
is that $\obY^{\phi}$
has a unique stationary PMF (denoted with $\obpi^{\phi}$) for every 
$\phi$ in $\Phi^+_R$. Specifically, 
that strategies in $\Phi^+_R$ rule out the 
case in which $(1,\cA)$ is an absorbing state, guarantees 
the uniqueness of the stationary PMF. Furthermore, from 
\cite[Lemmas~2 and~4]{Lin2019Scheduling-task} we conclude 
that restricting to $\Phi^+_R$ any search that seeks to 
determine bounds or fundamental limits with respect to 
stabilizing policies incurs no loss of generality.


\subsection{Service Rate of $\overline{\mathbf{Y}}^{\phi}$ and Pr\'{e}cis of  Stabilizability Results}
We start by defining the service rate of 
$\overline{\mathbf{Y}}^{\phi}$ for a given policy 
$\phi$ in $\Phi_R^+$:
\beqan
\onu^{\phi} 
\Eqdef
	\sum_{\overline{\by} \in \mathbb{Y}} \mu(\overline{s})   \phi(\overline{\by})
	\overline{\pi}^{\phi}(\overline{\by}).
	\label{eq:onu} 
\eeqan
The maximal service rate $\onu^*$ for $\obY$ is defined below.
\beqan
\onu^* 
\Eqdef \sup_{\phi \in \Phi_R^+} \onu^{\phi}
\eeqan

As stated in~\cite[Theorems~3.1 and~3.2]{Lin2019Scheduling-task}, any arrival rate $\lambda$ lower than  $\onu^*$ is stabilizable. Furthermore, these theorems also assert that any arrival rate above $\onu^*$ is not stabilizable and that $\onu^*$ can also be computed by determining which threshold policy $\phi_{\tau}$, among the finitely many defined in~\cite[(6)]{Lin2019Scheduling-task}, maximizes $\onu^{\phi_{\tau}}$.

\begin{defn} We define the map $\mathscr{X}:\Phi_R^+ \rightarrow \Theta_R$ as follows:
\beqan
\mathscr{X}(\phi) \Eqdef \vartheta^{\phi}, 
	\quad \phi \in \Phi_R^+,
\eeqan 
where 
\beqa
\label{eq:structure_policies}
\vartheta^{\phi} (\bx) \Eqdef \begin{cases} \phi(\by) & \text{if $q > 0$} \\ 0 & \text{otherwise} \end{cases}, \quad \bx \in \mathbb{X}
\eeqa
\end{defn}

It follows from its definition that $\mathscr{X}$ yields a policy for $\bX$ that acts as the given $\phi$ in $\Phi_R^+$ when the queue is not empty and imposes rest otherwise. 

\paragraph*{\bf Convention} We reserve $\onu$, without a superscript, to denote a design parameter. Namely, it is a desired service rate that will be imposed as a constraint in the definition of the following policy sets.

\begin{defn}
\label{def:PolicySetPhis}
{\bf (Policy sets $\Phi^{\epsilon}_R(\onu)$ and $\Phi^+_R(\onu)$)} Given $\onu$ in $(0,\onu^*)$, we define the following policy sets:
\begin{align*}
\Phi^+_R(\onu) 
& \Eqdef \{ \phi \in \Phi_R^+ \ | \ 
	\onu^{\phi} = \onu \} \\
\Phi^{\epsilon}_R(\onu) 
& \Eqdef \{ \phi \in \Phi_R^{\epsilon} 
	\ | \ \onu^{\phi} = \onu \}, 
	\quad \epsilon \in [0,1] 
\end{align*} 
where $\Phi_R^{\epsilon}$ is defined as
\begin{equation*}
\Phi^{\epsilon}_R 
:= \{ \phi \in \Phi_R \ | \ \phi(1,\cA) \geq \epsilon \}, 
	\quad \epsilon \in [0,1]
\end{equation*}
\end{defn}

We  also define the following class of policies generated from $\Phi^+_R(\onu)$ and $\Phi^{\epsilon}_R(\onu)$ through $\mathscr{X}$:
\begin{multline*}
\myb \mathscr{X} \Phi_R^{\epsilon}(\onu) 
\Eqdef \{\mathscr{X}(\phi) :
	\phi \in \Phi^{\epsilon}_R(\onu) \}, \
	\onu \in (0,\onu^*), \ \ \epsilon \in (0,1]
\end{multline*}
\begin{equation*}
\mathscr{X} \Phi_R^+(\onu) 
\Eqdef \{\mathscr{X}(\phi) : 
	\phi \in \Phi^+_R(\onu) \}, 
	\ \  \onu \in (0,\onu^*)
\end{equation*}

The following proposition establishes important stabilization properties for the policies in $\mathscr{X} \Phi_R^+ (\onu)$.
\begin{prop}
\label{prop:SufficientCondStab} Let the arrival rate $\lambda$ in $(0,\onu^*)$ be given. If $\onu$ is in $(\lambda,\onu^*)$, then $\bX^{\theta}$ is stable, irreducible and aperiodic for any $\theta$ in $\mathscr{X} \Phi_R^+(\onu)$.
\end{prop}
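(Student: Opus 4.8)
The plan is to verify the three properties—irreducibility (within a suitable communicating class that we show absorbs all states), aperiodicity, and positive recurrence—by exploiting the structure of $\theta=\mathscr{X}(\phi)=\vartheta^{\phi}$, namely that it acts as $\phi\in\Phi^+_R(\onu)$ when $q>0$ and imposes rest when $q=0$, together with the hypothesis $\lambda<\onu=\onu^{\phi}<\onu^*$. First I would establish \textbf{irreducibility}. Since $\phi\in\Phi_R^+$ we have $\phi(1,\cA)>0$; combined with the fact that $\rho_{s,s-1}\in(0,1)$ for $1<s\le n_s$ and $\lambda\in(0,1)$, starting from any state $(s,w,q)$ one can reach $(1,\cA,0)$ with positive probability (let the queue drain while the server rests whenever available, driving the activity state down to $1$ and eventually the queue to $0$); and conversely from $(1,\cA,0)$ one can reach any $(s,w,q)$ with positive probability using arrivals (rate $\lambda>0$), work actions (which occur with positive probability at $(1,\cA)$ since $\phi(1,\cA)>0$, and are forced when busy), and the transitions $\rho_{s,s+1}\in(0,1)$, $\mu(s)\in(0,1)$. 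Hence $\bX^{\theta}$ is irreducible on all of $\mathbb{X}$. \textbf{Aperiodicity} then follows because the state $(1,\cA,0)$ has a self-loop: under $\theta$, at $(1,\cA,0)$ the action is $\mathcal{R}$ and with probability $1-\lambda>0$ the queue stays empty and the server stays available at activity state $1$, so $P_{\bX_{k+1}|\bX_k,A_k}((1,\cA,0)\mid(1,\cA,0),\mathcal{R})=1-\lambda>0$.

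It remains to prove \textbf{positive recurrence} (equivalently, stability in the sense of Definition~\ref{def:Stability}, which for an irreducible chain reduces to positive recurrence). This is the main obstacle, and I would handle it with a Foster–Lyapunov drift argument on the queue length. The natural Lyapunov function is $V(\bx)=q$. Away from the boundary (for $q$ large), consider the server's busy/idle cycles: by the equivalence~(\ref{equivalenceXandY}), while the queue is nonempty the server state $(S_k,W_k)$ evolves exactly as $\obY^{\phi}$, whose stationary PMF is $\obpi^{\phi}$ and whose long-run service (task-completion) rate is $\onu^{\phi}=\onu$. Since arrivals occur at rate $\lambda<\onu$, the expected drift of $Q_k$ over a regeneration block (say, between successive visits of the server state to $(1,\cA)$, which recur since $\obY^{\phi}$ restricted to its recurrent class is positive recurrent by \cite[Corollary~1]{Lin2019Scheduling-task}) is $(\lambda-\onu)\cdot(\text{expected block length})<0$, uniformly once $q$ exceeds the block length so that the queue never empties during the block. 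One then invokes a state-dependent / multi-step Foster criterion (e.g., the Foster–Lyapunov theorem with bounded-increment blocks, or Meyn–Tweedie Theorem on $f$-regularity) to conclude positive recurrence; the bounded-jump property holds because $|Q_{k+1}-Q_k|\le 1$ and the block lengths have geometric-type tails (each step the server completes a task with probability $\mu(s)\ge\min_s\mu(s)>0$).

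The delicate point in the drift computation is bookkeeping the comparison between $\bX^{\theta}$ near the boundary and the stationary behavior of $\obY^{\phi}$: one must argue that, conditioned on $q$ being large enough that the queue stays positive throughout a block, the increments of $Q_k$ are distributionally identical to those of the free process driven by $\obY^{\phi}$, and then average the per-block drift against a renewal-reward identity to get the $-(\onu-\lambda)$ rate. I expect this to be the part that requires the most care, though it is essentially the standard rate-conservation/Loynes-type argument; all the needed ingredients—positivity of $\mu$, $\rho_{s,s\pm1}$, $\lambda$, the uniqueness and positive recurrence of $\obpi^{\phi}$, and $\onu^{\phi}=\onu>\lambda$—are already in place from the preceding development and from~\cite{Lin2019Scheduling-task}.
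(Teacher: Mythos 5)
Your arguments for irreducibility and aperiodicity follow the paper's: every state communicates with $(1,\cA,0)$, and that state has a self-loop (here of probability $1-\lambda$ since $\rho_{1,0}=0$), which kills periodicity. However, the sample path you describe for reaching $(1,\cA,0)$ from an arbitrary state is not right as written: you say to ``let the queue drain while the server rests whenever available,'' but resting never removes tasks---if the server is available and always rests, $Q_k$ can only stay or grow. A correct construction alternates: when the server is available with $q>0$ it rests with positive probability until $S_k$ reaches $1$, then works with probability $\phi(1,\cA)>0$; while busy it completes the task each step with probability $\mu(S_k)>0$; meanwhile no new tasks arrive (probability $1-\lambda$ per step). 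Iterating this drains the queue to $0$, after which further rests reach $(1,\cA,0)$. The conclusion is correct but the justification needs this repair.

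For stability you take a genuinely different route from the paper. The paper simply invokes the contradiction argument behind~\cite[Theorem~3.2]{Lin2019Scheduling-task} (via its Lemma~8), whereas you propose a Foster--Lyapunov drift with $V(\bx)=q$ and a regeneration argument over visits of the server state to $(1,\cA)$, exploiting $\onu^{\phi}=\onu>\lambda$. This is a legitimate alternative, and in fact a near-neighbor of what the paper itself uses later to prove Lemma~\ref{lem:QGoesToZero}, where a one-step Lyapunov function $\mathscr{V}(\bx)=q+\mathscr{H}(\by)$ built on the potential-like map of Lemma~\ref{lem:potential} supplies the same state-averaging that your multi-step block argument would achieve by renewal-reward. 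Your route is more self-contained (it does not lean on auxiliary results from~\cite{Lin2019Scheduling-task}) but requires establishing a state-dependent or multi-step Foster criterion and handling the boundary comparison between $\bX^{\theta}$ and $\obY^{\phi}$, which you flag but do not carry through; either approach succeeds once those details are filled in, and the paper's is shorter only because it can cite machinery already built in the earlier reference.
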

\begin{proof} Stability of $\bX^{\theta}$ can be established using the same method adopted in~\cite{Lin2019Scheduling-task} to prove~\cite[Theorem~3.2]{Lin2019Scheduling-task}, which uses~\cite[Lemma~8]{Lin2019Scheduling-task} to establish a contradiction when $\bX^{\theta}$ is assumed not stable. That $\bX^{\theta}$ is irreducible follows from the fact that, under any policy $\theta$ in $\mathscr{X} \Phi_R^+(\onu)$, all states of $\bX^{\theta}$ communicate with $(1,\mathcal{A},0)$. That the probability of transitioning away from $(1,\mathcal{A},0)$ is less than one implies that the chain is aperiodic.
\end{proof}



An immediate consequence of Proposition~\ref{prop:SufficientCondStab} is that 
$\{ \mathscr{X}(\phi) : \phi \in \Phi_R^+(\onu) \}$ 
is a nonempty subset of $\Theta_S(\lambda)$ when 
$\lambda < \onu \leq \onu^*$. This implies that, 
as far as stabilizability is concerned, there is 
no loss of generality in restricting our analysis 
to policies with the structure in~(\ref{eq:structure_policies}). More interestingly, 
from Theorem~\ref{thm:PolicyDesign}, which will be 
stated and proved later on in Section
\ref{sec:MainResults}, we can conclude that 
restricting our methods for solving Problem~2 to 
policies of the form~(\ref{eq:structure_policies}) 
also incurs no loss of generality.


The following projection map will be important going forward.

\begin{defn}[Policy projection map $\mathscr{Y}$] 
Given $\lambda$ in $(0,\onu^*)$, we define a mapping
$\mathscr{Y}: \Theta_S(\lambda) \rightarrow \Phi_R^+$, where
\beqan
\mathscr{Y}(\theta)\Eqdef \varphi^{\theta}, \ \theta 
	\in \Theta_S(\lambda) 
\eeqan
with 
\beqan
\label{QMapDef}
\varphi^{\theta}(\overline{\by}) 
\Eqdef \frac{\sum_{q \in \mathbb{Q}^{\overline{\by}}}
	\theta(\overline{\by},q) \pi^{\theta}(\overline{\by},q)}
	{\sum_{q \in \mathbb{Q}^{\overline{\by}}} 
	\pi^{\theta}(\overline{\by},q)}, \quad
	\overline{\by} \in \mathbb{Y},
\eeqan 
where
$\mathbb{Q}^{\overline{\by}} \Eqdef \{q \in \N \ | \ 
	(\overline{\by},q)\in \mathbb{X} \}$, $\oby \in \mathbb{Y}$.
\end{defn}

Notice that although the map $\mathscr{Y}$ depends on $\lambda$, for simplicity of notation, we chose not to denote this
dependence explicitly. It is worthwhile to note that the map $\mathscr{Y}$, for a given $\lambda$ less than $\onu^*$, allows us to establish the following remark comparing the service rate notions for $\bX$ and $\obY$.

\begin{remark} Given $\lambda$ in $(0,\onu^*)$ and $\onu$ in $(\lambda,\onu^*)$, our analysis in~\cite{Lin2019Scheduling-task} implies that the following hold:
\begin{subequations}
\label{eqs:InequalitiesServiceRate}
\begin{align}
\label{eqs:InequalitiesServiceRate-a}
\lambda & \ \overset{(i)}{=} \ \nu^{\theta} \ \overset{(ii)}{=} \ \onu^{\mathscr{Y}(\theta)} \leq \onu^*, & \  \theta \in \Theta_S(\lambda) \\
\lambda & \ \overset{(iii)}{=} \ \nu^{\mathscr{X}(\phi)} \ < \ \onu \leq \onu^*, & \  \phi \in \Phi_R^+(\onu)
\end{align}
\end{subequations}
where $\nu^\theta := \sum_{\bx \in \mathbb{X}}
\pi^\theta(\bx) \theta(\bx) \mu(s)$ 
is the service rate of $\mathbf{X}^\theta$.
Notably, $(i)$ and $(ii)$ follow from~\cite[Lemma~4]{Lin2019Scheduling-task}. Using a similar argument, $(iii)$ follows from the fact that $\mathscr{X}(\phi)$ is stabilizing, as guaranteed by Proposition~\ref{prop:SufficientCondStab} when $\onu$ is in~$(\lambda,\onu^*)$.
\end{remark}




\subsection{Utilization Rate of $\overline{\bY}$ and 
    Computation via LP}

We now proceed to defining the utilization rate of $\obY^{\phi}$ for a given $\phi$ in $\Phi_R$. Subsequently, we will define and propose a linear programming approach to computing the infimum of the utilization rates attainable by any policy for $\obY$ subject to a given service rate.

\begin{defn} 
\label{defn:UtilizationRate}
Given a policy $\phi$ in $\Phi_R^+$, the following function 
determines the utilization rate of $\overline{\bY}^{\phi}$:
\begin{equation*}
\bar{\mathscr{U}}(\phi) := \sum_{\overline{\by} \in \overline{\mathbb{Y}}} \overline{\pi}^{\phi} \phi(\overline{\by})
\end{equation*}
\end{defn}


\begin{defn} {\bf (Infimum utilization rate 
$\bar{\mathscr{U}}^+_R$  and~$\bar{\mathscr{U}}^{\epsilon}_R$)} 
The infimum utilization rate of $\overline{\bY}$ for a given 
service rate $\overline{\nu}$ is defined as
\begin{equation*}
\label{eq:DefInfimumRateInY}
\bar{\mathscr{U}}^+_R (\overline{\nu}) 
:= \inf_{\phi\in \Phi_R^+(\onu)} \quad 
	\sum_{\overline{\mathbf{y}} 
		\in \overline{\mathbb{Y}}} 
		\overline{\pi}^\phi(\overline{\mathbf{y}})
		\phi(\overline{\mathbf{y}}).
\end{equation*} 
We also define the following approximate infimum utilization 
rates:
\begin{equation*}
\label{eq:DefInfimumRateInYForEpsilon}
\bar{\mathscr{U}}^{\epsilon}_R (\overline{\nu}) 
:= \inf_{\phi\in \Phi_R^{\epsilon}(\onu)} \quad 
	\sum_{\overline{\mathbf{y}}
		\in \overline{\mathbb{Y}}} 
		\overline{\pi}^\phi(\overline{\mathbf{y}})
		\phi(\overline{\mathbf{y}}), 
		\quad \epsilon \in (0, 1]
\end{equation*}
\end{defn}

Notice that the infimum that determines $\bar{\mathscr{U}}^+_R$ and $\bar{\mathscr{U}}^{\epsilon}_R$ is well-defined because there is a unique stationary PMF $\overline{\pi}^\phi$ for each policy $\phi$ in $\Phi_R^+$. 

\begin{remark} 
\label{rem:ContinuityInEpsilonUbar} Notice that since $\Phi_R^+(\onu) = \bigcup_{\epsilon \in (0,1]} \Phi_R^{\epsilon}(\onu) $, we conclude that the following holds:
\begin{equation}
\label{eq:LimEpsilonZeroUBarR}
\bar{\mathscr{U}}^+_R (\overline{\nu}) = \lim_{\epsilon \rightarrow 0^+} \bar{\mathscr{U}}^{\epsilon}_R (\overline{\nu}) 
\end{equation}
\end{remark} 

We now proceed to outlining efficient ways to compute $\bar{\mathscr{U}}^+_R$, which is relevant because, as Corollary~\ref{cor:Equality} indicates in \S\ref{sec:MainResults}, we can use it to compute $\mathscr{U}^* (\lambda)$ when $\lambda < \onu^*$.  Hence, below we follow the approach in~\cite[Chapter~4]{Altman1999Constrained-Mar} to construct approximate versions of $\bar{\mathscr{U}}^+_R$ that are computable using a finite-dimensional LP. Subsequently, we will obtain the policies in~$\Phi_R^+$ corresponding to solutions of the LP, as is done in~\cite[Chapter~4]{Altman1999Constrained-Mar}. The policies obtained in this way will form a set for each $\epsilon$ in $(0,1)$ that will be useful later on.

\begin{defn}{\bf ($\epsilon$-LP utilization rate $\bar{\mathscr{U}}^{\epsilon}_{\mathbb{L}}(\overline{\nu})$)} \\ 
Let $\epsilon$ be a given constant in $[0,1]$ and $\overline{\nu}$ be a pre-selected service rate in $[0,\onu^*]$. The $\epsilon$-LP utilization rate $\bar{\mathscr{U}}^{\epsilon}_{\mathbb{L}}(\overline{\nu})$ is defined as:
\begin{subequations}
\label{LP-Definition}
\begin{equation}
\label{eq:minimization}
\bar{\mathscr{U}}^{\epsilon}_{\mathbb{L}}(\overline{\nu}) : = \min_{ \begin{matrix} \ell \in \mathbb{L} \\ \text{s.t. (\ref{LP-constraint-b})-(\ref{LP-constraint-e})} \end{matrix}} \quad \sum_{\overline{\mathbf{y}}\in\overline{\mathbb{Y}}} \ell_{\overline{\mathbf{y}},\mathcal{W}}
\end{equation} 

\noindent where the minimization is carried out over the following set:
\begin{equation*}
\mathbb{L} := \Pi_{ \overline{a} \in \overline{\mathbb{A}}_{\overline{\by}}, \overline{\by} \in \mathbb{Y} } \{\ell_{\overline{\by},\overline{a}} \geq 0 \}
\end{equation*} Every solution is subject to the following constraints and is compactly represented as $\ell:=\Pi_{ \overline{a} \in \overline{\mathbb{A}}_{\overline{\by}}, \overline{\by} \in \mathbb{Y} } \{ \ell_{\overline{\by},\overline{a}} \}$:

\begin{align}
\label{LP-constraint-b}
(1-\epsilon) \ell_{(1,\mathcal{A}),\mathcal{W}}& 
\geq \epsilon \ell_{(1,\mathcal{A}),\mathcal{R}} \\ \label{LP-constraint-c} 
\sum_{ \{ \overline{\mathbf{y}} \in
	\overline{\mathbb{Y}} | \mathcal{W} \in 
	\overline{\mathbb{A}}_{\overline{\mathbf{y}} } \} }  
	\mu(\overline{s}) 
	\ell_{\overline{\mathbf{y}},\mathcal{W}} 
	&  = \overline{\nu} \\ 
\label{LP-constraint-d}
\sum_{\overline{\mathbf{y}} \in \overline{\mathbb{Y}}}
	\sum_{\overline{a} \in 
	\overline{\mathbb{A}}_{\overline{\mathbf{y}}}} 
	\ell_{\overline{\mathbf{y}},\overline{a}} &=1
\end{align}
and the equality below guarantees that every solution will be consistent with $\overline{\bY}$:
\begin{eqnarray}
&& \sum_{\overline{\mathbf{y}}' \in \overline{\mathbb{Y}}}
	\sum_{\overline{a}' \in 
	\overline{\mathbb{A}}_{\overline{\mathbf{y}}'}} 
	\ell_{\overline{\mathbf{y}}',\overline{a}'} \ 
	P_{\overline{\mathbf{Y}}_{t+1} | 
	    \overline{\mathbf{Y}}_{t}, 
	    \overline{A}_t} \big(\overline{\mathbf{y}} 
	    \ \big| \ \overline{\mathbf{y}}', \overline{a}' \big) 
	\nonumber \\  
&& = \sum_{\overline{a} \in
	\overline{\mathbb{A}}_{\overline{\mathbf{y}}}} 
	\ell_{\overline{\mathbf{y}},\overline{a}}, 
	\quad \overline{\mathbf{y}} \in 
		\overline{\mathbb{Y}}
\label{LP-constraint-e}
\end{eqnarray}
\end{subequations}
\end{defn}

\vspace{.1 in}

\begin{defn} {\bf (Solution set $\mathbb{L}^{\epsilon}(\overline{\nu})$)}  For each $\epsilon$ in $[0,1]$ and $\overline{\nu}$ in $(0,\lambda^*)$, we use~$\mathbb{L}^{\epsilon}(\overline{\nu})$ to represent the set of solutions of the LP specified by (\ref{LP-Definition}). We adopt the convention that $\mathbb{L}^{\epsilon}(\overline{\nu})$ is empty if and only if the LP is not feasible.
\end{defn}

\begin{figure} [ht]
\centering
\begin{tikzpicture}[scale=.072]


\draw[thin, fill=black!3, rounded corners=5] (10,8) rectangle (104,35); 
\node[anchor= north west] at (10,35) {\small $\Theta_R$};

\draw[thin, fill=black!4, rounded corners=5] (20,9) rectangle (103,33); 
\node[anchor= north west] at (20,33) {\small $\Theta_S(\lambda)$};

\draw[thin, fill=black!5, rounded corners=5] (35,10) rectangle (102,31); 
\node[anchor= north west] at (35,31) {\small $\mathscr{X} \Phi_{R}^+ (\onu)$};

\draw[thin, fill=black!6, rounded corners=5] (55,11) rectangle (101,29); 
\node[anchor= north west] at (55,29) { \small $\mathscr{X} \Phi_{R}^{\epsilon} (\onu)$};

\draw[thin, fill=black!7, rounded corners=5] (75,12) rectangle (100,27); 
\node[anchor= north west] at (75,27) {\small $\mathscr{X} \Phi_{\mathbb{L}}^{\epsilon} (\onu)$};


\draw[thin, fill=black!3, rounded corners=5] (10,-9) rectangle (104,-35); 
\node[anchor= south west] at (10,-35) {\small $\Phi_R^+$};

\draw[thin, fill=black!5, rounded corners=5] (35,-10) rectangle (102,-31); 
\node[anchor= south west] at (35,-31) {\small $ \Phi_{R}^+ (\onu)$};

\draw[thin, fill=black!6, rounded corners=5] (55,-11) rectangle (101,-29); 
\node[anchor= south west] at (55,-29) {\small $\Phi_{R}^{\epsilon} (\onu)$};

\draw[thin, fill=black!7, rounded corners=5] (75,-12) rectangle (100,-27); 
\node[anchor= south west] at (75,-27) { \small $\Phi_{\mathbb{L}}^{\epsilon} (\onu)$};

\draw[thick,->] (87,-17) -- (87,17);
\node[anchor=west] at (87,0) {\small $\mathscr{X}$};

\draw[thick,->] (65,-16) -- (65,16);
\node[anchor=west] at (65,0) {\small $\mathscr{X}$};

\draw[thick,->] (45,-15) -- (45,15);
\node[anchor=west] at (45,0) {\small $\mathscr{X}$};

\draw[thick,->] (28,14) -- (28,-14);
\node[anchor=west] at (28,0) {\small $\mathscr{Y}$};

\draw[thick,->] (16,-13) -- (16,13);
\node[anchor=west] at (16,0) {\small $\mathscr{X}$};

\node[anchor=south] at (52.5,37) {\small policy sets for $\bX$};

\node[anchor=north] at (52.5,-37) {\small policy sets for $\obY$};

\end{tikzpicture}
\caption{Diagram representing the relationship among policy sets for $\bX$ and $\obY$, for $0 < \lambda < \nu < \onu < \onu^*$ and $\epsilon \in (0,1]$. 
}
\label{Fig:VennDiag}
\end{figure}

\subsection{LP-based Policy Sets}

For each solution $\ell$ in $\mathbb{L}^{\epsilon}(\onu)$ we can obtain a corresponding policy $\varphi_{\ell}$ in $\Phi_R$ for $\obY$ as follows:
\begin{equation}
\label{eq:def_policy_from_LP}
\varphi_{\ell}(\oby) 
:=
\begin{cases}
\frac{\ell_{\oby,\mathcal{W}}}{\ell_{\oby,\mathcal{W}} + \ell_{\oby,\mathcal{R}}} & \text{if $\mathcal{R} \in \overline{\mathbb{A}}_{\oby}$ and $\ell_{\oby,\mathcal{R}} >0$} \\
1 & \text{otherwise.}
\end{cases} , \  \oby \in \overline{\mathbb{Y}}
\end{equation}

\begin{remark} 
\label{rem:OnTheConstraints} Subject to the definition in~(\ref{eq:def_policy_from_LP}), the constraint~(\ref{LP-constraint-b}) is equivalent to $\varphi_{\ell}(1,\mathcal{A}) \geq \epsilon$, which holds for every solution $\ell$ in $\mathbb{L}^{\epsilon}(\overline{\nu})$. 
\vspace{.1 in}

\end{remark}

\begin{defn}{\bf (Policy set $\Phi_{\mathbb{L}}^{\epsilon}(\overline{\nu})$)}  For each $\overline{\nu}$ in $(0,\onu^*)$ and $\epsilon$ in $[0,1]$, we define the following set of policies $\Phi_{\mathbb{L}}^{\epsilon}(\overline{\nu})$: 
\begin{equation*}
\Phi_{\mathbb{L}}^{\epsilon}(\overline{\nu})
:= \{ \varphi_{\ell} : \ell \in 
	\mathbb{L}^{\epsilon}(\overline{\nu}) \}
\end{equation*} 
Here, we adopt the convention that~$\Phi_{\mathbb{L}}^{\epsilon} (\overline{\nu})$ is empty if and only if~$\mathbb{L}^{\epsilon}(\overline{\nu})$ is empty.
\vspace{.05 in}
\end{defn}

The following proposition will justify choices for $\epsilon$ we will make at a later stage to guarantee that $\Phi_{\mathbb{L}}^{\epsilon}(\onu)$ is nonempty for $\onu$ in $(\lambda,\onu^*)$.

\begin{prop}
\label{prop:PhiEpsilonNonEmpty} Suppose that $\onu^-$ lies in 
$(0, \onu^*]$ and there is $\epsilon^*$ in $(0,1]$ such that 
$\mathbb{L}^{\epsilon^*}(\onu^-)$ is nonempty. Then, 
$\mathbb{L}^{\bar{\epsilon}}(\onu)$ is nonempty for any 
$\bar{\epsilon}$ in $(0,\epsilon^*]$ and $\onu$ in $[\onu^-,
\onu^*]$. 
\end{prop}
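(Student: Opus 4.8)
The plan is to reduce the claim to two monotonicity observations about the LP~(\ref{LP-Definition}): one in the parameter $\epsilon$ and one in the service-rate parameter $\overline{\nu}$. First I would establish monotonicity in $\epsilon$. Fix $\onu^-$ and suppose $\ell^*$ is a solution in $\mathbb{L}^{\epsilon^*}(\onu^-)$. Observe that the only constraint in which $\epsilon$ appears is~(\ref{LP-constraint-b}), namely $(1-\epsilon)\ell_{(1,\mathcal{A}),\mathcal{W}} \geq \epsilon \ell_{(1,\mathcal{A}),\mathcal{R}}$; equivalently, $\ell_{(1,\mathcal{A}),\mathcal{W}} \geq \tfrac{\epsilon}{1-\epsilon}\ell_{(1,\mathcal{A}),\mathcal{R}}$, and the right-hand multiplier $\tfrac{\epsilon}{1-\epsilon}$ is nondecreasing in $\epsilon$ on $[0,1)$. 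Hence, for any $\bar\epsilon \leq \epsilon^*$, the same $\ell^*$ still satisfies~(\ref{LP-constraint-b}) with $\epsilon = \bar\epsilon$, while~(\ref{LP-constraint-c})--(\ref{LP-constraint-e}) are unchanged (they do not involve $\epsilon$). Therefore $\ell^* \in \mathbb{L}^{\bar\epsilon}(\onu^-)$, so $\mathbb{L}^{\bar\epsilon}(\onu^-)$ is nonempty for every $\bar\epsilon$ in $(0,\epsilon^*]$.

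Next I would handle monotonicity in $\overline{\nu}$. Having reduced to a fixed $\bar\epsilon$ in $(0,\epsilon^*]$, it suffices to show that nonemptiness of $\mathbb{L}^{\bar\epsilon}(\onu^-)$ implies nonemptiness of $\mathbb{L}^{\bar\epsilon}(\onu)$ for all $\onu$ in $[\onu^-,\onu^*]$. The natural route is to exhibit a feasible $\ell$ whose induced service rate, $\sum_{\{\overline{\mathbf{y}} : \mathcal{W}\in\overline{\mathbb{A}}_{\overline{\mathbf{y}}}\}}\mu(\overline{s})\ell_{\overline{\mathbf{y}},\mathcal{W}}$, equals the prescribed $\onu$. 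Here I would use the correspondence between LP solutions and occupation measures of stationary policies for $\overline{\bY}$ (the Altman-style construction cited from~\cite[Chapter~4]{Altman1999Constrained-Mar}): the $\ell$'s in $\mathbb{L}^{\bar\epsilon}(\onu^-)$ are precisely (normalized) state-action occupation measures of policies $\phi$ in $\Phi_R^{\bar\epsilon}$ with $\onu^{\phi} = \onu^-$, via $\ell_{\overline{\by},\mathcal{W}} = \overline{\pi}^{\phi}(\overline{\by})\phi(\overline{\by})$ and $\ell_{\overline{\by},\mathcal{R}} = \overline{\pi}^{\phi}(\overline{\by})(1-\phi(\overline{\by}))$. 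I would then interpolate between such a policy $\phi^-$ realizing service rate $\onu^-$ and a policy $\phi^{\max}$ realizing a service rate at least $\onu$ (e.g. a service-rate-maximizing threshold policy, which by the Pr\'ecis of stabilizability results attains $\onu^*$ and lies in $\Phi_R^+$, hence in $\Phi_R^{\bar\epsilon}$ for sufficiently small $\bar\epsilon$ — one must check $\phi^{\max}(1,\mathcal{A})\geq\bar\epsilon$, which can be arranged since $\onu^-$'s witness already forces $\bar\epsilon$ to be small enough, or by shrinking $\bar\epsilon$ further while invoking the $\epsilon$-monotonicity just proved). Continuity of $\onu^{\phi}$ in $\phi$ (this is exactly the kind of continuity property promised for \S\ref{sec:ContinuityProperties}, and it holds because $\overline{\pi}^{\phi}$ depends continuously on $\phi$ on $\Phi_R^+$) together with the intermediate value theorem yields a policy $\phi^{\onu}$ in $\Phi_R^{\bar\epsilon}$ with $\onu^{\phi^{\onu}} = \onu$; its occupation measure is then a point of $\mathbb{L}^{\bar\epsilon}(\onu)$.

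I expect the main obstacle to be the $\overline{\nu}$-monotonicity step, and specifically two technical points within it: (a) ensuring the interpolating policies stay in $\Phi_R^{\bar\epsilon}$ — i.e. that the lower bound $\phi(1,\mathcal{A})\geq\bar\epsilon$ is preserved along the interpolation — which is automatic for a convex combination $\phi_t = (1-t)\phi^- + t\phi^{\max}$ only if \emph{both} endpoints satisfy it, so the cleanest fix is to take the line segment from $\phi^-$ to a high-service-rate policy that is itself in $\Phi_R^{\bar\epsilon}$; and (b) the fact that $\onu^{\phi}$ need not be affine (or even monotone) along such a segment, so I rely on continuity plus the intermediate value theorem rather than a direct convexity argument, and I need the value $\onu$ to lie between $\onu^{\phi^-}=\onu^-$ and $\onu^{\phi^{\max}}$, which is guaranteed since $\onu \leq \onu^* = \sup_{\phi\in\Phi_R^+}\onu^{\phi}$ and $\onu^{\phi^{\max}}$ can be taken arbitrarily close to (indeed equal to) $\onu^*$. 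An alternative, possibly cleaner, route that avoids continuity altogether: directly perturb a given feasible $\ell^-\in\mathbb{L}^{\bar\epsilon}(\onu^-)$ by shifting mass from some $\ell_{\overline{\by},\mathcal{R}}$ to $\ell_{\overline{\by},\mathcal{W}}$ at a high-$\mu$ state while re-solving the flow-balance constraint~(\ref{LP-constraint-e}) — but verifying that~(\ref{LP-constraint-e}) and nonnegativity can be simultaneously restored is itself a small linear-algebra argument, so I would present the occupation-measure/continuity version as the primary proof.
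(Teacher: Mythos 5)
Your proposal is correct, but it takes a genuinely longer route than the paper for the $\overline{\nu}$-direction, and at the end you actually gesture at the paper's idea without quite landing on it. The paper's proof of $\onu$-monotonicity is a two-line convexity argument at the LP level: invoke \cite[Lemma~7]{Lin2019Scheduling-task} to get a feasible $\ell^{\onu^*}$ at $\onu^*$ with $\epsilon=1$ (hence feasible for any $\epsilon^*\leq 1$), pick any $\ell^{\onu^-}\in\mathbb{L}^{\epsilon^*}(\onu^-)$, and observe that for $\onu\in[\onu^-,\onu^*]$ the convex combination
$\ell^{\onu}:=\big((\onu-\onu^-)\ell^{\onu^*}+(\onu^*-\onu)\ell^{\onu^-}\big)/(\onu^*-\onu^-)$
satisfies all of~(\ref{LP-constraint-b})--(\ref{LP-constraint-e}) automatically, because every constraint is affine in $\ell$ (including the inequality~(\ref{LP-constraint-b})) and the service-rate constraint~(\ref{LP-constraint-c}) is linear and therefore interpolates. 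No continuity of $\overline{\pi}^{\phi}$ in $\phi$, no intermediate value theorem, and no verification that an interpolated policy remains in $\Phi_R^{\bar\epsilon}$ are needed. Your primary argument instead lifts to policy space, interpolates $\phi_t=(1-t)\phi^-+t\phi^{\max}$, and applies the IVT after asserting continuity of $\phi\mapsto\onu^{\phi}$ on $\Phi_R^{\bar\epsilon}$ (true, since the stationary PMF of a finite irreducible chain depends continuously on its transition matrix, but this is extra machinery the paper never needs here). Your closing ``alternative route'' is the right instinct --- work directly at the $\ell$-level --- but you frame it as a mass-shifting perturbation that would require re-solving~(\ref{LP-constraint-e}); the clean observation you are one step from is that a \emph{convex combination} of two feasible $\ell$'s needs no repair at all, since convexity of the feasible set does the bookkeeping for you. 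Your $\epsilon$-monotonicity half is identical in substance to the paper's closing remark and is correct.
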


\begin{proof} 
We start by invoking~\cite[Lemma~7]{Lin2019Scheduling-task} 
to conclude that $\Phi_{\mathbb{L}}^1(\onu^*)$ is nonempty 
and, consequently, $\mathbb{L}^{\epsilon^*}(\onu^*)$ 
is also nonempty. Suppose that $\ell^{\onu^-}$ and $\ell^{\onu^*}$ 
are in $\mathbb{L}^{\epsilon^*}(\onu^-)$ and 
$\mathbb{L}^{\epsilon^*}(\onu^*)$, respectively. Then, from 
(\ref{LP-Definition}), for any $\onu$ in 
$[\onu^-, \onu^*]$, $\ell^{\onu}
\Eqdef \big( (\onu-\onu^-) \ell^{\onu^*}
+ (\onu^*-\onu) \ell^{\onu^-} \big) / (\onu^*-\onu^-)$ 
satisfies 
(\ref{LP-constraint-b})-(\ref{LP-constraint-e}), which 
implies that $\mathbb{L}^{\epsilon^*}(\onu)$ is nonempty.
That $\mathbb{L}^{\epsilon^*}(\onu)$ is nonempty implies that $\mathbb{L}^{\bar{\epsilon}}(\onu)$ is also nonempty for any $\bar{\epsilon}$ in $(0,\epsilon^*]$, which concludes the proof.
\end{proof}

Before we proceed with stating a proposition that has important implications for design, we define the following notion of dominance also used in~\cite{Altman1999Constrained-Mar}.

\begin{defn} {\bf (Policy set dominance)} Let $\onu$ in $(0,\onu^*)$ and any two subsets $\tilde{\Phi}_1$ and $\tilde{\Phi}_2$ of $\Phi_R^+(\onu)$ be given. We say that $\tilde{\Phi}_1$ dominates $\tilde{\Phi}_2$ if for each policy $\phi_2$ in $\tilde{\Phi}_2$ there is $\phi_1$ in $\tilde{\Phi}_1$ for which $\bar{\mathscr{U}}(\phi_1) \leq \bar{\mathscr{U}}(\phi_2)$.
\end{defn}

\vspace{0.1 in} 
\begin{prop} 
\label{prop:CanUseLP} Given $\overline{\nu}$ in $(0,\onu^*)$ and $\epsilon$ in $(0,1]$, $\Phi^{\epsilon}_{\mathbb{L}} (\overline{\nu})$ dominates $\Phi_R^{\epsilon}(\onu)$ and the equality below holds:
\begin{subequations}
\begin{equation} \label{eq:UepsilonWorks}
 \bar{\mathscr{U}}^{\epsilon}_R(\overline{\nu}) = \bar{\mathscr{U}}^{\epsilon}_{\mathbb{L}}(\overline{\nu})
\end{equation}
\begin{equation} \label{eq:UzeroWorks}
\bar{\mathscr{U}}^+_R(\overline{\nu}) = \bar{\mathscr{U}}^{0}_{\mathbb{L}}(\overline{\nu})
\end{equation}
\end{subequations}
\end{prop}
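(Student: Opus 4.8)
The plan is to establish the correspondence between the constrained MDP over $\overline{\bY}$ and the finite-dimensional LP~(\ref{LP-Definition}), following the state–action frequency (occupation measure) approach of~\cite[Chapter~4]{Altman1999Constrained-Mar}, and then to read off the two equalities and the dominance claim. The central object is the occupation measure: for a policy $\phi$ in $\Phi_R^{\epsilon}(\onu)$, set $\ell_{\overline{\by},\mathcal{W}} := \overline{\pi}^{\phi}(\overline{\by})\phi(\overline{\by})$ and $\ell_{\overline{\by},\mathcal{R}} := \overline{\pi}^{\phi}(\overline{\by})(1-\phi(\overline{\by}))$ whenever $\mathcal{R}\in\overline{\mathbb{A}}_{\overline{\by}}$, and $\ell_{\overline{\by},\mathcal{W}}:=\overline{\pi}^{\phi}(\overline{\by})$ when $\overline{w}=\cB$. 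First I would verify that this $\ell$ is feasible for~(\ref{LP-Definition}): nonnegativity is immediate; stationarity of $\overline{\pi}^{\phi}$ under $\obY^{\phi}$ gives the balance equations~(\ref{LP-constraint-e}); $\sum_{\overline{\by}}\overline{\pi}^{\phi}(\overline{\by})=1$ gives~(\ref{LP-constraint-d}); the definition of $\onu^{\phi}=\onu$ gives~(\ref{LP-constraint-c}); and $\phi(1,\cA)\geq\epsilon$ together with $\overline{\pi}^{\phi}(1,\cA)>0$ (a consequence of $\phi\in\Phi_R^+$, via~\cite[Corollary~1]{Lin2019Scheduling-task}) gives~(\ref{LP-constraint-b}). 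Since $\sum_{\overline{\by}}\ell_{\overline{\by},\mathcal{W}} = \sum_{\overline{\by}}\overline{\pi}^{\phi}(\overline{\by})\phi(\overline{\by}) = \bar{\mathscr{U}}(\phi)$, this shows $\bar{\mathscr{U}}^{\epsilon}_{\mathbb{L}}(\onu)\leq\bar{\mathscr{U}}^{\epsilon}_R(\onu)$.

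For the reverse inequality and the dominance claim, I would take any feasible $\ell\in\mathbb{L}^{\epsilon}(\onu)$ and form $\varphi_{\ell}$ via~(\ref{eq:def_policy_from_LP}). The key sub-step is to show $\varphi_{\ell}\in\Phi_R^{\epsilon}(\onu)$ and that $\bar{\mathscr{U}}(\varphi_{\ell})\leq\sum_{\overline{\by}}\ell_{\overline{\by},\mathcal{W}}$. Here one must argue that, letting $m(\overline{\by}):=\sum_{\overline{a}}\ell_{\overline{\by},\overline{a}}$, the constraint~(\ref{LP-constraint-e}) says $m$ is invariant under the Markov kernel induced by $\varphi_{\ell}$; on the recurrent class of $\obY^{\varphi_{\ell}}$ the normalized restriction of $m$ coincides with $\overline{\pi}^{\varphi_{\ell}}$, and on that class $\ell_{\overline{\by},\mathcal{W}} = m(\overline{\by})\varphi_{\ell}(\overline{\by})$. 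The subtlety — and the step I expect to be the main obstacle — is that $m$ may place mass on transient states of $\obY^{\varphi_{\ell}}$, so $\sum_{\overline{\by}}\ell_{\overline{\by},\mathcal{W}}$ can strictly exceed $\bar{\mathscr{U}}(\varphi_{\ell})$ (accounting for why~(\ref{prop:CanUseLP}) asserts only $\leq$ in the dominance definition, not equality of rates policy-by-policy); one also needs $\varphi_{\ell}(1,\cA)\geq\epsilon$, which is exactly Remark~\ref{rem:OnTheConstraints}, and hence $\varphi_{\ell}\in\Phi_R^{\epsilon}$, together with $\onu^{\varphi_{\ell}}=\onu$, which follows by combining~(\ref{LP-constraint-c}) with the occupation-measure identity on the recurrent class. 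This handling of transient mass is the standard delicate point in the Altman framework, and I would invoke~\cite[Chapter~4]{Altman1999Constrained-Mar} for the general decomposition of feasible occupation measures while spelling out the finite-state specifics of $\obY$.

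Combining the two directions yields~(\ref{eq:UepsilonWorks}) and the dominance of $\Phi_R^{\epsilon}(\onu)$ by $\Phi^{\epsilon}_{\mathbb{L}}(\onu)$: given $\phi_2\in\Phi_R^{\epsilon}(\onu)$, its occupation measure is feasible, and the optimizing $\ell$ (whose objective value is at most $\bar{\mathscr{U}}(\phi_2)$) produces $\phi_1=\varphi_{\ell}\in\Phi^{\epsilon}_{\mathbb{L}}(\onu)$ with $\bar{\mathscr{U}}(\phi_1)\leq\sum_{\overline{\by}}\ell_{\overline{\by},\mathcal{W}}\leq\bar{\mathscr{U}}(\phi_2)$. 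Finally, for~(\ref{eq:UzeroWorks}), I would pass to the limit $\epsilon\to 0^+$: the right-hand side $\bar{\mathscr{U}}^{0}_{\mathbb{L}}(\onu)$ is the LP with $\epsilon=0$, where~(\ref{LP-constraint-b}) becomes vacuous ($0\geq 0$ is automatic once $\ell_{(1,\cA),\mathcal{W}}\geq 0$; more precisely $(1-0)\ell_{(1,\cA),\mathcal{W}}\geq 0\cdot\ell_{(1,\cA),\mathcal{R}}$), so $\mathbb{L}^{0}(\onu)=\bigcup_{\epsilon\in(0,1]}\mathbb{L}^{\epsilon}(\onu)$ up to closure; combined with Remark~\ref{rem:ContinuityInEpsilonUbar} (namely~(\ref{eq:LimEpsilonZeroUBarR})) and the already-established $\bar{\mathscr{U}}^{\epsilon}_R=\bar{\mathscr{U}}^{\epsilon}_{\mathbb{L}}$, taking $\epsilon\downarrow 0$ gives $\bar{\mathscr{U}}^+_R(\onu)=\lim_{\epsilon\to 0^+}\bar{\mathscr{U}}^{\epsilon}_{\mathbb{L}}(\onu)=\bar{\mathscr{U}}^{0}_{\mathbb{L}}(\onu)$, where the last equality uses monotonicity of $\bar{\mathscr{U}}^{\epsilon}_{\mathbb{L}}$ in $\epsilon$ and feasibility continuity of the LP in the single parameter $\epsilon$. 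I would double-check that the $\epsilon=0$ LP is attained (finite-dimensional LP with a nonempty, and one must check bounded-enough, feasible region via~(\ref{LP-constraint-d})), so that writing $\min$ rather than $\inf$ in~(\ref{eq:minimization}) is justified.
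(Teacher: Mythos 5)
Your overall route mirrors the paper's: establish (\ref{eq:UepsilonWorks}) via the occupation-measure correspondence of~\cite[Chapter~4]{Altman1999Constrained-Mar}, and then obtain (\ref{eq:UzeroWorks}) by sending $\epsilon\to 0^+$ and combining with Remark~\ref{rem:ContinuityInEpsilonUbar}. Where you differ is in the level of detail on each side. For the first part you unpack the Altman correspondence by hand (occupation measure from policy, policy from occupation measure, verifying constraints (\ref{LP-constraint-b})--(\ref{LP-constraint-e})), whereas the paper simply cites \cite[Theorem~4.3]{Altman1999Constrained-Mar} as a black box; your unpacking is correct. One small note: the transient-mass concern you raise is not actually an issue when $\epsilon>0$, because then $\varphi_{\ell}\in\Phi_R^{\epsilon}\subset\Phi_R^+$, and \cite[Corollary~1]{Lin2019Scheduling-task} gives uniqueness of the stationary PMF of $\obY^{\varphi_\ell}$, forcing $m(\oby):=\sum_{\overline{a}}\ell_{\oby,\overline{a}}$ to vanish on transient states. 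So for $\epsilon>0$ you actually get equality $\sum_{\oby}\ell_{\oby,\mathcal{W}}=\bar{\mathscr{U}}(\varphi_\ell)$, not merely dominance; the asymmetry in the dominance definition only matters when two distinct $\ell$'s map to the same $\varphi_{\ell}$.

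The genuine soft spot in your argument is the final continuity step, ``$\lim_{\epsilon\to 0^+}\bar{\mathscr{U}}^{\epsilon}_{\mathbb{L}}(\onu)=\bar{\mathscr{U}}^{0}_{\mathbb{L}}(\onu)$,'' which you justify by ``monotonicity plus feasibility continuity of the LP in $\epsilon$'' and the claim $\mathbb{L}^0(\onu)=\overline{\bigcup_{\epsilon>0}\mathbb{L}^{\epsilon}(\onu)}$. Monotonicity (the feasible set enlarges as $\epsilon\downarrow 0$) gives only $\lim_{\epsilon\to 0^+}\bar{\mathscr{U}}^{\epsilon}_{\mathbb{L}}(\onu)\geq\bar{\mathscr{U}}^{0}_{\mathbb{L}}(\onu)$, and LP value functions in a right-hand-side parameter are in general only semicontinuous, so the reverse inequality needs an argument. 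The paper supplies exactly this as a stand-alone result, Proposition~\ref{prop:U-EpsilonCont}: take an $\epsilon=0$ optimizer $\ell^0$ and any $\ell^{\epsilon}\in\mathbb{L}^{\epsilon}(\onu)$ with $\epsilon>0$, and observe that the convex combination $\ell^{av}=\tfrac{1}{3}(\ell^{\epsilon}+2\ell^0)$ satisfies (\ref{LP-constraint-c})--(\ref{LP-constraint-e}) by linearity and satisfies (\ref{LP-constraint-b}) for some $\epsilon^*>0$ because $\ell^{\epsilon}_{(1,\mathcal{A}),\mathcal{W}}>0$; its objective value is within a controlled fraction of the $\epsilon=0$ optimum, which drives a contradiction argument. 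This is the concrete version of your ``up to closure'' claim, and it relies on the nonemptiness of $\mathbb{L}^{\epsilon}(\onu)$ for small positive $\epsilon$ (which you do have, via the occupation measure of any $\phi\in\Phi_R^{\epsilon}(\onu)$). So the idea you gestured at is right, but it needed to be made into a proof; the paper isolates it as Proposition~\ref{prop:U-EpsilonCont} and cites it.
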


\begin{proof}
It follows immediately from~\cite[Theorem~4.3]{Altman1999Constrained-Mar} that~(\ref{eq:UepsilonWorks}) holds and~$\Phi^{\epsilon}_{\mathbb{L}} (\overline{\nu})$ dominates $\Phi_R^{\epsilon}(\onu)$. 
Furthermore, Proposition~\ref{prop:U-EpsilonCont} from \S\ref{sec:ContinuityProperties} implies that the following limit holds:

\begin{equation}
\label{eq:SequenceLimitEpsilon}
\lim_{\epsilon \rightarrow 0^+} \ \bar{\mathscr{U}}^{\epsilon}_{\mathbb{L}}(\overline{\nu}) = \bar{\mathscr{U}}^{0}_{\mathbb{L}}(\overline{\nu})
\end{equation}
That~(\ref{eq:UzeroWorks}) holds is a consequence of~(\ref{eq:LimEpsilonZeroUBarR}), (\ref{eq:UepsilonWorks}) and  (\ref{eq:SequenceLimitEpsilon}).
\end{proof}

Before proceedings to describing our main results, we define the following class of policies for $\bX$ that can be generated from solutions of the~LP~(\ref{LP-Definition}):
\begin{equation*}
\mathscr{X} \Phi_{\mathbb{L}}^{\epsilon} (\onu) \Eqdef \{\mathscr{X}(\phi) : \phi \in \Phi^{\epsilon}_{\mathbb{L}}(\onu) \}, \ \onu \in (0,\onu^*), \ \epsilon \in (0,1]
\end{equation*}

See Fig.\ref{Fig:VennDiag} for a representation of the relationships among most of the policy sets for the MDPs $\bX$ and $\obY$.

\section{Main Results}
\label{sec:MainResults}

This section starts with Theorem~\ref{thm:PolicyDesign}, which is our main result. Subsequently, we state Corollaries~\ref{cor:Equality} and~\ref{cor:PolicyDesign} that undergird our methods to tackle Problems~\ref{Problem:ComputeBound} and~\ref{Problem:ComputePolicy}, respectively. 

\begin{theorem}
\label{thm:PolicyDesign} Let an arrival rate $\lambda$ in $(0,\onu^*)$ be given. For each positive gap $\delta$,
there exist a service rate $\onu^{\delta,\lambda}$ in $(\lambda,\onu^*)$ and $\epsilon^{\delta,\lambda}$ in $(0,1]$ such that $\Phi^{\epsilon^{\delta,\lambda}}_{\mathbb{L}}(\onu^{\delta,\lambda})$ is nonempty and the following inequality holds:
\beqa
\label{eq:IneqMainThm}
\mathscr{U}(\lambda,\theta) \leq
\bar{\mathscr{U}}^+_R(\lambda) + \delta, \quad \theta \in \mathscr{X} \Phi^{\epsilon^{\delta,\lambda}}_{\mathbb{L}}(\onu^{\delta,\lambda})
\eeqa
\end{theorem}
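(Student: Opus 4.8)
The plan is to connect the utilization rate $\mathscr{U}(\lambda,\theta)$ of the full system $\bX^\theta$ to the utilization rate $\bar{\mathscr{U}}(\phi)$ of the auxiliary chain $\obY^\phi$ for policies of the structured form $\theta = \mathscr{X}(\phi)$, and then invoke the LP-based machinery already assembled in \S\ref{sec:TechnicalFramework} together with the continuity and distributional-convergence results promised in \S\ref{sec:ContinuityProperties} and \S\ref{sec:DistributionalConvergence}. First I would fix $\delta > 0$. Using Remark~\ref{rem:ContinuityInEpsilonUbar} (equation~(\ref{eq:LimEpsilonZeroUBarR})) and the fact that $\bar{\mathscr{U}}^+_R$ depends continuously on the service-rate argument near $\lambda$ (a continuity property that should appear in \S\ref{sec:ContinuityProperties}), I would choose a service rate $\onu^{\delta,\lambda}$ strictly between $\lambda$ and $\onu^*$, and an $\epsilon^{\delta,\lambda} \in (0,1]$, close enough to $\lambda$ and to $0$ respectively that
\begin{equation*}
\bar{\mathscr{U}}^{\epsilon^{\delta,\lambda}}_R(\onu^{\delta,\lambda}) \leq \bar{\mathscr{U}}^+_R(\lambda) + \delta/2 .
\end{equation*}
Proposition~\ref{prop:PhiEpsilonNonEmpty} (applied with $\onu^- $ slightly below $\onu^*$, where feasibility of the LP is guaranteed via \cite[Lemma~7]{Lin2019Scheduling-task}) ensures $\Phi^{\epsilon^{\delta,\lambda}}_{\mathbb{L}}(\onu^{\delta,\lambda})$ is nonempty, and Proposition~\ref{prop:CanUseLP} identifies $\bar{\mathscr{U}}^{\epsilon^{\delta,\lambda}}_{\mathbb{L}}(\onu^{\delta,\lambda})$ with $\bar{\mathscr{U}}^{\epsilon^{\delta,\lambda}}_R(\onu^{\delta,\lambda})$, so every $\phi \in \Phi^{\epsilon^{\delta,\lambda}}_{\mathbb{L}}(\onu^{\delta,\lambda})$ satisfies $\bar{\mathscr{U}}(\phi) \leq \bar{\mathscr{U}}^+_R(\lambda) + \delta/2$.

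The remaining and substantive step is to show that, for $\theta = \mathscr{X}(\phi)$ with $\phi$ as above, $\mathscr{U}(\lambda,\theta)$ is close to $\bar{\mathscr{U}}(\phi)$ — specifically, within $\delta/2$, possibly after shrinking $\onu^{\delta,\lambda} \downarrow \lambda$. Here the intuition from \S\ref{sec:TechnicalFramework} is that $\obY$ behaves like the server component of $\bX$ whenever the queue is nonempty (cf.~(\ref{equivalenceXandY})), and under a stabilizing structured policy with service rate $\onu$ only slightly exceeding $\lambda$, the queue is empty only a vanishingly small fraction of time. Concretely, $\mathscr{U}(\lambda,\theta) = \sum_{\bx} \pi^\theta(\bx)\vartheta^\phi(\bx)$; splitting the sum over states with $q = 0$ (where $\vartheta^\phi \equiv 0$ contributes nothing, but $\phi$ might be positive) versus $q \geq 1$, and comparing $\sum_{q\geq 1}\pi^\theta(\oby,q)$ with $\obpi^\phi(\oby)$, I would argue via the distributional-convergence result of \S\ref{sec:DistributionalConvergence} that the $\bX^{\mathscr{X}(\phi)}$-stationary marginal on $\mathbb{Y}$, restricted to $q\geq 1$ and renormalized, converges to $\obpi^\phi$ as $\onu \downarrow \lambda$ (equivalently, as the arrival rate approaches the imposed service rate), uniformly enough over $\phi \in \Phi^{\epsilon^{\delta,\lambda}}_{\mathbb{L}}(\onu)$. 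Combining, $|\mathscr{U}(\lambda,\mathscr{X}(\phi)) - \bar{\mathscr{U}}(\phi)|$ is bounded by the stationary probability that $Q = 0$ plus a discrepancy term that both go to $0$; choosing $\onu^{\delta,\lambda}$ close enough to $\lambda$ makes this $\leq \delta/2$, and the triangle inequality yields~(\ref{eq:IneqMainThm}).

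The main obstacle I anticipate is precisely this last comparison: making rigorous the claim that the queue-nonempty conditional marginal of $\bX^{\mathscr{X}(\phi)}$ on $\mathbb{Y}$ converges to $\obpi^\phi$, and that the probability of an empty queue vanishes, \emph{uniformly} over the relevant family of policies $\phi$ as $\onu \downarrow \lambda$. This is a distributional continuity/convergence statement at the boundary of the stability region, where the chain $\bX^\theta$ becomes near-critical, its stationary PMF develops a heavy tail in $q$, and standard stationary-distribution continuity arguments degrade; it is exactly the kind of statement the "potential-like method" of \S\ref{sec:DistributionalConvergence} is advertised to handle, so the proof will lean on those results (presumably applied after compactness is used to reduce to sequences $\phi_n \in \Phi^{\epsilon}_{\mathbb{L}}(\onu_n)$ with $\onu_n \downarrow \lambda$). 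The bookkeeping that $\onu^{\delta,\lambda} \in (\lambda,\onu^*)$ and $\epsilon^{\delta,\lambda}\in(0,1]$ can be chosen \emph{simultaneously} to satisfy both halves of the estimate, and that nonemptiness is preserved under the final shrink of $\onu$, is routine given Propositions~\ref{prop:PhiEpsilonNonEmpty}, \ref{prop:CanUseLP} and~\ref{prop:SufficientCondStab}.
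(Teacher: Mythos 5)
Your overall strategy matches the paper's: control the LP-value gap $\bar{\mathscr{U}}^{\epsilon}_{\mathbb{L}}(\onu) - \bar{\mathscr{U}}^+_R(\lambda)$ via continuity properties, then control the model-transfer gap $|\mathscr{U}(\lambda,\mathscr{X}(\phi)) - \bar{\mathscr{U}}(\phi)|$ via the distributional-convergence machinery, and add the two. You also correctly identify the crux: the uniform (over policy class) convergence of the stationary marginal at the near-critical boundary $\onu \downarrow \lambda$, which is exactly what Theorem~\ref{thm:MainDistConv}/Corollary~\ref{coro:MainUDistConverence} supply via Lemmas~\ref{lem:potential}–\ref{lem:DistConverge3}.

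There is, however, a genuine gap in the quantifier bookkeeping that you dismiss as "routine." You first pick $\onu^{\delta,\lambda}$ and $\epsilon^{\delta,\lambda}$ to get the LP bound $\bar{\mathscr{U}}^{\epsilon^{\delta,\lambda}}_R(\onu^{\delta,\lambda}) \leq \bar{\mathscr{U}}^+_R(\lambda) + \delta/2$, and then say you may shrink $\onu^{\delta,\lambda} \downarrow \lambda$ to make the distributional-convergence term small. But shrinking $\onu$ with $\epsilon$ held fixed can \emph{break} the LP bound: nothing a priori prevents $\bar{\mathscr{U}}^{\epsilon}_{\mathbb{L}}(\onu)$ from increasing as $\onu$ decreases when $\epsilon>0$. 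The paper handles exactly this with Proposition~\ref{prop:U-OnuNonIncrease}, which shows that for $\epsilon$ small enough, $\bar{\mathscr{U}}^{\epsilon}_{\mathbb{L}}(\onu) \leq \bar{\mathscr{U}}^{\epsilon}_{\mathbb{L}}(\onu^\dagger)$ throughout $(\lambda,\onu^\dagger)$. You never invoke this monotonicity; without it, the two halves of your estimate cannot both be made to hold after the final shrink. (And you can't simply re-shrink $\epsilon$ afterward, since the constants $\beta_{\lambda,\epsilon}$ and $\eta_\epsilon$ in the convergence bound degrade as $\epsilon\to 0$.) The paper's three-step chain — Theorem~\ref{thm:U0Convex} for the $\onu$-continuity at $\epsilon=0$, Proposition~\ref{prop:U-EpsilonCont} for right-continuity in $\epsilon$ at fixed $\onu^\dagger$, Proposition~\ref{prop:U-OnuNonIncrease} for monotonicity as $\onu\downarrow\lambda$ — is precisely the ordering that makes the simultaneous choice legitimate.

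Separately, your nonemptiness argument misapplies Proposition~\ref{prop:PhiEpsilonNonEmpty}: that result yields nonemptiness only for $\onu$ in $[\onu^-,\onu^*]$, so taking $\onu^-$ "slightly below $\onu^*$" gives you nothing about a $\onu^{\delta,\lambda}$ that lies near $\lambda$. The correct move (which the paper makes) is to establish nonemptiness at $\onu^- \leq \onu^{\delta,\lambda}$ directly from stabilizability of $\lambda$ — i.e., $\mathscr{Y}(\theta)$ for $\theta\in\Theta_S(\lambda)$ gives a policy in $\Phi^{\epsilon^\dagger}_R(\lambda)$ for some $\epsilon^\dagger>0$ by \cite[Lemma~2]{Lin2019Scheduling-task} and~(\ref{eqs:InequalitiesServiceRate-a}) — and then propagate rightward via Proposition~\ref{prop:PhiEpsilonNonEmpty}.
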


Remarks~\ref{rem:EqualityComputation} and~\ref{rem:Cor2} will expound the importance of Theorem~\ref{thm:PolicyDesign} and its two corollaries.
Our proof of the theorem given below relies on the continuity properties and distributional convergence results established in \S\ref{sec:ContinuityProperties} and~\S\ref{sec:DistributionalConvergence}, respectively.

\begin{proof}
Since it follows from Theorem~\ref{thm:U0Convex} 
in \S\ref{sec:ContinuityProperties} that 
$\bar{\mathscr{U}}^0_{\mathbb{L}}$ is continuous 
and non-decreasing, we know that there is 
$\onu^{\dagger}$ in $(\lambda,\onu^*)$ such that 
the following inequality holds:
\begin{equation}
\label{eq:ProofMainTh-1}
  \bar{\mathscr{U}}^0_{\mathbb{L}}(\onu^{\dagger}) \leq  \bar{\mathscr{U}}^0_{\mathbb{L}}(\lambda) + \tfrac{1}{3} \delta
\end{equation}

Since $\lambda$ is stabilizable, a stabilizing policy $\theta\in\Theta_S(\lambda)$ exists. By \cite[Lemma 2]{Lin2019Scheduling-task}, $\mathscr{Y}(\theta)$ has non-zero probability to choose to work at state $(1,\cA)$ and $\onu^{\mathscr{Y}(\theta)}=\lambda$ by~\eqref{eqs:InequalitiesServiceRate-a}. Therefore, $\Phi_R^{\epsilon^{\dagger}}(\lambda)$ is nonempty for some positive $\epsilon^\dagger$. From Proposition~\ref{prop:U-EpsilonCont} in~\S\ref{sec:ContinuityProperties}, we can select $\epsilon^{\ddagger}$ in $(0,\epsilon^{\dagger}]$ such that the following holds:
\begin{equation}
\label{eq:ProofMainTh-2}
    \bar{\mathscr{U}}_{\mathbb{L}}^{\epsilon}(\onu^{\dagger}) \leq \bar{\mathscr{U}}_{\mathbb{L}}^0(\onu^{\dagger}) + \tfrac{1}{3} \delta, \quad \epsilon\in(0,\epsilon^{\ddagger}]
\end{equation}
From Proposition~\ref{prop:U-OnuNonIncrease} in~\S\ref{sec:ContinuityProperties} we know that we can choose $\epsilon^{\delta,\lambda}$ in $(0,\epsilon^{\ddagger}]$ such that the following holds: 
\begin{equation}
\label{eq:ProofMainTh-3}
\bar{\mathscr{U}}^{\epsilon^{\delta,\lambda}}_{\mathbb{L}}(\onu) \leq \bar{\mathscr{U}}^{\epsilon^{\delta,\lambda}}_{\mathbb{L}}(\onu^\dagger) , \quad\onu\in(\lambda, \onu^\dagger)
\end{equation}

In \S\ref{sec:DistributionalConvergence} we develop in sequence several results that ultimately lead to Theorem~\ref{thm:MainDistConv}, which establishes an important distributional convergence result that takes hold when $\onu$ in $(\lambda,\onu^{\dagger})$ is selected as close as needed to $\lambda$. Using Corollary~\ref{coro:MainUDistConverence} stated also in~\S\ref{sec:DistributionalConvergence}, which follows immediately from Theorem~\ref{thm:MainDistConv}, we conclude that, based on our choice of $\epsilon^{\delta,\lambda}$ above, we can select $\onu^{\delta,\lambda}$ in $(\lambda,\onu^{\dagger})$ such that the following inequality holds:
\begin{equation}
\label{eq:ProofMainTh-4}
    \mathscr{U}\big ( \lambda,\mathscr{X}(\phi) \big ) \leq \bar{\mathscr{U}}(\phi) + \tfrac{1}{3} \delta , \quad \phi \in \Phi^{\epsilon^{\delta,\lambda}}_{\mathbb{L}} (\onu^{\delta,\lambda})
\end{equation}

Hence, using our choices for $\epsilon^{\delta,\lambda}$ and $\onu^{\delta,\lambda}$, we infer from~(\ref{eq:ProofMainTh-1})-(\ref{eq:ProofMainTh-4}) that the following inequality holds:
\begin{equation}
    \mathscr{U}\big ( \lambda,\mathscr{X}(\phi) \big ) \leq \bar{\mathscr{U}}^0_{\mathbb{L}}(\lambda) + \delta, \quad \phi \in \Phi^{\epsilon^{\delta,\lambda}}_{\mathbb{L}} (\onu^{\delta,\lambda})
\end{equation} which, together with~(\ref{eq:UzeroWorks}), leads to (\ref{eq:IneqMainThm}).
\end{proof}

We proceed with stating a proposition that provides an utilization-rate counterpart for~(ii) in~(\ref{eqs:InequalitiesServiceRate-a}) and whose proof we omit because it follows immediately from~\cite[Lemmas~3 and~4]{Lin2019Scheduling-task}.

\begin{prop}
\label{prop:ProjectionPreservesUtilization} Given $\lambda$ in $(0,\onu^*)$, the following equality holds for any $\theta$ in $\Theta_S(\lambda)$:
\begin{equation}
\bar{\mathscr{U}} \big( \mathscr{Y}(\theta) \big) = \mathscr{U}(\lambda,\theta) 
\end{equation}
\end{prop}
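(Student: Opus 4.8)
The plan is to express both sides in terms of the stationary PMF $\pi^{\theta}$ of $\bX^{\theta}$ and show they coincide by grouping the sum defining $\mathscr{U}(\lambda,\theta)$ according to the server-state coordinate $\oby \in \mathbb{Y}$. Concretely, I would start from the definition
\begin{equation*}
\mathscr{U}(\lambda,\theta) = \sum_{\bx \in \mathbb{X}} \pi^{\theta}(\bx)\theta(\bx) = \sum_{\oby \in \mathbb{Y}} \sum_{q \in \mathbb{Q}^{\oby}} \pi^{\theta}(\oby,q)\,\theta(\oby,q),
\end{equation*}
and compare this, term by term in $\oby$, with $\bar{\mathscr{U}}(\mathscr{Y}(\theta)) = \sum_{\oby \in \mathbb{Y}} \obpi^{\mathscr{Y}(\theta)}(\oby)\,\varphi^{\theta}(\oby)$, where $\varphi^{\theta} = \mathscr{Y}(\theta)$ is the projected policy whose definition already gives $\varphi^{\theta}(\oby)\big(\sum_{q}\pi^{\theta}(\oby,q)\big) = \sum_{q}\theta(\oby,q)\pi^{\theta}(\oby,q)$. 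So the inner sums match provided $\obpi^{\mathscr{Y}(\theta)}(\oby) = \sum_{q \in \mathbb{Q}^{\oby}} \pi^{\theta}(\oby,q)$ for every $\oby$; that is, provided the marginal of $\pi^{\theta}$ on the server coordinate is exactly the stationary PMF of $\obY$ under the projected policy. This marginalization identity is precisely the content invoked from~\cite[Lemmas~3 and~4]{Lin2019Scheduling-task}, which is why the paper omits the proof — the key fact is that the server-state transitions of $\bX^{\theta}$, when the queue is positive, agree with those of $\obY^{\varphi^{\theta}}$ by~(\ref{equivalenceXandY}), and the stabilizing policy $\theta$ forces work whenever the queue is at states adjacent to the forbidden $(\mathcal{B},0)$, so the induced server-marginal chain is well-defined and its unique stationary PMF is $\obpi^{\mathscr{Y}(\theta)}$.

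The main obstacle is the subtlety at the boundary: when $Q_k = 0$ and $W_k = \mathcal{A}$, the action constraint (\ref{ActionConstraints}) forces rest, so $\theta(\oby,0)$ may differ from what a free policy on $\obY$ would do, and one must check that the projection map $\mathscr{Y}$ correctly "averages out" this boundary effect so that the marginal chain on $\mathbb{Y}$ is genuinely Markov with transition kernel matching $\obY^{\varphi^{\theta}}$. The way to handle this is to write the balance equations for $\pi^{\theta}$ on $\bX$, sum them over $q \in \mathbb{Q}^{\oby}$ for fixed $\oby$, and verify that the resulting equations for the marginals $\sum_{q}\pi^{\theta}(\oby,q)$ are exactly the stationarity equations for $\obY$ under $\varphi^{\theta}$; the queue-coordinate transitions telescope/sum to $1$ by (\ref{equivalenceXandY}) and the definition of $\varphi^{\theta}$ absorbs the $q$-dependence of $\theta$. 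By uniqueness of the stationary PMF of $\obY^{\phi}$ for $\phi \in \Phi_R^+$ (guaranteed since $\varphi^{\theta}(1,\mathcal{A}) > 0$ by~\cite[Lemma~2]{Lin2019Scheduling-task}), the marginal equals $\obpi^{\mathscr{Y}(\theta)}$, completing the argument. Since all of this machinery is imported verbatim from~\cite{Lin2019Scheduling-task}, the honest write-up is a one- or two-line citation, as the paper does.
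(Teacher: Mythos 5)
Your proposal is correct and takes essentially the same route as the paper: decompose $\mathscr{U}(\lambda,\theta)$ by the server coordinate $\oby$, absorb the $q$-dependence of $\theta$ via the defining identity $\sum_q\theta(\oby,q)\pi^\theta(\oby,q)=\varphi^\theta(\oby)\sum_q\pi^\theta(\oby,q)$, and then invoke the marginalization fact $\sum_q\pi^\theta(\oby,q)=\obpi^{\mathscr{Y}(\theta)}(\oby)$ — which, as you note, is precisely what the cited Lemmas~3 and~4 of~\cite{Lin2019Scheduling-task} supply, and is why the paper omits the proof. Your discussion of the $q=0$ boundary is also sound: since the constraint forces $\theta(\oby,0)=0$, that term drops out of the numerator in the definition of $\varphi^\theta$, so when one sums the balance equations of $\pi^\theta$ over $q$ the effective transition kernel on $\mathbb{Y}$ is exactly $\overline{\bf P}^{\varphi^\theta}$, and uniqueness of the stationary PMF for $\varphi^\theta\in\Phi_R^+$ closes the argument.
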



\begin{coro}
\label{cor:Equality} The following equality holds:
\beqa
\label{eq:coroEquality}
\mathscr{U}^*(\lambda) = \bar{\mathscr{U}}^+_R(\lambda), \quad \lambda \in (0,\onu^*) 
\eeqa
\end{coro}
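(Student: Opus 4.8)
\textbf{Proof plan for Corollary~\ref{cor:Equality}.}
The plan is to establish the equality $\mathscr{U}^*(\lambda) = \bar{\mathscr{U}}^+_R(\lambda)$ by proving the two inequalities separately, both of which are essentially packaged by results already in hand. For the inequality $\mathscr{U}^*(\lambda) \geq \bar{\mathscr{U}}^+_R(\lambda)$, I would start from the definition $\mathscr{U}^*(\lambda) = \inf_{\theta \in \Theta_S(\lambda)} \mathscr{U}(\lambda,\theta)$ and fix an arbitrary stabilizing policy $\theta$ in $\Theta_S(\lambda)$. Applying Proposition~\ref{prop:ProjectionPreservesUtilization} gives $\mathscr{U}(\lambda,\theta) = \bar{\mathscr{U}}(\mathscr{Y}(\theta))$, and since $\mathscr{Y}(\theta)$ lies in $\Phi_R^+(\lambda)$ — here one uses $\onu^{\mathscr{Y}(\theta)} = \lambda$ from $(ii)$ of~(\ref{eqs:InequalitiesServiceRate-a}) together with the fact that $\mathscr{Y}(\theta)(1,\cA) > 0$ by \cite[Lemma~2]{Lin2019Scheduling-task} — the value $\bar{\mathscr{U}}(\mathscr{Y}(\theta))$ is no smaller than the infimum $\bar{\mathscr{U}}^+_R(\lambda)$ over all of $\Phi_R^+(\lambda)$. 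Taking the infimum over $\theta$ yields $\mathscr{U}^*(\lambda) \geq \bar{\mathscr{U}}^+_R(\lambda)$.

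For the reverse inequality $\mathscr{U}^*(\lambda) \leq \bar{\mathscr{U}}^+_R(\lambda)$, the idea is to exhibit, for each $\delta > 0$, a stabilizing policy whose utilization rate is within $\delta$ of $\bar{\mathscr{U}}^+_R(\lambda)$; this is exactly what Theorem~\ref{thm:PolicyDesign} provides. Given $\delta$, Theorem~\ref{thm:PolicyDesign} furnishes $\onu^{\delta,\lambda}$ in $(\lambda,\onu^*)$ and $\epsilon^{\delta,\lambda}$ in $(0,1]$ with $\Phi^{\epsilon^{\delta,\lambda}}_{\mathbb{L}}(\onu^{\delta,\lambda})$ nonempty and $\mathscr{U}(\lambda,\theta) \leq \bar{\mathscr{U}}^+_R(\lambda) + \delta$ for every $\theta$ in $\mathscr{X}\Phi^{\epsilon^{\delta,\lambda}}_{\mathbb{L}}(\onu^{\delta,\lambda})$. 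I need to know such a $\theta$ is actually stabilizing: since $\epsilon^{\delta,\lambda} > 0$, any $\phi$ in $\Phi^{\epsilon^{\delta,\lambda}}_{\mathbb{L}}(\onu^{\delta,\lambda})$ lies in $\Phi_R^+(\onu^{\delta,\lambda})$, so $\mathscr{X}(\phi)$ lies in $\mathscr{X}\Phi_R^+(\onu^{\delta,\lambda})$, and because $\lambda < \onu^{\delta,\lambda} < \onu^*$, Proposition~\ref{prop:SufficientCondStab} guarantees $\bX^{\mathscr{X}(\phi)}$ is stable, i.e. $\mathscr{X}(\phi) \in \Theta_S(\lambda)$. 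Consequently $\mathscr{U}^*(\lambda) \leq \mathscr{U}(\lambda,\mathscr{X}(\phi)) \leq \bar{\mathscr{U}}^+_R(\lambda) + \delta$, and letting $\delta \to 0^+$ gives $\mathscr{U}^*(\lambda) \leq \bar{\mathscr{U}}^+_R(\lambda)$.

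Combining the two inequalities closes the proof. I do not expect any serious obstacle here: the corollary is a direct corollary of Theorem~\ref{thm:PolicyDesign} (for the upper bound) and of Proposition~\ref{prop:ProjectionPreservesUtilization} together with the projection map $\mathscr{Y}$ being well-defined into $\Phi_R^+(\lambda)$ (for the lower bound). The only points requiring a sentence of care are (i) verifying that $\mathscr{Y}(\theta)$ genuinely lands in $\Phi_R^+(\lambda)$ rather than merely $\Phi_R$ — which rests on the service-rate identity in~(\ref{eqs:InequalitiesServiceRate-a}) and \cite[Lemma~2]{Lin2019Scheduling-task} — and (ii) confirming that the LP-derived policies supplied by Theorem~\ref{thm:PolicyDesign} are in fact stabilizing via Proposition~\ref{prop:SufficientCondStab}, so that they are legitimate competitors in the infimum defining $\mathscr{U}^*(\lambda)$. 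Both are routine bookkeeping with the definitions already assembled in \S\ref{sec:TechnicalFramework}.
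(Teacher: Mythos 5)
Your proposal is correct and follows essentially the same two-sided argument as the paper: the lower bound $\mathscr{U}^*(\lambda) \geq \bar{\mathscr{U}}^+_R(\lambda)$ via Proposition~\ref{prop:ProjectionPreservesUtilization} together with (i)--(ii) of~(\ref{eqs:InequalitiesServiceRate-a}), and the upper bound directly from Theorem~\ref{thm:PolicyDesign}. Your additional remarks checking that $\mathscr{Y}(\theta)$ lands in $\Phi_R^+(\lambda)$ and that the LP-derived policies are stabilizing via Proposition~\ref{prop:SufficientCondStab} simply make explicit details the paper leaves implicit; they do not change the route.
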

\begin{proof}
It ensues from Proposition~\ref{prop:ProjectionPreservesUtilization} and (i)-(ii) in~(\ref{eqs:InequalitiesServiceRate-a}) that the following holds for any $\lambda$ in $(0,\onu^*)$:
\beqa
\mathscr{U}(\lambda, \theta) = \bar{\mathscr{U}}\big( \mathscr{Y}(\theta) \big) \geq \bar{\mathscr{U}}^+_R(\lambda), \quad \theta \in \Theta_S(\lambda)
\eeqa
Since the inequality above holds for any $\theta$ in $\Theta_S(\lambda)$, the following inequality is satisfied for any $\lambda$ in $(0,\onu^*)$:
\beqa
\label{eq:coro1proof}
\mathscr{U}^*(\lambda) \geq \bar{\mathscr{U}}^+_R(\lambda)
\eeqa

We conclude the proof by remarking that~(\ref{eq:coro1proof}) and Theorem~\ref{thm:PolicyDesign} imply (\ref{eq:coroEquality}).
\end{proof}

\begin{remark}[Solution of Problem~\ref{Problem:ComputeBound}] \label{rem:EqualityComputation} 
Corollary~\ref{cor:Equality} is significant because, in conjunction with (\ref{eq:UzeroWorks}), it indicates that $\mathscr{U}^*(\lambda)$ can be computed using the finite dimensional LP~(\ref{LP-Definition}) for $\epsilon=0$ and $\onu=\lambda$.
\end{remark}

Section~\ref{subsec:GraphicalExample} discusses a numerical example and a graphical method to determine $\bar{\mathscr{U}}^{0}_{\mathbb{L}}(\onu)$ for all values of $\onu$ in $[0,\onu^*]$. The graphical method leverages the analysis in Appendix~\ref{app:StructuralResults}, which establishes that $\bar{\mathscr{U}}^{0}_{\mathbb{L}}$ is non-decreasing and convex.

The following corollary follows directly from Theorem~\ref{thm:PolicyDesign} and Corollary~\ref{cor:Equality}. 

\begin{coro}	\label{cor:PolicyDesign} 
Let an arrival rate $\lambda$ in $(0,\onu^*)$ be given. For each positive gap $\delta$ there exist a service rate $\onu^{\delta,\lambda}$ in $(\lambda,\onu^*)$ and $\epsilon^{\delta,\lambda}$ in $(0,1]$ such that $\Phi^{\epsilon^{\delta,\lambda}}_{\mathbb{L}}(\onu^{\delta,\lambda})$ is nonempty and the following inequality holds:
\beqan
\label{eq:IneqMainCor}
\mathscr{U}(\lambda,\theta) \leq \mathscr{U}^*(\lambda) + \delta, \quad \theta \in \mathscr{X} \Phi^{\epsilon^{\delta,\lambda}}_{\mathbb{L}}(\onu^{\delta,\lambda})
\eeqan
\end{coro}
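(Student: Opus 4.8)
The plan is to derive Corollary~\ref{cor:PolicyDesign} as an essentially immediate consequence of Theorem~\ref{thm:PolicyDesign} and Corollary~\ref{cor:Equality}, without any new technical machinery. First I would fix the arrival rate $\lambda$ in $(0,\onu^*)$ and an arbitrary positive gap $\delta$. Theorem~\ref{thm:PolicyDesign}, applied with this $\lambda$ and $\delta$, furnishes a service rate $\onu^{\delta,\lambda}$ in $(\lambda,\onu^*)$ and an $\epsilon^{\delta,\lambda}$ in $(0,1]$ such that $\Phi^{\epsilon^{\delta,\lambda}}_{\mathbb{L}}(\onu^{\delta,\lambda})$ is nonempty and~(\ref{eq:IneqMainThm}) holds, namely $\mathscr{U}(\lambda,\theta) \leq \bar{\mathscr{U}}^+_R(\lambda) + \delta$ for every $\theta$ in $\mathscr{X}\Phi^{\epsilon^{\delta,\lambda}}_{\mathbb{L}}(\onu^{\delta,\lambda})$. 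I would use exactly these objects to witness the claim of the corollary.

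The remaining step is purely a substitution: Corollary~\ref{cor:Equality} asserts $\mathscr{U}^*(\lambda) = \bar{\mathscr{U}}^+_R(\lambda)$ for all $\lambda$ in $(0,\onu^*)$. Replacing $\bar{\mathscr{U}}^+_R(\lambda)$ by $\mathscr{U}^*(\lambda)$ in~(\ref{eq:IneqMainThm}) yields
\begin{equation*}
\mathscr{U}(\lambda,\theta) \leq \mathscr{U}^*(\lambda) + \delta, \quad \theta \in \mathscr{X}\Phi^{\epsilon^{\delta,\lambda}}_{\mathbb{L}}(\onu^{\delta,\lambda}),
\end{equation*}
which is precisely~(\ref{eq:IneqMainCor}). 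The nonemptiness of $\Phi^{\epsilon^{\delta,\lambda}}_{\mathbb{L}}(\onu^{\delta,\lambda})$ is inherited verbatim from Theorem~\ref{thm:PolicyDesign}, so all assertions of the corollary are established.

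There is essentially no obstacle here: the corollary is a restatement of Theorem~\ref{thm:PolicyDesign} in terms of $\mathscr{U}^*$ rather than $\bar{\mathscr{U}}^+_R$, made legitimate by the identification of these two quantities in Corollary~\ref{cor:Equality}. The only thing worth a sentence of care is that the same triple $(\onu^{\delta,\lambda}, \epsilon^{\delta,\lambda}, \delta)$ produced by Theorem~\ref{thm:PolicyDesign} is reused, so that the policy set over which the bound is claimed is literally the one constructed there; since $\mathscr{X}\Phi^{\epsilon^{\delta,\lambda}}_{\mathbb{L}}(\onu^{\delta,\lambda}) \subseteq \Theta_S(\lambda)$ by Proposition~\ref{prop:SufficientCondStab} (as $\onu^{\delta,\lambda} \in (\lambda,\onu^*)$), the quantity $\mathscr{U}(\lambda,\theta)$ is well-defined for every such $\theta$, and the substitution is valid. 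Hence the proof is a two-line invocation of the preceding theorem and corollary.
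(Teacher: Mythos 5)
Your proof is correct and matches the paper's approach exactly: the paper states that Corollary~\ref{cor:PolicyDesign} follows directly from Theorem~\ref{thm:PolicyDesign} and Corollary~\ref{cor:Equality}, which is precisely the two-step substitution you carry out. The extra sentence confirming well-definedness of $\mathscr{U}(\lambda,\theta)$ via Proposition~\ref{prop:SufficientCondStab} is a sensible bit of hygiene but not required.
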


\begin{remark}[Solution to Problem~\ref{Problem:ComputePolicy}] \label{rem:Cor2}
While, as explained in Remark~\ref{rem:EqualityComputation}, $\mathscr{U}^*(\lambda)$ can be computed effectively for any stabilizable $\lambda$, Corollary~\ref{cor:PolicyDesign} ascertains that we can address Problem~\ref{Problem:ComputePolicy}. Specifically, given a stabilizable $\lambda$ and any positive gap $\delta$, Corollary~\ref{cor:PolicyDesign} guarantees that we can find $\onu$ and $\epsilon$ such that any policy $\theta$ in $\mathscr{X} \Phi^{\epsilon}_{\mathbb{L}}(\onu)$ is not only stabilizing but the utilization rate of $\bX^{\theta}$ does not exceed $\mathscr{U}^*(\lambda)+\delta$. The proof of Theorem~\ref{thm:PolicyDesign} outlines a method for selecting such $\onu$ and $\epsilon$. This is a significant result because any solution of the LP~(\ref{LP-Definition}) can be used to obtain a policy in $\mathscr{X} \Phi^{\epsilon}_{\mathbb{L}}(\onu)$.
\end{remark}


\section{Continuity and monotonicity of~$\bar{\mathscr{U}}^{\epsilon}_{\mathbb{L}}$ }
\label{sec:ContinuityProperties}

We proceed with establishing three properties of $\bar{\mathscr{U}}^{\epsilon}_{\mathbb{L}}$ that are needed in the proof of our main results in~\S\ref{sec:MainResults}.

The following proposition establishes that when, 
for a given $\onu$ in $(0,\onu^*)$, 
$\bar{\mathscr{U}}^{\epsilon}(\onu)$ is viewed 
as a function of $\epsilon$, it is right 
continuous at $0$.

\vspace{.1 in}

\begin{prop}
\label{prop:U-EpsilonCont}
Let $\onu$ in $(0,\onu^*)$ be given. For any positive $\delta$, there is $\epsilon$ such that 
$\bar{\mathscr{U}}^{\epsilon}_{\mathbb{L}}(\onu) 
\leq \bar{\mathscr{U}}^0_{\mathbb{L}}(\onu) 
	+ \delta$.
\end{prop}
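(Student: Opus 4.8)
The plan is to show that the $\epsilon$-LP is a perturbation of the $0$-LP whose feasible region and optimal value vary continuously as $\epsilon\to 0^+$. First I would observe that the only place $\epsilon$ enters the LP~(\ref{LP-Definition}) is through the single linear constraint~(\ref{LP-constraint-b}), namely $(1-\epsilon)\ell_{(1,\mathcal{A}),\mathcal{W}} \geq \epsilon\,\ell_{(1,\mathcal{A}),\mathcal{R}}$; all the other constraints (\ref{LP-constraint-c})--(\ref{LP-constraint-e}) are independent of $\epsilon$. When $\epsilon=0$ this constraint reads $\ell_{(1,\mathcal{A}),\mathcal{W}}\geq 0$, which is already implied by $\ell\in\mathbb{L}$, so the $0$-LP is a relaxation of the $\epsilon$-LP and hence $\bar{\mathscr{U}}^{0}_{\mathbb{L}}(\onu)\leq\bar{\mathscr{U}}^{\epsilon}_{\mathbb{L}}(\onu)$ for every $\epsilon\in[0,1]$. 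The content of the proposition is the reverse, approximate inequality.

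The key step is a perturbation construction: take an optimal solution $\ell^{0}$ of the $0$-LP (which exists and is feasible since, by Proposition~\ref{prop:PhiEpsilonNonEmpty} invoked with $\onu^- = \onu$ and some $\epsilon^{*}$, or directly by \cite[Lemma~7]{Lin2019Scheduling-task}, the LP is feasible for $\onu$ in $(0,\onu^{*})$, and in particular $\mathbb{L}^{1}(\onu^{*})$ is nonempty), and also fix a feasible $\ell^{1}\in\mathbb{L}^{\bar\epsilon}(\onu)$ for some small $\bar\epsilon>0$ whose existence is again guaranteed as in the proof of Proposition~\ref{prop:PhiEpsilonNonEmpty}. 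I would then consider the convex combination $\ell^{t} := (1-t)\ell^{0} + t\,\ell^{1}$ for $t\in[0,1]$. Because~(\ref{LP-constraint-c})--(\ref{LP-constraint-e}) are linear equalities satisfied by both $\ell^{0}$ and $\ell^{1}$, they are satisfied by $\ell^{t}$ as well; and $\ell^{t}\geq 0$ since it is a convex combination of nonnegative vectors. The remaining point is to choose $t=t(\epsilon)$ so that $\ell^{t}$ satisfies~(\ref{LP-constraint-b}) for the given $\epsilon$. Since $\ell^{1}$ strictly satisfies the $\bar\epsilon$-version of~(\ref{LP-constraint-b}) (and hence, for $\epsilon\le\bar\epsilon$, the $\epsilon$-version), while $\ell^{0}$ satisfies the limiting $\epsilon=0$ version, a linear-interpolation/continuity argument in $t$ shows that for each sufficiently small $\epsilon>0$ there is $t(\epsilon)\to 0$ with $\ell^{t(\epsilon)}$ feasible for the $\epsilon$-LP. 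The objective value then satisfies $\sum_{\oby}\ell^{t(\epsilon)}_{\oby,\mathcal{W}} = (1-t(\epsilon))\bar{\mathscr{U}}^{0}_{\mathbb{L}}(\onu) + t(\epsilon)\sum_{\oby}\ell^{1}_{\oby,\mathcal{W}}$, which tends to $\bar{\mathscr{U}}^{0}_{\mathbb{L}}(\onu)$ as $\epsilon\to 0^{+}$; hence given $\delta>0$ one can pick $\epsilon$ small enough that $\bar{\mathscr{U}}^{\epsilon}_{\mathbb{L}}(\onu)\leq\sum_{\oby}\ell^{t(\epsilon)}_{\oby,\mathcal{W}}\leq\bar{\mathscr{U}}^{0}_{\mathbb{L}}(\onu)+\delta$, as claimed.

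The main obstacle I anticipate is not the interpolation itself but verifying that a feasible $\ell^{1}$ with $\ell^{1}_{(1,\mathcal{A}),\mathcal{R}}$ under control (equivalently $\varphi_{\ell^{1}}(1,\mathcal{A})$ bounded away from $0$) actually exists for the same service-rate level $\onu$ — i.e.\ that the $\epsilon$-LP is nonempty for some $\epsilon>0$ at this $\onu$. This is exactly what the construction in the proof of Proposition~\ref{prop:PhiEpsilonNonEmpty} supplies (interpolating between a solution at $\onu^{*}$ coming from \cite[Lemma~7]{Lin2019Scheduling-task} and any solution at a lower rate), so I would cite that lemma and proposition rather than redo the argument. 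A secondary, purely bookkeeping subtlety is that the optimum $\bar{\mathscr{U}}^{0}_{\mathbb{L}}(\onu)$ is attained (the $0$-LP is a feasible, bounded LP over a polyhedron, so $\min$ is justified), which I would note in passing; with that in hand the proof is essentially a one-parameter convexity argument.
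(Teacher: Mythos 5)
Your proposal is correct and takes essentially the same approach as the paper: both proofs form a convex combination of an optimal $0$-LP occupation measure $\ell^0$ and a feasible $\epsilon$-LP occupation measure, observe that the combination automatically satisfies the $\epsilon$-independent constraints~(\ref{LP-constraint-c})--(\ref{LP-constraint-e}) and satisfies~(\ref{LP-constraint-b}) for some smaller $\epsilon^*>0$, and then exploit linearity of the objective to conclude that the gap can be made arbitrarily small. The only difference is presentational — the paper phrases it as a proof by contradiction with a fixed $\tfrac{1}{3}$--$\tfrac{2}{3}$ split, whereas you argue directly with a free interpolation parameter $t\to 0$ — and you are actually somewhat more careful than the paper in explicitly flagging the need to verify $\ell^1_{(1,\cA),\mathcal{W}}>0$ (which both arguments implicitly require so that~(\ref{LP-constraint-b}) holds for some positive $\epsilon^*$).
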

\begin{proof}
The statement of the proposition is false if and only if
there exists some $\onu$ in $(0, \onu^*)$ for which 
$\underbar{d} 
:= \lim_{\epsilon \rightarrow 0^+} 
	\bar{\mathscr{U}}^{\epsilon}_{\mathbb{L}}(\onu) 
	- \bar{\mathscr{U}}^0_{\mathbb{L}}(\onu) 
	> 0$.
We proceed to proving the proposition by contradiction 
by showing that the inequality above does not hold. 
Take $\epsilon$ positive such that 
$d:= \bar{\mathscr{U}}^{\epsilon}_{\mathbb{L}}(\onu) 
- \bar{\mathscr{U}}^0_{\mathbb{L}}(\onu)$ is in 
$[\underbar{d},2\underbar{d})$. Select $\ell^{\epsilon}$ 
and $\ell^{0}$ in $\mathbb{L}^{\epsilon}(\onu)$ and 
$\mathbb{L}^{0}(\onu)$, respectively. Define $\ell^{av} 
:=  \tfrac{1}{3} (\ell^{\epsilon} +  2 \ell^0)$, which 
satisfies~(\ref{LP-constraint-c})-(\ref{LP-constraint-e}). 
Given that $\epsilon$ is positive, $\ell^{av}$ will also 
satisfy (\ref{LP-constraint-b}) for some positive 
$\epsilon^*$, which implies that
$\bar{\mathscr{U}}^{\epsilon^*}_{\mathbb{L}}(\onu) 
- \bar{\mathscr{U}}^0_{\mathbb{L}}(\onu) \leq 
\tfrac{1}{3} d \leq \tfrac{2}{3} \underbar{d}$.
\end{proof}

The following proposition establishes a useful monotonicity property in terms of $\onu$.
\vspace{.1 in}

\begin{prop}
\label{prop:U-OnuNonIncrease}
Let $\onu^\dagger$ and $\onu^\ddagger$ in $(0,\onu^*)$ be given with $\onu^\dagger < \onu^\ddagger$. There exists a positive $\epsilon^*$ such that the following holds: 
\begin{equation*}
\bar{\mathscr{U}}^{\epsilon}_{\mathbb{L}}(\onu) 
\leq \bar{\mathscr{U}}^{\epsilon}_{\mathbb{L}}(\onu^\ddagger), 
	\quad\onu\in(\onu^\dagger, \onu^\ddagger),\  
	\epsilon\in(0,\epsilon^*]
\end{equation*}
\end{prop}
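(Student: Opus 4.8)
The plan is to prove this monotonicity property by combining two ingredients: the convexity-type structure of the LP family in $\onu$ (exploited via linear interpolation of feasible points, exactly as in the proof of Proposition~\ref{prop:PhiEpsilonNonEmpty}) together with the monotonicity of $\bar{\mathscr{U}}^{0}_{\mathbb{L}}$ asserted by Theorem~\ref{thm:U0Convex}. First I would observe that, for $\epsilon=0$, since $\bar{\mathscr{U}}^{0}_{\mathbb{L}}$ is non-decreasing (Theorem~\ref{thm:U0Convex}), we immediately have $\bar{\mathscr{U}}^{0}_{\mathbb{L}}(\onu) \leq \bar{\mathscr{U}}^{0}_{\mathbb{L}}(\onu^{\ddagger})$ for all $\onu \in (\onu^{\dagger},\onu^{\ddagger})$, which is the desired inequality in the limiting case. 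The task is then to transfer this to small positive $\epsilon$, uniformly over the open interval $(\onu^{\dagger},\onu^{\ddagger})$.

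The key step is the interpolation argument. Fix $\onu \in (\onu^{\dagger},\onu^{\ddagger})$ and write $\onu = t \onu^{\dagger} + (1-t)\onu^{\ddagger}$ for the appropriate $t \in (0,1)$. Given feasible solutions $\ell^{\onu^{\dagger}} \in \mathbb{L}^{\epsilon}(\onu^{\dagger})$ and $\ell^{\onu^{\ddagger}} \in \mathbb{L}^{\epsilon}(\onu^{\ddagger})$, the convex combination $\ell := t\ell^{\onu^{\dagger}} + (1-t)\ell^{\onu^{\ddagger}}$ satisfies all the LP constraints (\ref{LP-constraint-b})--(\ref{LP-constraint-e}) for service rate $\onu$ and the same $\epsilon$ (constraints (\ref{LP-constraint-b}), (\ref{LP-constraint-d}), (\ref{LP-constraint-e}) are homogeneous or affine and preserved under convex combination, while (\ref{LP-constraint-c}) is linear in $\ell$ and gives exactly $\onu$). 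Since the objective $\sum_{\oby} \ell_{\oby,\mathcal{W}}$ is linear, this yields
\begin{equation*}
\bar{\mathscr{U}}^{\epsilon}_{\mathbb{L}}(\onu) \leq t\, \bar{\mathscr{U}}^{\epsilon}_{\mathbb{L}}(\onu^{\dagger}) + (1-t)\, \bar{\mathscr{U}}^{\epsilon}_{\mathbb{L}}(\onu^{\ddagger}).
\end{equation*}
Thus it suffices to show $\bar{\mathscr{U}}^{\epsilon}_{\mathbb{L}}(\onu^{\dagger}) \leq \bar{\mathscr{U}}^{\epsilon}_{\mathbb{L}}(\onu^{\ddagger})$ for all sufficiently small $\epsilon>0$. (One must first confirm that both LPs are feasible for the relevant $\epsilon$, which follows from Proposition~\ref{prop:PhiEpsilonNonEmpty} once feasibility is known at the larger rate; I would pick $\epsilon^*$ small enough for this.)

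For the remaining one-sided comparison at the two fixed rates $\onu^{\dagger}<\onu^{\ddagger}$, I would argue by contradiction using right-continuity at $\epsilon = 0$. Suppose no positive $\epsilon^*$ works; then there is a sequence $\epsilon_k \downarrow 0$ with $\bar{\mathscr{U}}^{\epsilon_k}_{\mathbb{L}}(\onu^{\dagger}) > \bar{\mathscr{U}}^{\epsilon_k}_{\mathbb{L}}(\onu^{\ddagger})$. Passing to the limit and invoking the right-continuity at $0$ of $\epsilon \mapsto \bar{\mathscr{U}}^{\epsilon}_{\mathbb{L}}(\cdot)$ (which is exactly the content of Proposition~\ref{prop:U-EpsilonCont}, or equivalently (\ref{eq:SequenceLimitEpsilon}) together with the monotone convergence of $\bar{\mathscr{U}}^{\epsilon}_{\mathbb{L}}$ in $\epsilon$ — the map is non-increasing in $\epsilon$ since the feasible set shrinks), we get $\bar{\mathscr{U}}^{0}_{\mathbb{L}}(\onu^{\dagger}) \geq \bar{\mathscr{U}}^{0}_{\mathbb{L}}(\onu^{\ddagger})$, contradicting the strict-or-weak monotonicity that follows from Theorem~\ref{thm:U0Convex} — more carefully, I need the non-decreasing property to rule this out, and since the inequality $\bar{\mathscr{U}}^{0}_{\mathbb{L}}(\onu^{\dagger}) \leq \bar{\mathscr{U}}^{0}_{\mathbb{L}}(\onu^{\ddagger})$ holds, equality could still occur, so I should instead directly compare $\bar{\mathscr{U}}^{\epsilon}_{\mathbb{L}}(\onu^{\dagger})$ with $\bar{\mathscr{U}}^{0}_{\mathbb{L}}(\onu^{\ddagger})$ and then with $\bar{\mathscr{U}}^{\epsilon}_{\mathbb{L}}(\onu^{\ddagger})$.

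The main obstacle I anticipate is precisely handling the boundary case of equality in the $\epsilon=0$ monotonicity: a clean contradiction argument needs a genuinely strict gap somewhere, so the cleanest route may be to avoid the one-sided comparison at the two fixed rates entirely and instead interpolate $\onu$ against $\onu^{\ddagger}$ alone while pushing $\epsilon$ down, bounding $\bar{\mathscr{U}}^{\epsilon}_{\mathbb{L}}(\onu) \leq t\bar{\mathscr{U}}^{0}_{\mathbb{L}}(\onu^{\dagger}) + (1-t)\bar{\mathscr{U}}^{\epsilon}_{\mathbb{L}}(\onu^{\ddagger}) + o(1)$ and then using $\bar{\mathscr{U}}^{0}_{\mathbb{L}}(\onu^{\dagger}) \leq \bar{\mathscr{U}}^{0}_{\mathbb{L}}(\onu^{\ddagger}) \leq \bar{\mathscr{U}}^{\epsilon}_{\mathbb{L}}(\onu^{\ddagger})$; but making the $o(1)$ term uniform in $\onu$ over the open interval requires a little care, and this uniformity (over an interval that can approach the endpoints) is the delicate point to get right. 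I would resolve it by choosing $\epsilon^*$ once, based only on the fixed data $\onu^{\dagger},\onu^{\ddagger}$ and the right-continuity estimate of Proposition~\ref{prop:U-EpsilonCont} applied at $\onu^{\ddagger}$, since the interpolation bound above is valid simultaneously for every $\onu$ in the interval with the same feasible points $\ell^{\onu^{\dagger}},\ell^{\onu^{\ddagger}}$.
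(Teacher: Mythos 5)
Your interpolation step is sound and is the same as the paper's: a convex combination of feasible occupation measures for two service rates is feasible for the interpolated rate, so $\bar{\mathscr{U}}^{\epsilon}_{\mathbb{L}}$ is convex along the line. You also correctly flag the obstruction yourself \textemdash\ and that obstruction is genuine and unresolved in your write-up. The monotonicity of $\bar{\mathscr{U}}^{0}_{\mathbb{L}}$ gives only $\bar{\mathscr{U}}^{0}_{\mathbb{L}}(\onu^{\dagger}) \leq \bar{\mathscr{U}}^{0}_{\mathbb{L}}(\onu^{\ddagger})$, and when this holds with equality (which the non-decreasing, piecewise-affine structure certainly permits on a flat segment), right-continuity at $\epsilon=0$ cannot tell you which of $\bar{\mathscr{U}}^{\epsilon}_{\mathbb{L}}(\onu^{\dagger})$ and $\bar{\mathscr{U}}^{\epsilon}_{\mathbb{L}}(\onu^{\ddagger})$ is larger for small positive $\epsilon$. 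Your fallback bound, schematically $\bar{\mathscr{U}}^{\epsilon}_{\mathbb{L}}(\onu) \leq t\,\bar{\mathscr{U}}^{0}_{\mathbb{L}}(\onu^{\dagger}) + (1-t)\,\bar{\mathscr{U}}^{\epsilon}_{\mathbb{L}}(\onu^{\ddagger}) + o(1)$ followed by $\bar{\mathscr{U}}^{0}_{\mathbb{L}}(\onu^{\dagger}) \leq \bar{\mathscr{U}}^{\epsilon}_{\mathbb{L}}(\onu^{\ddagger})$, yields at best
\begin{equation*}
\bar{\mathscr{U}}^{\epsilon}_{\mathbb{L}}(\onu) \;\leq\; \bar{\mathscr{U}}^{\epsilon}_{\mathbb{L}}(\onu^{\ddagger}) + t\big(\bar{\mathscr{U}}^{\epsilon}_{\mathbb{L}}(\onu^{\dagger}) - \bar{\mathscr{U}}^{0}_{\mathbb{L}}(\onu^{\dagger})\big),
\end{equation*}
and the error term does not vanish for a fixed $\epsilon$ in $(0,\epsilon^*]$; it only vanishes as $\epsilon\to 0$. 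So this is strictly weaker than the claimed inequality and does not prove the proposition.

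The paper's resolution avoids the one-sided comparison at $\onu^{\dagger}$ entirely by interpolating from a strictly smaller rate. Specifically, it first notes the elementary, uniform-in-$\epsilon$ bound
\begin{equation*}
\bar{\mathscr{U}}^{\epsilon}_{\mathbb{L}}(\onu) \leq \frac{\onu}{\min_{s\in\mathbb{S}}\mu(s)}, \qquad \epsilon\in[0,1],
\end{equation*}
which follows directly from the objective (\ref{eq:minimization}) and constraint (\ref{LP-constraint-c}). It then picks $\onu^- \in (0,\onu^{\dagger})$ small enough that $\onu^-/\min_s\mu(s) \leq \bar{\mathscr{U}}^{0}_{\mathbb{L}}(\onu^{\ddagger})$, establishes feasibility at $\onu^-$ (via a stabilizing policy and Proposition~\ref{prop:PhiEpsilonNonEmpty}, which fixes the $\epsilon^*$), and interpolates from $\onu^-$ to $\onu^{\ddagger}$. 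Because $\bar{\mathscr{U}}^{\epsilon}_{\mathbb{L}}(\onu^-) \leq \bar{\mathscr{U}}^{0}_{\mathbb{L}}(\onu^{\ddagger}) \leq \bar{\mathscr{U}}^{\epsilon}_{\mathbb{L}}(\onu^{\ddagger})$ holds for every $\epsilon \in (0,\epsilon^*]$ without any limiting argument, the convex combination is immediately bounded by $\bar{\mathscr{U}}^{\epsilon}_{\mathbb{L}}(\onu^{\ddagger})$, uniformly over $\onu \in (\onu^{\dagger},\onu^{\ddagger})$. This is the missing idea: replace the problematic left anchor $\onu^{\dagger}$ by a much smaller $\onu^-$ at which an $\epsilon$-independent a priori upper bound is available, so that no right-continuity or strict-gap argument is needed.
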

\begin{proof}
From \eqref{eq:minimization}, \eqref{LP-constraint-c}, and the fact that $\min_{s\in\mathbb{S}}\mu(s)$ is positive, we get
\begin{equation}
\label{eq:UUpperBound}
\bar{\mathscr{U}}^{\epsilon}_{\mathbb{L}}(\onu) \leq \frac{1}{\min_{s\in\mathbb{S}}\mu(s)} \onu, \quad \onu\in(0, \onu^*),\ 
\epsilon\in[0,1]
\end{equation}
We can find $\onu^{-}$ in $(0,\onu^\dagger)$ such that the following inequality holds:
\begin{equation}
\label{eq:UUpperBound2}
\frac{1}{\min_{s\in\mathbb{S}}\mu(s)} \onu^- \leq \bar{\mathscr{U}}^{0}_{\mathbb{L}}(\onu^\ddagger)
\end{equation}


Since $\onu^-$ is stabilizable, a stabilizing policy $\theta\in\Theta_S(\onu^-)$ exists. By \cite[Lemma 2]{Lin2019Scheduling-task}, $\phi^\theta:=\mathscr{Y}(\theta)$ has non-zero probability to choose to work at state $(1,\cA)$ and $\onu^{\phi^\theta}=\onu^-$ by~\eqref{eqs:InequalitiesServiceRate-a}. We construct an $\ell^{\onu^-}$ by the following definitions:
\beqan
\ell^{\onu^-}_{\oby,\mathcal{W}} &\Eqdef& \opi^{\phi^\theta}(\oby)\phi^\theta(\oby)\\
\ell^{\onu^-}_{\oby,\mathcal{R}} &\Eqdef& \opi^{\phi^\theta}(\oby)\big(1-\phi^\theta(\oby)\big) \quad \oby \in \overline{\mathbb{Y}}
\eeqan
It is clear that $\ell^{\onu^-}$ satisfies \eqref{LP-constraint-b} with some positive $\epsilon^*$ since $\phi^\theta(1,\cA) > 0$ and all other constraints in (\ref{LP-Definition}).
Therefore, $\mathbb{L}^{\epsilon^*}(\onu^-)$ is nonempty for some positive $\epsilon^*$. Consequently, we can further invoke Proposition~\ref{prop:PhiEpsilonNonEmpty} to infer that $\mathbb{L}^{\epsilon}(\onu^-)$ and $\mathbb{L}^{\epsilon}(\onu^\ddagger)$ are nonempty for every $\epsilon$ in $[0,\epsilon^*]$. Now, let $\epsilon$ be an arbitrary selection in $(0,\epsilon^*]$ and $\ell^{\onu^-}$ 
and $\ell^{\onu^\ddagger}$ be elements of $\mathbb{L}^{\epsilon}(\onu^-)$ and $\mathbb{L}^{\epsilon}(\onu^\ddagger)$, respectively. From (\ref{LP-Definition}) we conclude that, for any $\onu$ 
in $(\onu^\dagger, \onu^\ddagger)$, $\ell^{\onu} 
\Eqdef \big( (\onu-\onu^-) \ell^{\onu^\ddagger} + 
(\onu^\ddagger-\onu) \ell^{\onu^-} \big) / (\onu^\ddagger-\onu^-)$ 
satisfies (\ref{LP-constraint-b})-(\ref{LP-constraint-e}).
From \eqref{eq:minimization} and the definition of $\mathbb{L}^{\epsilon}(\onu^-)$ and $\mathbb{L}^{\epsilon}(\onu^\ddagger)$, we use $\ell^{\onu}\in\mathbb{L}^{\epsilon}(\onu)$ and obtain the following inequality:
\beqan
\bar{\mathscr{U}}^{\epsilon}_{\mathbb{L}}(\onu) \myleq \frac{\onu-\onu^-}{\onu^\ddagger-\onu^-}\bar{\mathscr{U}}^{\epsilon}_{\mathbb{L}}(\onu^\ddagger) + \frac{\onu^\ddagger-\onu}{\onu^\ddagger-\onu^-}\bar{\mathscr{U}}^{\epsilon}_{\mathbb{L}}(\onu^-)
\eeqan
Furthermore, from \eqref{eq:UUpperBound} and \eqref{eq:UUpperBound2}, the following inequalities hold, which completes the proof:
\beqan
\bar{\mathscr{U}}^{\epsilon}_{\mathbb{L}}(\onu) 
\myleq \frac{\onu-\onu^-}{\onu^\ddagger-\onu^-}\bar{\mathscr{U}}^{\epsilon}_{\mathbb{L}}(\onu^\ddagger) + \frac{\onu^\ddagger-\onu}{\onu^\ddagger-\onu^-}\bar{\mathscr{U}}^{0}_{\mathbb{L}}(\onu^\ddagger) 
\leq \bar{\mathscr{U}}^{\epsilon}_{\mathbb{L}}(\onu^\ddagger).
\eeqan
\end{proof}

The following theorem establishes important structural properties for $\bar{\mathscr{U}}^0_{\mathbb{L}}$. We provide a proof of the theorem in Appendix~\ref{app:StructuralResults}.

\begin{theorem}
\label{thm:U0Convex}
The $0$-LP utilization rate function ${\bar{\mathscr{U}}^0_{\mathbb{L}} : [0,\onu^*] \rightarrow [0,1]}$ is non-decreasing, piecewise affine and convex.
\end{theorem}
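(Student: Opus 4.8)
The plan is to work directly with the LP~(\ref{LP-Definition}) for $\epsilon=0$, since $\bar{\mathscr{U}}^0_{\mathbb{L}}(\onu)$ is by definition the optimal value of that LP as a function of the right-hand-side parameter $\onu$ appearing in the single equality constraint~(\ref{LP-constraint-c}). The key observation is that $\onu$ enters only this one constraint, linearly, and only on the right-hand side; all other constraints (the nonnegativity $\ell\in\mathbb{L}$, the balance equations~(\ref{LP-constraint-e}), the normalization~(\ref{LP-constraint-d}), and~(\ref{LP-constraint-b}) with $\epsilon=0$, which is vacuous) are independent of $\onu$. Thus $\bar{\mathscr{U}}^0_{\mathbb{L}}$ is the optimal-value function of a parametric linear program in which the parameter appears affinely in the RHS of the constraints. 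Standard parametric LP / sensitivity theory then gives that this value function is piecewise affine and convex on the region of $\onu$ for which the LP is feasible; I would cite this (e.g.\ via LP duality: the dual feasible region is fixed and independent of $\onu$, and $\bar{\mathscr{U}}^0_{\mathbb{L}}(\onu)$ equals the supremum over the fixed dual polytope of a linear functional whose only $\onu$-dependent term is $\onu$ times the dual multiplier of~(\ref{LP-constraint-c}), hence a supremum of affine functions of $\onu$, therefore convex and, since the feasible set is a polytope with finitely many vertices, piecewise affine).

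For the domain issue, I would first record that the LP is feasible for every $\onu$ in $[0,\onu^*]$: feasibility for $\onu$ in $(0,\onu^*)$ follows from Proposition~\ref{prop:PhiEpsilonNonEmpty} together with the construction in the proof of Proposition~\ref{prop:U-OnuNonIncrease} (which exhibits an explicit feasible $\ell$ built from the stationary PMF of a stabilizing policy via $\mathscr{Y}$); feasibility at $\onu=\onu^*$ follows from \cite[Lemma~7]{Lin2019Scheduling-task} as already invoked in the proof of Proposition~\ref{prop:PhiEpsilonNonEmpty}; and feasibility at $\onu=0$ is witnessed by the policy that always rests (so $\ell_{\oby,\mathcal{W}}=0$ for all $\oby$ with $\mathcal{W}$ admissible at a non-busy state, which makes~(\ref{LP-constraint-c}) read $0=0$). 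Feasibility on the whole interval also guarantees the value function is finite and real-valued there, so the piecewise-affine/convex conclusion applies on all of $[0,\onu^*]$ and no boundary pathologies arise.

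Finally, for monotonicity (non-decreasing), the cleanest route is to re-use the inequality~(\ref{eq:UUpperBound})–(\ref{eq:UUpperBound2}) machinery: given $\onu_1<\onu_2$ in $[0,\onu^*]$, pick $\onu^-<\onu_1$ small enough that $\frac{1}{\min_s\mu(s)}\onu^-\le \bar{\mathscr{U}}^0_{\mathbb{L}}(\onu_2)$ (possible since $\bar{\mathscr{U}}^0_{\mathbb{L}}(\onu_2)>0$ whenever $\onu_2>0$, and the case $\onu_2=0$ forces $\onu_1=0$ trivially), take feasible solutions $\ell^{\onu^-}\in\mathbb{L}^0(\onu^-)$ and $\ell^{\onu_2}\in\mathbb{L}^0(\onu_2)$, and form the convex combination $\ell^{\onu_1}:=\big((\onu_1-\onu^-)\ell^{\onu_2}+(\onu_2-\onu_1)\ell^{\onu^-}\big)/(\onu_2-\onu^-)$, which is feasible for parameter $\onu_1$; its objective value is a convex combination of $\bar{\mathscr{U}}^0_{\mathbb{L}}(\onu_2)$ and $\bar{\mathscr{U}}^0_{\mathbb{L}}(\onu^-)\le \frac{1}{\min_s\mu(s)}\onu^-\le \bar{\mathscr{U}}^0_{\mathbb{L}}(\onu_2)$, hence at most $\bar{\mathscr{U}}^0_{\mathbb{L}}(\onu_2)$, giving $\bar{\mathscr{U}}^0_{\mathbb{L}}(\onu_1)\le\bar{\mathscr{U}}^0_{\mathbb{L}}(\onu_2)$. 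I expect the main obstacle to be bookkeeping rather than conceptual: one must be careful that the convex-combination argument for monotonicity needs a genuine $\onu^-$ strictly below $\onu_1$ (the trick used throughout \S\ref{sec:ContinuityProperties}), and that invoking a black-box parametric-LP theorem for convexity/piecewise-affineness requires stating precisely that only the RHS of the constraints, not the feasible polytope's shape or the objective vector, depends on $\onu$; once that is phrased correctly, the rest is routine.
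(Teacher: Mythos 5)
Your proposal is correct, and it takes a genuinely different route from the paper's. The paper does not use parametric LP / duality at all: it first invokes \cite[Theorem~4.4]{Altman1999Constrained-Mar} to reduce the LP~(\ref{LP-Definition}) to an optimization over the set $\Phi^\dagger$ of policies that are deterministic except for randomizing at (at most) one state, and then, in Lemmas~\ref{lem:linear1}--\ref{lem:linear3}, shows that the service-rate/utilization-rate pair of every policy in $\Phi^\dagger$ is a convex combination of the finitely many threshold-policy rate pairs $\big(\onu^{\phi_\tau},\bar{\mathscr{U}}(\phi_\tau)\big)$ and the point $(0,0)$. This identifies $\bar{\mathscr{U}}^0_{\mathbb{L}}$ as the lower boundary of a polygonal convex hull, which gives non-decreasing, piecewise-affine, and convex in one stroke \emph{and} directly yields the graphical computation method of \S\ref{subsec:GraphicalExample} (i.e., which finitely many points to plot and hull). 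Your argument instead treats $\bar{\mathscr{U}}^0_{\mathbb{L}}$ as the optimal-value function of a right-hand-side--parametric LP: the parameter $\onu$ enters only constraint~(\ref{LP-constraint-c}), the dual polyhedron is fixed, and the value is a finite max over dual vertices of affine functions of $\onu$, hence convex and piecewise affine on the (nonempty, bounded) feasibility interval. This is shorter, more standard, and arguably cleaner for establishing the theorem as stated, but it does not produce the threshold-policy corner points that the paper needs elsewhere.

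Two minor remarks. First, once you have convexity together with $\bar{\mathscr{U}}^0_{\mathbb{L}}(0)=0$ (which you verify when checking feasibility at $\onu=0$, since $\mu>0$ forces every $\ell_{\oby,\mathcal{W}}=0$) and $\bar{\mathscr{U}}^0_{\mathbb{L}}\geq 0$, monotonicity on $[0,\onu^*]$ is automatic: for $0<a<b$, convexity gives $\bar{\mathscr{U}}^0_{\mathbb{L}}(a)\leq \frac{a}{b}\bar{\mathscr{U}}^0_{\mathbb{L}}(b)\leq \bar{\mathscr{U}}^0_{\mathbb{L}}(b)$. So your separate convex-combination argument for monotonicity, while correct, is not strictly needed. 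Second, be explicit that boundedness of the LP (and hence strong duality) follows from $0\leq \sum_{\oby}\ell_{\oby,\mathcal{W}}\leq \sum_{\oby,\overline{a}}\ell_{\oby,\overline{a}}=1$, so the value function is finite on all of $[0,\onu^*]$; you gesture at this but the line is worth including.
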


\subsection{A Graphical Method and Numerical Example}
\label{subsec:GraphicalExample}

 We proceed to describe a method to obtain $\bar{\mathscr{U}}^{0}_{\mathbb{L}}(\overline{\nu})$ graphically. The main idea is to use our proof for Theorem~\ref{thm:U0Convex}~(Appendix~\ref{app:StructuralResults})~to establish the following \underline{three-step method}:
 \\
 \noindent {\bf (Step 1)}~Compute $\onu^{\phi_\tau}$ and $\bar{\mathscr{U}}(\phi_\tau)$ for all $\tau$ in $\{1,\ldots,n_s+1\}$. 
 \\
 \noindent {\bf (Step 2)}~Identify the convex hull of the set $\big\{ \big ( \onu^{\phi_\tau}, \bar{\mathscr{U}}(\phi_\tau) \big) : \tau \in  \{1,\ldots,n_s+1\} \big\}$. 
 \\
 \noindent {\bf (Step 3)}~Determine $\bar{\mathscr{U}}^{0}_{\mathbb{L}}:[0,\onu^*] \rightarrow [0,1]$ as the lower boundary of the convex hull.

We will use the following example to illustrate our method, and to motivate the observations at the end of this section.
\begin{exmp}
\label{ex:graphicalmethod}
Consider that the system is characterized by $n_s=5$ and the following transition probabilities for $S_k$, which approximate the differential equation that describes the server state evolution in \cite{Savla2012A-Dynamical-Que}:
\beqan
\rho_{s,s+1} \Eqdef \frac{1}{5}\bigg(1-\frac{s-1}{n_s-1}\bigg), 
\ 
\rho_{s,s-1} \Eqdef \frac{1}{5}\bigg(\frac{s-1}{n_s-1}\bigg)
\eeqan
 The service rate function $\big( \mu(1),\ldots,\mu(5) \big )$ is set to be $ ( 0.01, 0.5, 0.2, 0.5, 0.05 )$. 
\end{exmp}

We now proceed to apply the graphical method to our example.
The following table lists the results obtained from step~1.

\begin{table}[h!]
\centering
\begin{tabular}{|c||c|c|c|c|c|c|}
\hline 
{\large $\tau$} & 1 & 2 & 3 & 4 & 5 & 6 \\ \hline
{\large $\onu^{\phi_{\tau}}$} 
    & 0.0000 & 0.0347 & 0.1993 
    & 0.1947 & 0.3000 & 0.0500  \\  \hline
\vspace{-0.07in} & & & & & & \\
{\large $\bar{\mathscr{U}}(\phi_{\tau})$}
    & 0.0000 & 0.2383 & 0.4309 
    & 0.6316 & 0.8571 & 1.0000 \\ \hline
\end{tabular}
\vspace{.1in}
\caption{Results of step~1 applied to Example~\ref{ex:graphicalmethod}}
\end{table}


The pairs $\big ( \onu^{\phi_\tau}, \bar{\mathscr{U}}(\phi_\tau) \big)$, for $\tau$ in $\{1,\ldots,n_s+1\}$, are the centers of the dark-red circles in the following figure, and the shaded area is their convex hull, whose construction is step~2. Finally, as described in step~3 and represented in the figure, the lower boundary of the convex hull is $\bar{\mathscr{U}}^{0}_{\mathbb{L}}:[0,\onu^*] \rightarrow [0,1]$.
\begin{center}


\begin{tikzpicture}[xscale=15, yscale=2.625]

\coordinate (T1) at (0,0);
\coordinate (T2) at (0.0347,0.2383);
\coordinate (T3) at (0.1993,0.4309);
\coordinate (T4) at (0.1947,0.6316);
\coordinate (T5) at (0.300,0.8571);
\coordinate (T6) at (0.05,1);

\draw[->,very thick,>=stealth] (0,0) -- (0.35,0) node[anchor=north]{ $\onu$};

\draw[->,very thick,>=stealth] (0,0) -- (0,1.1);

\node[anchor=east ] at (0,0) {\small $(0,0)$};

\draw[thick] (0.005,1) -- (-0.005,1) node[anchor=east]{\small $1$};

\draw[thick] (0.005,0.8571) -- (-0.005,0.8571) node[anchor=east]{\small $0.857$};

\draw[thick] (0.005,0.4309) -- (-0.005,0.4309) node[anchor=east]{\small $0.431$};

\draw[thick] (0.300,0.025) -- (0.300,-.025) node[anchor=north]{\small $0.3$};

\draw[thick] (0.1993,0.025) -- (0.1993,-.025) node[anchor=north]{\small $0.2$};

\draw[fill=black!6,draw=black!25,thin] (T1) -- (T3) -- (T5) -- (T6) -- (T1);

\draw[very thick,teal!75!black] (T1) -- (T3) -- (T5) node[midway,below,xshift=10]{\large $\bar{\mathscr{U}}^{0}_{\mathbb{L}}(\overline{\nu})$};

\draw[thick,red!75!black] (T1) ellipse (0.004 and 0.022857) node[anchor=north,xshift=5] at (T1) {\small $\tau=1$};

\draw[thick,red!75!black] (T3) ellipse (0.004 and 0.022857) node[anchor=north,yshift=-2] at (T3) {\small $\tau=3$};

\draw[thick,red!75!black] (T5) ellipse (0.004 and 0.022857) node[anchor=south,yshift=2] at (T5) {\small $\tau=5$};

\draw[thick,red!75!black] (T2) ellipse (0.004 and 0.022857) node[anchor=south west,xshift=2] at (T2) {\small $\tau=2$};

\draw[thick,red!75!black] (T4) ellipse (0.004 and 0.022857) node[anchor=south,yshift=2] at (T4) {\small $\tau=4$};

\draw[thick,red!75!black] (T6) ellipse (0.004 and 0.022857) node[anchor=south,yshift=2] at (T6) {\small $\tau=6$};
\end{tikzpicture}

\end{center}



Our analysis for this example also leads to the following observations. (i)~As established by \cite[(8) and Theorems~1 and~2]{Lin2019Scheduling-task}, $\onu^{\phi}$ is maximized by a threshold policy. For our example, $\onu^*$ is $0.3$, which is achieved for $\phi_\tau$ when $\tau$ is $5$. (ii)~The corner points of $\bar{\mathscr{U}}^{0}_{\mathbb{L}}$ are among the pairs obtained in step~1. (iii)~As our example illustrates, $\onu^{\phi_{\tau}}$ is not necessarily monotonic with respect to $\tau$.

\section{Key Distributional Convergence Results: A Potential-like Approach}

\label{sec:DistributionalConvergence}

We start with the following lemma that is applicable for any time-homogeneous finite Markov chain. It establishes the existence of a potential-like function that will be useful later on. The proof of the lemma is deferred to Appendix~\ref{subsec:ProofOfLemmaPotential}.

\begin{lemma} \label{lem:potential} 
Let a time-homogeneous Markov chain $M:=\{M_k : 
k \in \mathbb{N} \}$ taking values in a finite 
set $\mathbb{M}$ and a reward function 
$\mathcal{R} : \mathbb{M} \times \mathbb{M} 
\rightarrow \mathbb{R}_+$ be given. If $M$ has 
a unique recurrent communicating class, there exists 
a map $\mathscr{H}:\mathbb{M} \rightarrow 
\mathbb{R}_+$, which we designate as 
potential-like, for which the following holds 
for every $m$ in $\mathbb{M}$:
\begin{eqnarray}
&& \myhb 
\EX \Big[ \mathcal{R}(M_{k+1},M_k) \ | \ M_k = m \Big] 
	\label{eq:PF1} \\ 
&& \myhb 
= \EX \Big[ \mathscr{H}(M_{k+1}) - \mathscr{H}(M_{k}) 
	\ | \ M_k = m \Big] + r_{avg},
	\nonumber
\end{eqnarray} 
where the average reward $r_{avg}$ can be computed 
using the stationary PMF $\varrho^M :  
\mathbb{M} \rightarrow [0,1]$ of $M$ as 
\beqan
r_{avg} 
:= \sum_{m \in \mathbb{M}} 
	\EX \Big[ \mathcal{R}(M_{k+1},M_k) 
		\ | \ M_k = m \ \Big] \varrho^M(m).
\eeqan
\end{lemma}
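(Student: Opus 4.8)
The plan is to construct $\mathscr{H}$ explicitly as a discounted sum of centered rewards along trajectories, passed through a Cesàro-type limit, and then verify the Poisson-type equation \eqref{eq:PF1} by a one-step conditioning argument. First I would reduce to the recurrent class: since $M$ has a unique recurrent communicating class $\mathbb{C}$, from any state $m\in\mathbb{M}$ the chain reaches $\mathbb{C}$ in finite expected time, so the quantity $g(m):=\EX[\mathcal{R}(M_{k+1},M_k)\mid M_k=m]$ is a bounded function on a finite set, and the stationary PMF $\varrho^M$ is unique with support $\mathbb{C}$; hence $r_{avg}=\sum_{m}g(m)\varrho^M(m)$ is well defined and equals the long-run average reward by the ergodic theorem for finite Markov chains.

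Next I would define, for $m\in\mathbb{M}$,
\[
\mathscr{H}_0(m) := \sum_{t=0}^{\infty} \Big( \EX\big[\, g(M_t)\mid M_0=m\,\big] - r_{avg} \Big),
\]
and argue this series converges absolutely: writing $P$ for the transition matrix, $\EX[g(M_t)\mid M_0=m]=(P^t g)(m)$, and because the chain restricted to its single aperiodic-on-$\mathbb{C}$-or-not recurrent class still has $P^t g \to r_{avg}\mathbbm{1}$ in the Cesàro sense, one gets geometric decay of the partial-sum fluctuations after passing to the resolvent; concretely, the standard device is to set $\mathscr{H}(m) := \lim_{n\to\infty}\sum_{t=0}^{n-1}\big((P^t g)(m)-r_{avg}\big)$, which exists because $g-r_{avg}\mathbbm{1}$ is orthogonal to the stationary PMF and $(I-P)$ is invertible on that subspace — equivalently $\mathscr{H}=(I-P+ \mathbbm{1}\varrho^M)^{-1}(g-r_{avg}\mathbbm{1})$ up to an additive constant. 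I would then add a large enough constant so that $\mathscr{H}$ maps into $\mathbb{R}_+$, which is harmless since \eqref{eq:PF1} only involves differences $\mathscr{H}(M_{k+1})-\mathscr{H}(M_k)$.

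With $\mathscr{H}$ in hand, the verification of \eqref{eq:PF1} is the routine telescoping step: from the definition, $\mathscr{H}(m) = g(m) - r_{avg} + \sum_{t=1}^{\infty}\big((P^t g)(m)-r_{avg}\big) = g(m)-r_{avg} + (P\mathscr{H})(m)$, i.e. $(P\mathscr{H})(m) - \mathscr{H}(m) = r_{avg} - g(m)$, which rearranges to
\[
g(m) = \EX\big[\mathscr{H}(M_{k+1})-\mathscr{H}(M_k)\mid M_k=m\big] + r_{avg},
\]
exactly \eqref{eq:PF1} after recalling $g(m)=\EX[\mathcal{R}(M_{k+1},M_k)\mid M_k=m]$ and using time-homogeneity. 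The main obstacle I expect is the convergence/well-definedness of the series for $\mathscr{H}$: one must handle states outside the recurrent class (where $P^t g$ converges to $r_{avg}$ but the transient contributions must be shown summable) and, if the recurrent class is periodic, replace the plain partial sums by their Cesàro averages — equivalently, invoke the Drazin/group inverse of $I-P$ — to guarantee a genuine limit rather than merely a bounded oscillation. Everything else is bookkeeping, and since the paper defers the actual proof to Appendix~\ref{subsec:ProofOfLemmaPotential}, I would there present the resolvent-inverse construction cleanly and remark that positivity of $\mathscr{H}$ is obtained by translation.
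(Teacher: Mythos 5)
Your proposal is correct, but it takes a genuinely different route from the paper. You solve the Poisson equation $(I-P)\mathscr{H} = g - r_{\mathrm{avg}}\mathbbm{1}$ via the fundamental matrix: $\mathscr{H} = (I - P + \mathbbm{1}\varrho^M)^{-1}(g - r_{\mathrm{avg}}\mathbbm{1})$, observing that $\varrho^M Z = \varrho^M$ forces $\varrho^M \mathscr{H} = 0$ so the extra rank-one term vanishes when you apply $(I-P)$, and then translating to ensure nonnegativity. This is the classical Kemeny--Snell / Drazin-inverse construction and it handles periodicity and transient states for free (your hedge about switching to Cesàro averages is unnecessary once you commit to the resolvent form, since $I-P+\mathbbm{1}\varrho^M$ is nonsingular whenever the stationary distribution is unique). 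The paper instead pins the anchor value $f(n^\star)=0$ at a recurrent state, solves the reduced $(|\mathbb{M}|-1)$-dimensional linear system $\mathbf{B}\mathbf{f}=\boldsymbol{\xi}$, proves $\mathbf{B}$ is nonsingular by showing it is weakly chained diagonally dominant, and then verifies the balance identity separately at $n^\star$ using $\sum_m \varrho^M(m)\,\EX[f(M_{k+1})-f(M_k)\mid M_k=m]=0$. The paper's route is more elementary and self-contained (it avoids invoking fundamental-matrix theory), at the cost of an explicit extra step checking the anchor state; your route is shorter but leans on a heavier, though standard, piece of machinery.
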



The following lemma is the first step towards proving Theorem~\ref{thm:MainDistConv}, which is the main result of this section.

\begin{lemma}	\label{lem:QGoesToZero} 
Let $\lambda$ in $(0,\onu^*)$ and $\epsilon$ in 
$(0,1)$ be given. If $\Phi^{\epsilon}_R(\lambda)$ 
is nonempty, there is a positive constant 
$\beta_{\lambda,\epsilon}$ such that the following 
inequality holds for every  $\onu \in (\lambda,
\onu^*)$:
\beqa
\label{eq:ZeroQueueProbBound}
\sum_{s \in \mathbb{S}} \pi^{\theta} (s,\cA,0) \leq \frac{(\onu - \lambda)} {\beta_{\lambda,\epsilon}}, \quad \ \theta \in \mathscr{X} \Phi_R^{\epsilon}(\onu)
\eeqa 
\end{lemma}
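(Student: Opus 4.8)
The statement asserts a uniform (in $\onu$) control on the stationary probability that the queue is empty, for policies of the form $\theta = \mathscr{X}(\phi)$ with $\phi \in \Phi_R^{\epsilon}(\onu)$. The natural route is a \emph{rate-balance} (flow-conservation) argument applied to the queue component of the stationary chain $\bX^{\theta}$, exploiting the fact that $\theta$ rests whenever $Q_k=0$ and works whenever $Q_k>0$. Concretely, the expected one-step drift of $Q_k$ in stationarity must be zero, which will relate the arrival rate $\lambda$, the departure rate (which equals the service rate $\nu^{\theta}=\lambda$ by~\eqref{eqs:InequalitiesServiceRate-a}), and the mass on the empty-queue states. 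The gap $\onu-\lambda$ should enter because $\onu^{\phi}=\onu$ is the service rate of the \emph{auxiliary} chain $\obY^{\phi}$ — i.e., the rate the server would deliver if it always had work — whereas in $\bX^{\theta}$ the server idles on the empty-queue states, and the difference between these two service rates is precisely carried by the empty-queue mass.

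\textbf{Key steps, in order.} First I would write the stationary flow-balance identity for the queue: summing the transition probabilities in~\eqref{WTransitionDef} against $\bpi^{\theta}$ and using $\sum_{\bx}\pi^{\theta}(\bx)(\text{up-rate} - \text{down-rate}) = 0$, one gets $\lambda = \nu^{\theta} = \sum_{\bx} \pi^{\theta}(\bx)\theta(\bx)\mu(s)$, which is just~(i) of~\eqref{eqs:InequalitiesServiceRate-a}. Second — and this is the crux — I would compare $\nu^{\theta}$ with $\onu^{\phi}$. Using the structure $\vartheta^{\phi}(s,w,q) = \phi(s,w)$ for $q>0$ and $=0$ for $q=0$, together with the relation~\eqref{equivalenceXandY} identifying the server-state transition kernel of $\bX$ on $\{q>0\}$ with that of $\obY$, I expect to show an identity of the form
\begin{equation*}
\onu = \onu^{\phi} = \nu^{\theta} + \big(\text{correction involving } \textstyle\sum_{s}\pi^{\theta}(s,\cA,0)\big),
\end{equation*}
where the correction is nonnegative and bounded below by $\beta_{\lambda,\epsilon}\sum_{s}\pi^{\theta}(s,\cA,0)$ for a constant $\beta_{\lambda,\epsilon}>0$ that does \emph{not} depend on $\onu$. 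Since $\nu^{\theta}=\lambda$, rearranging yields~\eqref{eq:ZeroQueueProbBound}. Third, I would extract $\beta_{\lambda,\epsilon}$: the constant should come from the fact that $\phi \in \Phi_R^{\epsilon}(\onu)$ forces $\phi(1,\cA)\ge\epsilon$, plus the fixed, $\onu$-independent server parameters ($\mu$, the $\rho$'s), so that the "cost" of each unit of empty-queue mass in terms of lost service rate is uniformly bounded away from zero. A compactness/continuity argument over the (closed) set of admissible $\phi$ with $\phi(1,\cA)\ge\epsilon$ and fixed service rate, or an explicit renewal-type estimate on excursions from $(1,\cA,0)$, would furnish $\beta_{\lambda,\epsilon}>0$.

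\textbf{Main obstacle.} The delicate point is establishing the exact bookkeeping in the second step — namely that the deficit $\onu - \lambda = \onu^{\phi} - \nu^{\theta}$ is \emph{exactly} (or at least proportionally) the stationary rate at which the server sits idle on empty-queue states, \emph{and} that the proportionality constant is uniform in $\onu$. One has to be careful that $\bpi^{\theta}$ itself depends on $\onu$ (through $\phi$), so the argument cannot simply fix a chain; instead it must bound the per-excursion or per-visit contribution uniformly. I anticipate that the cleanest way is: (a) write $\onu^{\phi} = \sum_{\oby}\mu(\os)\phi(\oby)\opi^{\phi}(\oby)$ and relate $\opi^{\phi}$ to the conditional law of $\bY_k^{\theta}$ given $Q_k>0$ via a ratio identity analogous to the definition of $\mathscr{Y}$; (b) observe that when the server is available at an empty queue it is forced to rest, so the only way $\bX^{\theta}$ differs from "$\obY$ running freely" is through the empty-queue sojourns; (c) bound the frequency of entries into $\{q=0\}$ and the expected service rate lost per sojourn from below by a positive constant depending only on $\epsilon$ and the fixed parameters. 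Getting the constant genuinely independent of $\onu$ — rather than merely positive for each $\onu$ — is where the real work lies, and I would lean on~\cite[Lemmas~2, 3 and~4]{Lin2019Scheduling-task} and Proposition~\ref{prop:SufficientCondStab} to handle the structural facts, reserving the uniform lower bound for a short self-contained estimate.
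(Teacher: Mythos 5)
Your plan is genuinely different from the paper's, and unfortunately the central step does not close as stated. Let me spell out where it breaks.

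You write the deficit as $\onu - \lambda = \onu^{\phi} - \nu^{\theta}$ and hope to show it is proportional to $\sum_s \pi^{\theta}(s,\cA,0)$. Writing out the right-hand side exactly gives
\begin{equation*}
\onu^{\phi} - \nu^{\theta}
\;=\;
\underbrace{\Big[\sum_{\by}\opi^{\phi}(\by)\mu(s)\phi(\by) - \sum_{\by}\varrho^{\theta}(\by)\mu(s)\phi(\by)\Big]}_{\text{(I) distributional mismatch}}
\;+\;
\underbrace{\sum_{s}\pi^{\theta}(s,\cA,0)\,\mu(s)\phi(s,\cA)}_{\text{(II) idle contribution}},
\end{equation*}
where $\varrho^{\theta}(\by):=\sum_{q}\pi^{\theta}(\by,q)$ is the $\mathbb{Y}$-marginal of $\pi^{\theta}$. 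Term~(II) does \emph{not} dominate $\sum_s\pi^{\theta}(s,\cA,0)$, because the coefficient $\mu(s)\phi(s,\cA)$ vanishes at every $s$ where $\phi(s,\cA)=0$ (and such $s$ typically exist for interesting $\phi$). Term~(I) is not small by fiat: $\opi^{\phi}$ and $\varrho^{\theta}$ are stationary distributions of \emph{different} chains ($\obY^{\phi}$ runs freely; $\bX^{\theta}$ is forced to rest on $\{q=0\}$), and controlling $\|\opi^{\phi}-\bvarrho^{\theta}\|_1$ is precisely the content of Lemma~\ref{lem:DistConverge3} and Theorem~\ref{thm:MainDistConv}, which themselves \emph{consume} the constant $\beta_{\lambda,\epsilon}$ furnished by this lemma. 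So the ``ratio identity analogous to $\mathscr{Y}$'' you want for step~(a) is circular. The same obstacle appears if you instead use a Lyapunov function $\mathscr{V}(\bx)=q+\mathscr{H}(\by)$ directly in a stationary zero-drift balance: the identity collapses to $\sum_s\pi^{\theta}(s,\cA,0)\big(\onu + D_s\big)=\onu-\lambda$ with $D_s$ the expected change of $\mathscr{H}$ under a forced rest at $(s,\cA)$, and the coefficient $\onu+D_s$ is exactly $0$ whenever $\phi(s,\cA)=0$, so no uniform lower bound on the proportionality constant is available.

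The paper sidesteps this by never trying to prove the bound state-by-state out of a stationary flow balance. It builds the potential $\mathscr{H}$ for $\obY^{\phi}$ from Lemma~\ref{lem:potential}, forms $\mathscr{V}(\bx)=q+\mathscr{H}(\by)$, and uses the \emph{exact} negative drift $\EX[\mathscr{V}(\bX_{k}^\theta)-\mathscr{V}(\bx)\mid \bX_{k-1}^\theta=\bx]=\lambda-\onu$ for every $\bx$ with $q>0$ (this is where the constant $\onu$ from $\obY^{\phi}$ meets the constant $\lambda$ from the Bernoulli arrivals with \emph{no} distributional comparison needed). That drift, plus an optional-stopping/dominated-convergence argument, gives $\EX[\underbar{T}^\theta_{(s^*,\cA,1)}]\geq 1/(\onu-\lambda)$ for $s^*=\arg\max_s\mathscr{H}(s,\cA)$; inverting via Kac's formula bounds $\pi^{\theta}(s^*,\cA,0)$; and an explicit $2n_s$-step reachability estimate (uniform over $\phi\in\Phi_R^{\epsilon}(\onu)$ because $\phi(1,\cA)\geq\epsilon$) lifts that to $\sum_s\pi^{\theta}(s,\cA,0)$. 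Your own alternative suggestion — ``an explicit renewal-type estimate on excursions'' — is in fact the paper's route, so you were circling it; but the primary flow-balance route you lay out, taken on its own, does not yield a positive, $\onu$-independent $\beta_{\lambda,\epsilon}$.
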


Before we proceed with the proof of 
Lemma~\ref{lem:QGoesToZero}, we note that one 
should expect it to be somewhat involved because 
it needs to ascertain that the inequality 
in~(\ref{eq:ZeroQueueProbBound}) holds (uniformly) 
for all policies in $\mathscr{X} \Phi_R^{\epsilon}(\onu)$. 
We decided to include the proof below, as opposed to 
deferring it to an appendix, because we find it to 
involve an instructive use of a potential-like 
function guaranteed by Lemma~\ref{lem:potential} 
to exist.

\begin{proof}
Select $\onu$ in $(\lambda,\onu^*)$, and let $\phi$ be any policy in $\Phi_R^{\epsilon}(\onu)$, which we know from Proposition~\ref{prop:PhiEpsilonNonEmpty} is nonempty, and set $\theta = \mathscr{X}(\phi)$.
Recall that $\bX^{\theta}$ is stable by Proposition~\ref{prop:SufficientCondStab}.  
In our proof we will make use of Lemma~\ref{lem:potential} by selecting $M = \obY^{\phi}$ and $\mathcal{R}(\by',\by) = \mu(s)$ for all $\by'$ and $\by$ in $\mathbb{Y}$, where we recall that $\by:=(s,w)$. We define $s^* = \arg \max_{s \in \mathbb{S}} \mathscr{H}(s,\cA)$, where $\mathscr{H}$ is the potential-like map obtained from Lemma~\ref{lem:potential} for the aforementioned choices of $M$ and $\mathcal{R}$.

The following hitting time will be central in our proof:
\begin{equation*}
T_{\bx}^{\theta} 
:= \min \{k \geq 1 \ | \ \bX_k^{\theta}=(s^*,\cA,0), 
	\ \bX_0=\bx \},
\end{equation*} 
where we adopt the convention that $T^\theta_{\bx}$ is infinite if $\bX^\theta_k=(s^*,\cA,0)$ never occurs for $k \geq 1$. We will also use the following lower bound:
\begin{equation*}
\underbar{T}_{\bx}^{\theta} 
:= \min \{k \geq 1 \ | \ \mathscr{V}(\bX_k^{\theta}) 
	\leq v^*, \ \bX_0=\bx \},
\end{equation*} 
where $\mathscr{V}(\bx) : = q + \mathscr{H}(\by)$ and $v^* : = \mathscr{H}(s^*,\cA)$. Here, we also adopt the convention that $\underbar{T}_{\bx}^{\theta}$ is infinite if $\mathscr{V}(\bX_k^{\theta}) \leq v^*$ never occurs for $k \geq 1$. Notice that since $\mathscr{V}(s^*,\cA,0) = v^*$, we have
$\underbar{T}_{\bx}^{\theta} \leq T_{\bx}^{\theta}, 
	\quad \bx \in \mathbb{X}$.

Subsequently, we use $T_{\bx}^{\theta}$, 
$\underbar{T}_{\bx}^{\theta}$ and $\mathscr{V}$ 
to obtain a lower bound for 
$\EX[T^\theta_{(s^*,\cA,0)}]$~-~the return time of 
$(s^*,\cA,0)$~-~ which will ultimately lead to 
the proof of~(\ref{eq:ZeroQueueProbBound}).

As we argue subsequently, the following lower bound 
for $\EX[\underbar{T}^\theta_{(s^*,\cA,1)}]$, which 
we will derive later in this proof, leads to~(\ref{eq:ZeroQueueProbBound}) almost immediately:
\begin{equation}
\label{eq:lowerbdEXTy1}
\EX \big[ \underbar{T}_{(s^*,\cA,1)}^{\theta} \big]
\geq \frac{1}{\onu-\lambda}
\end{equation}

We start by using the law of total probability to conclude 
that the following inequality holds:
\begin{eqnarray*}
&& \hspace{-0.25in} 
\EX \big[ T_{(s^*,\cA,0)}^{\theta} \big]  
	\label{eq:lowerbdEXTx-1} \\ 
&& \hspace{-0.25in}
\geq (1 + \EX[T_{(s^*,\cA,1)}^{\theta}] ) 
	P_{\bx_1^{\theta} | \bX^\theta_0} \big( (s^*,\cA,1) | (s^*,\cA,0) \big) 
	\nonumber
\end{eqnarray*} 
which after substituting~(\ref{eq:lowerbdEXTy1}) and using 
the fact that $P_{\bX^\theta_1 | \bX^\theta_0} \big( (s^*,\cA,1) \ | \ (s^*,\cA,0) \big) = \lambda (1-\rho_{s^*,s^*-1})$ leads to:
\begin{equation}
\label{eq:lowerbdEXTx-2}
\EX \big[ T_{(s^*,\cA,0)}^{\theta} \big] 
\geq  (1-\rho_{s^*,s^*-1}) 
	\frac{1+\onu-\lambda}{\onu/\lambda-1}
\end{equation}

According to \cite[(3)~Theorem, p. 227]{Grimmett2001Probability-and}, (\ref{eq:lowerbdEXTx-2})  implies that:
\begin{equation}
\label{eq:upperbdPix-1}
\pi^{\theta}(s^*,\cA,0) 
\leq \frac{\onu/\lambda-1}{1-\rho_{s^*,s^*-1}}
\end{equation}
At this point we intend to use the following inequality to relate 
$\pi^{\theta}_{\lambda} (s^*,\cA,0)$ with $\sum_{s \in \mathbb{S}} 
\pi^{\theta}_{\lambda} (s,\cA,0)$ :
\begin{eqnarray*}
&& \hspace{-0.25in} \pi^{\theta} (s^*,\cA,0) 
	\label{eq:lowerbdPix-2} \\ 
&& \hspace{-0.25in} \geq \sum_{s \in \mathbb{S}} \pi^{\theta}(s,\cA,0) 
	P \Big ( \bX_{k+2n_s}^{\theta} = (s^*,\cA,0) 
		 \big|  \bX_{k}^{\theta} = (s,\cA,0) \Big ) 
	\nonumber
\end{eqnarray*}

Recall from Proposition \ref{prop:SufficientCondStab} 
that $\bX^{\theta}$ is irreducible.
Moreover, we can show that there is positive 
$\tilde{\beta}_{\lambda, \epsilon}$ satisfying
\begin{equation}
\label{eq:lowerbdCondP}
P \Big ( \bX_{k+2n_s}^{\theta} 
= (s',\cA,0) \ \big | \ \bX_{k}^{\theta} 
= (s,\cA,0) \Big ) \geq \tilde{\beta}_{\lambda,\epsilon}
\end{equation} 
for all $\theta$ in 
$\mathscr{X} \Phi_R^{\epsilon}(\onu)$.
For example, one can verify
\begin{eqnarray}
&& \hspace{-0.25in}
\tilde{\beta}_{\lambda,\epsilon}
= \epsilon \lambda (1-\lambda)^{2n_s}
\min_{i \in \mathbb{S}} \big( 1-\mu(i) \big)^{2{n_s}}
	\min_{j \in \mathbb{S}} \mu(j)
	\label{eq:BetaTildeDef} \\
&& \hspace{-0.25in}
\times \big( \Pi_{i=1}^{n_s-1} \rho_{i+1,i}\rho_{i,i+1} \big)\min_{i\in\mathbb{S}} \big((1-\rho_{i,i+1})(1- \rho_{i,i-1}) \big)^{2{n_s}}
	\nonumber
\end{eqnarray}
satisfies the inequality in \eqref{eq:lowerbdCondP}. 
The proof of (\ref{eq:ZeroQueueProbBound}) follows 
from~(\ref{eq:upperbdPix-1}) and (\ref{eq:lowerbdCondP}) 
with $\beta_{\lambda,\epsilon} : = \lambda (1 -
\rho_{s^*,s^*-1})\tilde{\beta}_{\lambda,\epsilon}$.

\paragraph*{\underline{Proof of~(\ref{eq:lowerbdEXTy1})}} We now proceed to proving that~(\ref{eq:lowerbdEXTy1}) holds. We start with the following equalities that hold for any $\bx$ satisfying $\mathscr{V}(\bx) > v^*$, 
which implies $q > 0$:
\begin{eqnarray}
\label{eq:Theorem3-eq-1}
&& \myhb 
\EX [ Q_k^{\theta} - q | \bX_{k-1}^{\theta} = \bx] = \lambda - \phi(s,w) \mu(s) \\
&& \myhb \EX [ \mathscr{H}(\bY_k^{\theta}) - \mathscr{H}(\by) | \bX_{k-1}^{\theta} = \bx] 
	\label{eq:Theorem3-eq-2} \\ 
&& = \EX [ \mathscr{H}(\obY_k^{\phi}) 
	- \mathscr{H}(\by) | \obY_{k-1}^{\phi} = \by] 
	\nonumber \\ 
&& \overset{\mbox{(i)}}{=} \phi(s,w) \mu(s) - \onu 
	\nonumber
\end{eqnarray}
In proving (\ref{eq:Theorem3-eq-1}) and (\ref{eq:Theorem3-eq-2}),  we used the fact that if $\mathscr{V}(\bx) > v^*$ holds, $q \geq 1$, which, since $\theta = \mathscr{X}(\phi)$, implies that the policy $\phi$ is applied. In addition, we used Lemma~\ref{lem:potential} to establish (i), where we used the fact that, for our choices of $M$ and $\mathcal{R}$, $r_{avg}$ is $\onu$. Hence, by adding the terms of (\ref{eq:Theorem3-eq-1}) and (\ref{eq:Theorem3-eq-2}) we conclude that the following holds when $\bx$ is such that $\mathscr{V}(\bx) > v^*$ holds:
\begin{equation}
\label{eq:Theorem3-eq-3}
\EX [ \mathscr{V}(\bX_k^{\theta}) - \mathscr{V}(\bx) | \bX_{k-1}^{\theta} = \bx] = \lambda - \onu .
\end{equation}

Because $\underbar{T}_{(s^*,\cA,1)}^{\theta} \geq k$ implies $\mathscr{V}(\bX_{k-1}^{\theta}) > v^*$, from 
\eqref{eq:Theorem3-eq-3}
we obtain
\begin{eqnarray}
&& \myhb \sum_{k=1}^{\infty} \EX \Big [ \big ( \mathscr{V}(\bX_k^{\theta}) - \mathscr{V}(\bX_{k-1}^{\theta}) \big ) \mathcal{I}_{\underbar{T}_{(s^*,\cA,1)}^{\theta} \geq k} \Big | \bX_0=(s^*,\cA,1) \Big ] \nonumber \\
\myeq \sum_{k=1}^{\infty} (\lambda - \onu)
	P \big( \underbar{T}_{(s^*,\cA,1)}^{\theta} \geq k 
		\big| \bX_0 = (s^*, \cA, 1) \big)
	\label{eq:Theorem3-eq-6} \\
\myeq (\lambda - \onu) 
	\EX [\underbar{T}_{(s^*,\cA,1)}^{\theta}  ], 
	\nonumber
\end{eqnarray} 
where $\mathcal{I}_{\underbar{T}_{(s^*,\cA,1)}^{\theta} \geq k}$ is $1$ when $\underbar{T}_{(s^*,\cA,1)}^{\theta} \geq k$ holds, and is $0$ otherwise.

From the definition of $\mathscr{V}(\bx) = q 
+ \mathscr{H}(\by)$, conditional on $\{ \bX_0 = 
(s^*, \cA, 1) \}$, for every $K$ in $\N$, 
\beqan
&& \myhb \sum_{k=1}^K \Big( \mathscr{V}(\bX_k^{\theta}) - \mathscr{V}(\bX_{k-1}^{\theta}) \Big ) \mathcal{I}_{\underbar{T}_{(s^*,\cA,1)}^{\theta} \geq k}  \lb 
\myleq \min\big\{ K, 
	\underbar{T}_{(s^*,\cA,1)}^{\theta}  \big\} 
	+ \max_{\by \in \mathbb{Y}} \mathscr{H}(\by)
	- v^* \lb 
\myleq \underbar{T}_{(s^*,\cA,1)}^{\theta} 
	+ \max_{\by \in \mathbb{Y}} \mathscr{H}(\by)
	- v^*. 
\eeqan
Since $\bX^\theta$ is positive recurrent, 
$\EX\big[ \underbar{T}_{(s^*,\cA,1)}^{\theta} \big]
< \infty$. Therefore, the dominated convergence
theorem \cite[pp.179-180]{Grimmett2001Probability-and} allows
us to interchange the order of the summation and 
the expectation in \eqref{eq:Theorem3-eq-6}. 
After the interchange, we have
\beqan
&& \myhb \EX \Big[ \sum_{k=1}^\infty \big ( \mathscr{V}(\bX_k^{\theta}) - \mathscr{V}(\bX_{k-1}^{\theta}) \big ) \mathcal{I}_{\underbar{T}_{(s^*,\cA,1)}^{\theta} \geq k} \Big | \bX_0=(s^*,\cA,1) \Big ] \lb
\myeq \EX \Big[ \mathscr{V}\Big( \bX_{\underbar{T}_{(s^*,\cA,1)}^{\theta}}^{\theta} \Big) 
- \mathscr{V}( \bX_0 ) \ \Big| \ \bX_0=(s^*,\cA,1) \Big ] \lb
\myeq (\lambda - \onu) \EX [\underbar{T}_{(s^*,\cA,1)}^{\theta}  ],
\eeqan
where $\bX_{\underbar{T}_{(s^*,\cA,1)}^{\theta} } ^{\theta}$ is $\bX_k^{\theta}$ at time $k = \underbar{T}_{(s^*,\cA,1)}^{\theta}$. 

The equality above leads to~(\ref{eq:lowerbdEXTy1}) once we realize that the following inequality holds:
\begin{eqnarray*}
&& \myhb \EX \Big [ \mathscr{V} \Big ( \bX_{\underbar{T}_{(s^*,\cA,1)}^{\theta} } ^{\theta} \Big ) 
	- \mathscr{V}(\bX_0) \ \Big | \ 
		\bX_0 = (s^*,\cA,1) \Big ]  \\ 
&& \myhb = \EX \Big [ \mathscr{V} \Big ( 
	\bX_{\underbar{T}_{(s^*,\cA,1)}^{\theta} }^{\theta} 
		\Big )  \ \Big | \ 
			\bX_0 = (s^*,\cA,1) \Big ] - (v^*+1) \\ 
&& \myhb \leq v^* - (v^*+1) = -1
\end{eqnarray*}
\end{proof}

\begin{theorem}  
\label{thm:MainDistConv} 
Let $\lambda$ in $(0,\onu^*)$ and 
$\epsilon$ in $(0,1)$ be given. Suppose 
$\Phi^{\epsilon}_R(\lambda)$ 
is nonempty, and let  
$\beta_{\lambda, \epsilon}$ be a positive constant
satisfying Lemma~\ref{lem:QGoesToZero}. Then, 
there is a positive constant $\eta_{\epsilon}$
such that the following inequality holds for all 
$\onu \in (\lambda, \onu^*)$:
for every $\phi$ in $\Phi_R^{\epsilon}(\onu)$, 
\begin{eqnarray}
&& \myhb \sum_{\by \in \mathbb{Y} } 
	\Bigg | \overline{\pi}^{\phi}(y) 
	- \sum_{q > 0 } \pi^{\mathscr{X}(\phi)}(y,q) 
	\Bigg |  
	\label{eq:MainDistConvIneq} \\ 
&& \myhb \leq \frac{\beta_{\lambda,\epsilon} 
	+ \eta_\epsilon}
	{\beta_{\lambda,\epsilon}}
		(\onu-\lambda)^{\frac{1}{2}}
	+ \frac{3}{\beta_{\lambda,\epsilon}}(\onu-\lambda).
	\nonumber
\end{eqnarray}
\end{theorem}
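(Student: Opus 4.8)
The plan is to compare $\obpi^{\phi}$ with the ``positive‑queue marginal'' of the stationary PMF of $\bX^{\mathscr{X}(\phi)}$ and to show the two are close because $\bX^{\mathscr{X}(\phi)}$ spends little time with an empty queue. First I would fix $\phi$ in $\Phi_R^{\epsilon}(\onu)$, set $\theta:=\mathscr{X}(\phi)$, and recall from Proposition~\ref{prop:SufficientCondStab} that $\bX^{\theta}$ is stable, irreducible and aperiodic, so $\pi^{\theta}$ is well defined; I write $\tilde{\pi}(\by):=\sum_{q\geq 1}\pi^{\theta}(\by,q)$ and $p_{0}:=\sum_{s\in\mathbb{S}}\pi^{\theta}(s,\cA,0)$, so that $\sum_{\by\in\mathbb{Y}}\tilde{\pi}(\by)=1-p_{0}$. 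Since $\Phi^{\epsilon}_{R}(\lambda)$ is nonempty, Lemma~\ref{lem:QGoesToZero} applies and gives $p_{0}\leq(\onu-\lambda)/\beta_{\lambda,\epsilon}$ for every $\onu$ in $(\lambda,\onu^{*})$; it therefore suffices to bound $\sum_{\by}\lvert\obpi^{\phi}(\by)-\tilde{\pi}(\by)\rvert$ by the right‑hand side of~(\ref{eq:MainDistConvIneq}), uniformly in $\phi$ and $\onu$.

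The second step is to derive a perturbed balance equation for $\tilde{\pi}$. Summing the global balance equations of $\bX^{\theta}$ over $q'\geq 1$ and invoking~(\ref{equivalenceXandY}) — which says that the $\mathbb{Y}$‑transitions of $\bX^{\theta}$ are exactly those of $\obY^{\phi}$ whenever the queue is positive — I obtain $\tilde{\pi}=\tilde{\pi}\,P_{\obY^{\phi}}+E$ for a signed measure $E$ on $\mathbb{Y}$ supported on transitions out of $\{q=0\}$ and into $\{q=0\}$ from $\{q=1\}$; flow balance across $\{q=0\}$ forces $\sum_{\by}E(\by)=0$, and each of the two pieces of $E$ has total mass $\lambda p_{0}$, so $\lVert E\rVert_{1}\leq 2\lambda p_{0}\leq 2\lambda(\onu-\lambda)/\beta_{\lambda,\epsilon}$. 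Since $\obY^{\phi}$ has a unique recurrent class with stationary PMF $\obpi^{\phi}$, the mean‑zero measure $w:=\tilde{\pi}-(1-p_{0})\obpi^{\phi}$ is the unique mean‑zero solution of $w(I-P_{\obY^{\phi}})=E$, and $\sum_{\by}\lvert\obpi^{\phi}(\by)-\tilde{\pi}(\by)\rvert\leq\lVert w\rVert_{1}+p_{0}$. It then only remains to convert $w(I-P_{\obY^{\phi}})=E$ into an $\ell_{1}$‑bound on $w$.

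This conversion is the heart of the matter, and the naive route — bounding $\lVert w\rVert_{1}$ by $\lVert E\rVert_{1}$ times the norm of the deviation matrix of $\obY^{\phi}$ — fails, because the mixing of $\obY^{\phi}$ is \emph{not} uniform over $\Phi^{\epsilon}_{R}(\onu)$: the constraint $\phi(1,\cA)\geq\epsilon$ does not preclude $\obY^{\phi}$ from being nearly reducible with arbitrarily small stationary mass on some states (although its diagonal entries stay bounded below by a constant depending only on the fixed parameters, so it is at least uniformly aperiodic). Instead I would argue probabilistically: couple $\obY^{\phi}$ with the $\mathbb{Y}$‑marginal of $\bX^{\theta}$ so that the two coincide for as long as the queue of $\bX^{\theta}$ stays positive, so that disagreements are created only during the rare visits of $\bX^{\theta}$ to $\{q=0\}$, which occupy a time‑fraction $\lambda p_{0}=O(\onu-\lambda)$. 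To turn ``rarely created disagreements'' into a small time‑averaged $\ell_{1}$‑discrepancy I would split trajectories according to whether the queue lies below or above a level $k^{\star}$ of order $(\onu-\lambda)^{-1/2}$: the occupation of the strip $\{1\leq q\leq k^{\star}\}$ is itself $O\big(k^{\star}(\onu-\lambda)\big)=O\big((\onu-\lambda)^{1/2}\big)$ — this is where the square root enters, and I expect it to follow from the drift identity~(\ref{eq:Theorem3-eq-3}), i.e.\ from the fact that $\mathscr{V}(\bx)=q+\mathscr{H}(\by)$ has uniform negative drift $\lambda-\onu$ above its floor $v^{\star}$, combined with an optional‑stopping/Cauchy--Schwarz estimate on excursions of $\mathscr{V}(\bX^{\theta}_{k})$ — while on the complement the $\mathbb{Y}$‑marginal has time to re‑equilibrate at a cost captured by an $\epsilon$‑dependent constant $\eta_{\epsilon}$ coming from the finite‑state ergodic behaviour of $\obY^{\phi}$. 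Choosing $k^{\star}$ to balance the two contributions produces the $(\onu-\lambda)^{1/2}$ term with coefficient $(\beta_{\lambda,\epsilon}+\eta_{\epsilon})/\beta_{\lambda,\epsilon}$ (the ``$1$'' accounting for $p_{0}/(\onu-\lambda)\leq 1/\beta_{\lambda,\epsilon}$ and the $\eta_{\epsilon}/\beta_{\lambda,\epsilon}$ for the re‑equilibration cost), while the genuinely linear leftovers — the $p_{0}$ relating $w$ to $\tilde{\pi}-\obpi^{\phi}$ and the two pieces of $\lVert E\rVert_{1}$ — collect into $\tfrac{3}{\beta_{\lambda,\epsilon}}(\onu-\lambda)$.

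The hard part will be precisely this last step: establishing the $(\onu-\lambda)^{1/2}$ control of the near‑boundary occupation together with the uniform‑in‑$\phi$ extraction of the re‑equilibration constant $\eta_{\epsilon}$. The potential‑like function $\mathscr{H}$ of Lemma~\ref{lem:potential}, which already drives the proof of Lemma~\ref{lem:QGoesToZero}, should supply the Lyapunov structure needed for the former, and the finiteness of $\mathbb{Y}$ together with uniform aperiodicity of $\obY^{\phi}$ should make the latter tractable, provided one is careful to use only bounds that do not deteriorate as $\obY^{\phi}$ becomes nearly reducible.
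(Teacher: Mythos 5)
Your first two steps are correct and essentially reproduce the paper's Lemma~\ref{lem:DistConverge2}: summing the global balance equations of $\bX^{\mathscr{X}(\phi)}$ over positive queue lengths yields a perturbed balance identity $\tilde\pi=\tilde\pi\,\overline{\mathbf{P}}^{\phi}+E$ (the paper writes the analogous identity for the full $\mathbb{Y}$-marginal $\bvarrho^{\mathscr{X}(\phi)}$ with perturbation $\boldsymbol{\gamma}^\phi$, $\|\boldsymbol{\gamma}^\phi\|_1\leq 2\sum_s\pi^{\mathscr{X}(\phi)}(s,\cA,0)$), and Lemma~\ref{lem:QGoesToZero} controls $p_0=\sum_s\pi^{\mathscr{X}(\phi)}(s,\cA,0)$.

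Your third step, however, rests on a mistaken premise, and that is the genuine gap. You assert that bounding $\|w\|_1$ via the deviation matrix of $\obY^\phi$ fails because mixing is not uniform over $\Phi_R^\epsilon(\onu)$. It \emph{is} uniform: Lemma~\ref{lem:DistConverge1} shows that for every $\phi\in\Phi_R^\epsilon$ (not merely $\Phi_R^\epsilon(\onu)$) and every distribution ${\bf p}$ on $\mathbb{Y}$, $\sum_{r\geq1}\|{\bf p}(\overline{\mathbf P}^\phi)^r-\obpi^\phi\|_1\leq\eta_\epsilon$, with $\eta_\epsilon$ depending only on $\epsilon$ and the fixed model parameters. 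The observation you missed is that from every $\by\in\mathbb{Y}$ and every $\phi\in\Phi_R^\epsilon$ the chain $\obY^\phi$ hits the single state $(n_s,\cB)$ within $2n_s$ steps with probability at least a constant $\tilde\alpha_\epsilon>0$ independent of $\phi$ (see~(\ref{eq:lowerbdCondPAuxilary})): when busy the server must work and the activity state drifts upward; when available, either $\phi$ works with non-negligible probability and the chain climbs toward $(n_s,\cB)$, or $\phi$ rests and the chain descends toward $(1,\cA)$, where the floor $\phi(1,\cA)\geq\epsilon$ applies. This gives a Dobrushin contraction with factor $1-\tilde\alpha_\epsilon$ every $2n_s$ steps, uniformly in $\phi$. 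That $\obpi^\phi$ may put tiny mass on some states is irrelevant here: the Dobrushin argument concerns the columns of $(\overline{\mathbf P}^\phi)^{2n_s}$, not the stationary PMF.

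Once the uniform bound $\eta_\epsilon$ is available, the route you dismissed as ``naive'' works and in fact gives a bound linear in $\onu-\lambda$ (since $E$ is mean-zero, $w=\sum_{r\geq0}E(\overline{\mathbf P}^\phi)^r$ with $\|w\|_1\lesssim\|E\|_1\eta_\epsilon$), sharper than the stated square root. The paper itself follows a slightly blunter variant: Lemma~\ref{lem:DistConverge2} gives $\|\bvarrho^{\mathscr{X}(\phi)}-\bvarrho^{\mathscr{X}(\phi)}(\overline{\mathbf P}^\phi)^r\|_1\leq 2r(\onu-\lambda)/\beta_{\lambda,\epsilon}$ using only $\|(\overline{\mathbf P}^\phi)^\tau\|_\infty=1$, Lemma~\ref{lem:DistConverge3} averages the two bounds over $r=1,\dots,N$, and then $N\approx\eta_\epsilon/(\onu-\lambda)^{1/2}$ is chosen to balance them --- that optimization of the averaging horizon $N$, not a Lyapunov level-set split at $k^\star\sim(\onu-\lambda)^{-1/2}$, is where the square root in~(\ref{eq:MainDistConvIneq}) comes from. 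The coupling-plus-level-set machinery you sketch is therefore unnecessary, and its central justification (non-uniform mixing) is incorrect.
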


In Appendix~\ref{app:ProofOfMainDistConv}, we provide 
a proof for Theorem~\ref{thm:MainDistConv} where the existence of $\beta_{\lambda,\epsilon}$ follows from Lemma~\ref{lem:QGoesToZero}. As was 
the case with Lemma~\ref{lem:QGoesToZero}, but even 
more so, the proof of Theorem~\ref{thm:MainDistConv} 
is rather involved because the inequality 
in~(\ref{eq:MainDistConvIneq}) must hold uniformly 
for all $\phi$ in $\Phi_R^{\epsilon}(\onu)$. The 
following corollary is an immediate consequence of 
Theorem~\ref{thm:MainDistConv}.

\begin{coro}
\label{coro:MainUDistConverence} Let $\lambda$ in 
$(0,\onu^*)$ and $\epsilon$ in $(0,1)$ be given.  
Suppose $\Phi^{\epsilon}_R(\lambda)$ 
is nonempty, and let  
$\beta_{\lambda, \epsilon}$ be a positive constant
satisfying Lemma~\ref{lem:QGoesToZero}. 
Then, there 
is a positive constant $\eta_{\epsilon}$ such 
that the following inequality holds for all $\onu 
\in (\lambda,\onu^*)$: for every 
$\phi \in \Phi_R^{\epsilon}(\onu)$, 
\begin{eqnarray*}
&& \myhb \Big | \bar{\mathscr{U}}(\phi) 
	- \mathscr{U}(\lambda,\mathscr{X}(\phi))  \Big | 
	\\
&& \myhb \leq \frac{\beta_{\lambda,\epsilon} 
 	+ \eta_\epsilon}{\beta_{\lambda,\epsilon}}
 		(\onu-\lambda)^{\frac{1}{2}}
 + \frac{3}{\beta_{\lambda,\epsilon}}
 	(\onu-\lambda).
 	\nonumber
\end{eqnarray*}
\end{coro}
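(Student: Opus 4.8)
The plan is to reduce the corollary to Theorem~\ref{thm:MainDistConv} by unwinding the definitions of the two utilization rates and applying the triangle inequality; no new probabilistic work is needed, since all the difficulty is already carried by the theorem. First I would write out, directly from Definition~\ref{defn:UtilizationRate}, that $\bar{\mathscr{U}}(\phi) = \sum_{\by \in \mathbb{Y}} \overline{\pi}^{\phi}(\by)\phi(\by)$. Next, setting $\theta = \mathscr{X}(\phi)$ and recalling from~(\ref{eq:structure_policies}) that $\vartheta^{\phi}(\bx)$ equals $\phi(\by)$ when $q > 0$ and $0$ when $q = 0$, I would expand
\[
\mathscr{U}(\lambda,\mathscr{X}(\phi)) = \sum_{\bx \in \mathbb{X}} \pi^{\mathscr{X}(\phi)}(\bx)\,\vartheta^{\phi}(\bx) = \sum_{\by \in \mathbb{Y}} \phi(\by) \sum_{q > 0} \pi^{\mathscr{X}(\phi)}(\by,q),
\]
where the last step uses that the states with $q = 0$ (which are exactly those of the form $(s,\mathcal{A},0)$) contribute nothing, and that every state with $q \geq 1$ and server component $\by$ is assigned work-probability $\phi(\by)$.

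Subtracting the two expressions and pulling the common factor $\phi(\by)$ out termwise gives
\[
\bar{\mathscr{U}}(\phi) - \mathscr{U}(\lambda,\mathscr{X}(\phi)) = \sum_{\by \in \mathbb{Y}} \phi(\by)\Big( \overline{\pi}^{\phi}(\by) - \sum_{q > 0}\pi^{\mathscr{X}(\phi)}(\by,q)\Big).
\]
Since $0 \leq \phi(\by) \leq 1$ for every $\by \in \mathbb{Y}$, taking absolute values and applying the triangle inequality yields
\[
\big| \bar{\mathscr{U}}(\phi) - \mathscr{U}(\lambda,\mathscr{X}(\phi)) \big| \leq \sum_{\by \in \mathbb{Y}} \Big| \overline{\pi}^{\phi}(\by) - \sum_{q > 0}\pi^{\mathscr{X}(\phi)}(\by,q)\Big|,
\]
which is precisely the left-hand side of~(\ref{eq:MainDistConvIneq}). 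Invoking Theorem~\ref{thm:MainDistConv} with the same $\beta_{\lambda,\epsilon}$ and with the $\eta_{\epsilon}$ it supplies — its hypotheses ($\Phi^{\epsilon}_R(\lambda)$ nonempty, $\phi \in \Phi_R^{\epsilon}(\onu)$, $\onu \in (\lambda,\onu^*)$) being identical to those of the corollary — gives exactly the claimed bound, and the proof is complete.

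There is essentially no obstacle: the only point deserving a sentence of care is the bookkeeping that identifies the $\mathbb{X}$-stationary PMF restricted to $q > 0$ with the quantity appearing in Theorem~\ref{thm:MainDistConv}, together with the observation that $\Phi_R^{\epsilon}(\onu)$ may be empty for some $\onu$, in which case the inequality holds vacuously and no additional nonemptiness claim beyond that of the theorem is required.
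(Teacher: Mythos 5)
Your proof is correct and fills in exactly the routine bookkeeping that the paper leaves implicit when it asserts the corollary is an immediate consequence of Theorem~\ref{thm:MainDistConv}: you expand both utilization rates, note that the states with $q=0$ contribute zero to $\mathscr{U}(\lambda,\mathscr{X}(\phi))$ because $\vartheta^\phi$ vanishes there, pull out the common factor $\phi(\by)\in[0,1]$, apply the triangle inequality to land on the left-hand side of~(\ref{eq:MainDistConvIneq}), and invoke the theorem under identical hypotheses. This is the same (and essentially the only natural) route the authors have in mind.
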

We remind the reader that $\bar{\mathscr{U}}(\phi)$
implicitly depends on $\lambda$ via the stationary
PMF $\overline{\pi}^\phi$
in Definition~\ref{defn:UtilizationRate}.

\section{Conclusions}
\label{sec:conclusions}

We put forth a methodology to design policies that schedule tasks from a queue to a server whose performance depends on the scheduling history. Our approach builds on and generalizes previous work that sought to design stabilizing non-preemptive policies. This article introduces methods to design non-preemptive policies that are not only stabilizing but also lessen the so-called utilization rate, which accounts for the proportion of time the server is working. Given a rate of arrival of tasks at the queue, our two main results yield a tractable method to compute the infimum of the utilization rates that are attainable by all stabilizable non-preemptive policies, and characterize subsets of conveniently-structured policies whose utilization rate is arbitrarily close to the infimum.



\appendix
\subsection{Structural Results for $\bar{\mathscr{U}}^{0}_{\mathbb{L}}(\overline{\nu})$ and Proof of Theorem~\ref{thm:U0Convex}}
\label{app:StructuralResults}
Define $\Phi^\dagger$ to be set of 
policies in $\Phi_R$ which are deterministic
except for at most at one state where the
policy randomizes between two admissible
actions. In other words,
\beqan
\Phi^{\dagger} 
\Eqdef \Big\{ \phi \in \Phi_R 
	& \myb \Big| & \myb \mbox{there is } \mathbb{S}_\phi
	    \subset \mathbb{S} \mbox{ such that (i) } 
	     |\mathbb{S} \setminus \mathbb{S}_\phi | 
		\leq 1 \\ 
	&& \hspace{-0.4in} 
	\mbox{ and (ii) }
	\phi(\os,\ow)\in\{0,1\} \mbox{ for all } \os \in 
	    \mathbb{S}_\phi \Big\}.
\eeqan


For each $\phi$ in $\Phi_R$, let
$\overline{\boldsymbol{\Pi}}(\phi)$ be the set 
of stationary PMFs of $\overline{\bY}^\phi$. The proof of~\cite[Theorem 4.4]{Altman1999Constrained-Mar}
tells us that, given a non-empty solution set $\mathbb{L}^{0}(\overline{\nu})$ for LP~\eqref{LP-Definition}, there exist an optimal occupation measure $\ell^* \in \mathbb{L}^{0}(\overline{\nu})$, a policy $\phi^*\in\Phi^\dagger$, and a stationary PMF $\opi^*\in\overline{\boldsymbol{\Pi}}(\phi^*)$ such that the following equalities hold:
\beqan
\ell^*_{\oby,\mathcal{R}}+\ell^*_{\oby,\mathcal{W}} \myeq \opi^*(\oby)\\
\ell^*_{\oby,\mathcal{W}} \myeq \opi^*(\oby)\phi^*(\oby), \  \oby \in \overline{\mathbb{Y}}
\eeqan
Hence, we can rewrite  ${\bar{\mathscr{U}}^0_{\mathbb{L}}}$ as

\begin{equation}
\label{eq:ReduceMinimize}
\begin{matrix}
\displaystyle \bar{\mathscr{U}}^{0}_{\mathbb{L}}(\overline{\nu}) =&\underset{\overline{\bpi} \in 
	\overline{\boldsymbol{\Pi}}(\phi), \phi\in\Phi^{\dagger}}{\min} & \sum_{\overline{\by} \in \mathbb{Y}} 
	\overline{\pi}(\overline{\by}) 
	\phi(\overline{\by})  \\
&\textrm{s.t.} & \sum_{\overline{\by} \in \mathbb{Y}} 
    \mu(\overline{s}) 
	\overline{\pi}(\overline{\by}) 
	\phi(\overline{\by}) 
	= \onu
\end{matrix}
\end{equation}


We shall further divide $\Phi^{\dagger}$ into three subsets where the probabilities to choose to work at the state $(1,\cA)$ are one, between zero and one, or zero and consider the LP \eqref{eq:ReduceMinimize} on each of the subsets in Lemmas~\ref{lem:linear1} through \ref{lem:linear3}. Before we proceed with the proof, we restate the definition of threshold policies from \cite{Lin2019Scheduling-task}.

We define a threshold policy $\phi_{\tau}$ as
\beqan
\phi_{\tau}(s,w)
\Eqdef \begin{cases}
        0 &\mbox{if $s\geq\tau$ and $w=\cA$,}\lb
        1 &\mbox{otherwise.}
      \end{cases}
\eeqan
 
\begin{lemma}
\label{lem:linear1}
For every $\phi \in \Phi^{\dagger}$ with 
$\phi(1,\cA) = 1$, there exist $\tau_1, \tau_2 
\in \mathbb{S}\cup\{n_s+1\}$ and $\alpha \in
[0,1]$ such that
\beqan
\onu^\phi 
\myeq (1-\alpha)\onu^{\phi_{\tau_1}}
	+ \alpha\onu^{\phi_{\tau_2}}, \lb
\bar{\mathscr{U}}(\phi) 
\myeq (1-\alpha)\bar{\mathscr{U}}(\phi_{\tau_1}) 
	+ \alpha\bar{\mathscr{U}}(\phi_{\tau_2}).
\eeqan
\end{lemma}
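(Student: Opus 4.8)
The idea is to peel $\phi$ down to threshold policies in two moves: first a structural fact showing that a \emph{deterministic} policy with $\phi(1,\cA)=1$ agrees with a threshold policy on its recurrent class; then an occupation‑measure argument that absorbs a single randomizing state. Throughout I write $\mathbb{C}_\psi$ for the (unique, by \cite[Corollary~1]{Lin2019Scheduling-task}) recurrent class of $\obY^{\psi}$ when $\psi\in\Phi_R^+$.

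\textbf{Step 1 (threshold structure of the recurrent class).} Let $\psi\in\Phi_R$ be deterministic with $\psi(1,\cA)=1$, put $m_\psi:=\max\{s\in\mathbb{S}:\psi(s,\cA)=1\}$ and $\cR:=\{(s,\cA),(s,\cB):m_\psi\le s\le n_s\}$. I will show $\mathbb{C}_\psi=\cR$ and $\psi\equiv\phi_{m_\psi+1}$ on $\cR$. Closedness of $\cR$ is read off \eqref{Def-SDynamics}--\eqref{eq:REDWTran}: from $(s,\cA)$ with $s>m_\psi$ the policy rests, so the activity state stays in $\{s-1,s\}\subseteq\{m_\psi,\dots,n_s\}$; from $(m_\psi,\cA)$ and from every busy state the action is $\mathcal{W}$, so the activity state moves in $\{s,s+1\}\subseteq\{m_\psi,\dots,n_s\}$. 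Irreducibility inside $\cR$: from $(m_\psi,\cA)$ one climbs $(m_\psi,\cB)\!\to\!(m_\psi{+}1,\cB)\!\to\!\cdots\!\to\!(n_s,\cB)$ with positive probability (each $\rho_{s,s+1},1-\mu(s)\in(0,1)$), hence reaches every $(s,\cA)$ with $m_\psi\le s\le n_s$; conversely, from any state with activity $s>m_\psi$ one returns to $(m_\psi,\cA)$ by completing the current task, if busy, and then resting repeatedly down to $m_\psi$ (the states $m_\psi{+}1,\dots,s$ are rest states for $\psi$, and $\rho_{s',s'-1}>0$). Since $\psi(1,\cA)=1>0$, $\obY^{\psi}$ has a unique recurrent class, so $\mathbb{C}_\psi=\cR$. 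On $\cR$ one has $\psi(m_\psi,\cA)=1$, $\psi(s,\cA)=0$ for $m_\psi<s\le n_s$, and $\psi\equiv\mathcal{W}$ at busy states, which is precisely $\phi_{m_\psi+1}$ restricted to $\cR$; moreover $\max\{s:\phi_{m_\psi+1}(s,\cA)=1\}=m_\psi$, so $\phi_{m_\psi+1}$ has the same recurrent class. As the two transition kernels coincide on $\cR$, $\overline{\pi}^{\psi}=\overline{\pi}^{\phi_{m_\psi+1}}$, whence $\onu^{\psi}=\onu^{\phi_{m_\psi+1}}$ and $\bar{\mathscr{U}}(\psi)=\bar{\mathscr{U}}(\phi_{m_\psi+1})$. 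In particular, if $\phi$ is deterministic the lemma holds with $\tau_1=\tau_2=m_\phi+1\in\{2,\dots,n_s+1\}$ and any $\alpha\in[0,1]$.

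\textbf{Step 2 (one randomization).} Otherwise $\phi$ randomizes only at $x:=(s^*,\cA)$ with $\phi(x)=p\in(0,1)$, and $s^*\ge 2$ since $\phi(1,\cA)=1$. Let $\phi^{(0)},\phi^{(1)}\in\Phi_R^+$ agree with $\phi$ off $x$ and satisfy $\phi^{(j)}(x)=j$. By Step~1 they realize the threshold policies $\phi_{\tau_1},\phi_{\tau_2}$ with $\tau_1:=m_{\phi^{(1)}}+1$, $\tau_2:=m_{\phi^{(0)}}+1$, and $m_{\phi^{(1)}}=\max(m_{\phi^{(0)}},s^*)$; hence $x\in\mathbb{C}_{\phi^{(1)}}\iff s^*\ge m_{\phi^{(1)}}\iff s^*\ge m_{\phi^{(0)}}\iff x\in\mathbb{C}_{\phi^{(0)}}$. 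If $x$ lies in neither, then $m_{\phi^{(0)}}=m_{\phi^{(1)}}$ so $\phi_{\tau_1}=\phi_{\tau_2}$, and the common class $\mathbb{C}_{\phi^{(0)}}=\mathbb{C}_{\phi^{(1)}}$ is also the recurrent class of $\obY^{\phi}$ (the three kernels coincide off $x$), so $\overline{\pi}^{\phi}=\overline{\pi}^{\phi^{(0)}}$ and the lemma holds with $\tau_1=\tau_2$. So assume $x\in\mathbb{C}_{\phi^{(0)}}\cap\mathbb{C}_{\phi^{(1)}}$, i.e. $\overline{\pi}^{\phi^{(0)}}(x),\overline{\pi}^{\phi^{(1)}}(x)>0$. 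For $\psi\in\Phi_R^+$ set $\ell^{\psi}_{\oby,\mathcal{W}}:=\overline{\pi}^{\psi}(\oby)\psi(\oby)$, $\ell^{\psi}_{\oby,\mathcal{R}}:=\overline{\pi}^{\psi}(\oby)\big(1-\psi(\oby)\big)$, which satisfy \eqref{LP-constraint-d}--\eqref{LP-constraint-e} because $\overline{\pi}^{\psi}$ is stationary. Put
\[
\alpha:=\frac{(1-p)\,\overline{\pi}^{\phi^{(1)}}(x)}{(1-p)\,\overline{\pi}^{\phi^{(1)}}(x)+p\,\overline{\pi}^{\phi^{(0)}}(x)}\in(0,1),\qquad \tilde\ell:=(1-\alpha)\,\ell^{\phi^{(1)}}+\alpha\,\ell^{\phi^{(0)}}.
\]
Then $\tilde\ell\ge 0$ and, being a convex combination of solutions, it satisfies \eqref{LP-constraint-d}--\eqref{LP-constraint-e}; and $\tilde\ell_{\oby,\mathcal{W}}=\phi(\oby)\big(\tilde\ell_{\oby,\mathcal{W}}+\tilde\ell_{\oby,\mathcal{R}}\big)$ holds for every $\oby$ — automatically when $\oby\ne x$ since $\phi=\phi^{(0)}=\phi^{(1)}$ there, and at $\oby=x$ by the choice of $\alpha$ (this is where $\overline{\pi}^{\phi^{(j)}}(x)>0$ enters). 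Consequently $\oby\mapsto\tilde\ell_{\oby,\mathcal{W}}+\tilde\ell_{\oby,\mathcal{R}}$ is a stationary PMF of $\obY^{\phi}$, hence equals $\overline{\pi}^{\phi}$, and therefore $\tilde\ell=\ell^{\phi}$. Summing $\ell^{\phi}=(1-\alpha)\ell^{\phi^{(1)}}+\alpha\,\ell^{\phi^{(0)}}$ against $\mu(\overline{s})\,\mathcal{I}_{\{\overline{a}=\mathcal{W}\}}$, respectively $\mathcal{I}_{\{\overline{a}=\mathcal{W}\}}$, and using Step~1 ($\onu^{\phi^{(j)}}=\onu^{\phi_{\tau_j}}$, $\bar{\mathscr{U}}(\phi^{(j)})=\bar{\mathscr{U}}(\phi_{\tau_j})$) gives the asserted identities.

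\textbf{Main obstacle.} The delicate part is Step~1: although the busy dynamics let the activity state climb all the way up to $n_s$, one must argue it can never fall below $m_\psi$ in steady state, because every available state above $m_\psi$ is a rest state that funnels the chain back toward $m_\psi$, while the (possibly non‑monotone) behavior of $\psi$ at states below $m_\psi$ only affects transient states. Pinning this down via the reachability relations implied by \eqref{Def-SDynamics}--\eqref{eq:REDWTran} — together with the index bookkeeping $m_{\phi^{(1)}}=\max(m_{\phi^{(0)}},s^*)$ in Step~2 — is where the work lies; Step~2's occupation‑measure computation is then routine.
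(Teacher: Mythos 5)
Your proof is correct, and it reaches the same decomposition (same choice of $\alpha$ up to relabeling $\tau_1 \leftrightarrow \tau_2$), but via a route that differs in a worthwhile way from the paper's. The paper replaces $\phi$ by a policy $\phi'$ that equals $1$ below $\mathcal{T}(\phi)$, asserts that the recurrent class is $\{\oby : \os \ge \mathcal{T}(\phi)\}$, and then \emph{directly verifies} that $\overline{\bpi}^{\phi'}=(1-\alpha)\overline{\bpi}^{\phi_{\tau_1}}+\alpha\,\overline{\bpi}^{\phi_{\tau_2}}$ by plugging into the global balance equations (carried out in a dedicated appendix), after which it expands $\onu^{\phi'}$ and $\bar{\mathscr{U}}(\phi')$ by splitting sums. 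You instead (i) isolate and fully justify the observation that any deterministic $\psi\in\Phi_R^+$ with $\psi(1,\cA)=1$ has recurrent class $\{(s,\cA),(s,\cB):m_\psi\le s\le n_s\}$ and coincides there with $\phi_{m_\psi+1}$, so $\overline{\bpi}^{\psi}=\overline{\bpi}^{\phi_{m_\psi+1}}$, and (ii) absorb the single randomization by a convexity argument at the level of occupation measures: the mixture $\tilde\ell=(1-\alpha)\ell^{\phi^{(1)}}+\alpha\,\ell^{\phi^{(0)}}$ satisfies the LP stationarity constraints automatically, is consistent with the randomizing policy $\phi$ exactly at your chosen $\alpha$, and hence equals $\ell^{\phi}$ by uniqueness of $\overline{\bpi}^{\phi}$. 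This sidesteps the paper's explicit balance-equation verification \textemdash uniqueness does that work for you \textemdash and it plugs naturally into the occupation-measure/LP machinery already used elsewhere in the paper, making the decomposition of $\onu^{\phi}$ and $\bar{\mathscr{U}}(\phi)$ immediate (both are linear functionals of $\ell^{\phi}$). The trade-off is that your Step 1 requires a careful closedness/irreducibility argument that the paper dispatches in one sentence; your appraisal that Step 1 is the delicate part and Step 2 is routine is essentially the opposite of where the paper spends its effort.

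One small point worth stating explicitly, even though it is implicit in your construction: $\phi^{(0)},\phi^{(1)}\in\Phi_R^+$ because $s^*\ge 2$ forces $\phi^{(j)}(1,\cA)=\phi(1,\cA)=1$, so the uniqueness of the stationary PMF you invoke at the end of Step 2 indeed applies to all three policies $\phi,\phi^{(0)},\phi^{(1)}$.
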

\begin{proof}
We define the mapping 
$\mathcal{T}: \Phi_R
\to \mathbb{S} \cup \{0\}$, where 
\beqan
\mathcal{T}(\phi)
\Eqdef \max\{ \os \in \mathbb{S} \ | \ 
	\phi(\os, \cA) = 1 \}, \  \phi \in \Phi_R. 
\eeqan
We assume that $\mathcal{T}(\phi) = 0$ if the
set on the RHS is empty. We first observe that 
$\mathcal{T}(\phi)\geq 1$ since $\phi(1,\cA) 
= 1$ and the only positive recurrent 
communicating class is $\{\oby\in\mathbb{Y} \ | \
\os \geq \mathcal{T}(\phi)\}$. Second, consider
the following policy $\phi'$: 
\beqan
\phi'(\oby) = 
    \begin{cases}
        \phi(\oby) &\mbox{if $\os\geq\mathcal{T}(\phi)$}\lb
        1          &\mbox{otherwise}
    \end{cases}
\eeqan
It is clear that $\phi'$ has the same 
service rate and utilization rate as $\phi$
because both policies have the same positive 
recurrent communicating class and the policies 
inside the class are identical. 

Recall that there is only one state, 
say $s'$, where $\phi$ randomizes 
between two actions. Thus, if 
$s' < \mathcal{T}(\phi)$, $\phi'$ is just a 
threshold policy $\phi_{\mathcal{T}(\phi)+1}$.
On the other hand, if $s' > \mathcal{T}(\phi)$, 
$\phi'$ is of the following form:
\begin{equation}
\phi'(\oby) = 
    \begin{cases}
        \gamma &\mbox{if $\ow=\cA$ and $\os=s'$}\label{eq:form1}\\
        \phi_{\cT(\phi)+1}(\oby) &\mbox{otherwise}
    \end{cases}
\end{equation}
Suppose that $\tau_1 = \cT(\phi)+1$ and $\tau_2 
= s'+1$. 
We rewrite $\gamma$ in \eqref{eq:form1} as 
\beqa
\myb \gamma
\myeq \frac{ \alpha \cdot \opi^{\phi_{\tau_2}}
		(\tau_2-1,\cA)}
	{\alpha \cdot \opi^{\phi_{\tau_2}}
		(\tau_2-1,\cA)
	+(1-\alpha) \opi^{\phi_{\tau_1}}
		(\tau_2-1,\cA)}
	\label{eq:gamma}
\eeqa
for some $\alpha \in (0, 1)$. Note that, for 
every $\gamma \in (0,1)$, we can find an 
appropriate $\alpha\in(0,1)$ that satisfies
\eqref{eq:gamma}
because $\opi^{\phi_{\tau_1}}(\tau_2 - 1,
\cA)>0$ and $\opi^{\phi_{\tau_2}}(\tau_2 - 1,
\cA)>0$ from the fact that
$\cT(\phi) < s'$. 

By solving the global balance equations
for $\obY$ under the policy 
$\phi'$, we get the following stationary PMF.
Its derivation is provided in  
Appendix~\ref{appn:disSplit}: for every 
$\oby$ in $\mathbb{Y}$, 
\beqa
\label{eq:distributionSplit}
\opi^{\phi'}(\oby) 
\myeq (1-\alpha) \opi^{\phi_{\tau_1}}(\oby)
+ \alpha \cdot \opi^{\phi_{\tau_2}}(\oby)
\eeqa

The service rate can 
be obtained using the stationary PMF. 
\beqan
\overline{\nu}^{\phi'}
\myeq
	\sum_{\overline{\by} \in \mathbb{Y}} 
	\mu(\overline{s}) 
	\ \opi^{\phi'}
		(\overline{\by}) 
	\ \phi'
		(\overline{\by})
\eeqan
Substituting the RHS of \eqref{eq:distributionSplit}
for $\opi^{\phi'}(\oby)$, 
we obtain
\beqa
\overline{\nu}^{\phi'}
\myeq \sum_{\overline{\by} \in \mathbb{Y}} 
	\Big( \mu(\overline{s}) 
	\big( \alpha \cdot 
		\opi^{\phi_{\tau_2}}(\oby) 
	+ (1-\alpha)\opi^{\phi_{\tau_1}}(\oby) \big) 
		\phi'
			(\overline{\by}) \Big) \lb
\myeq \mu(\tau_2-1) 
	\big( \alpha \cdot \opi^{\phi_{\tau_2}}
		(\tau_2-1,\cA) \lb
	&& \myhf + (1-\alpha) \opi^{\phi_{\tau_1}}
		(\tau_2-1,\cA) \big)
	\phi'
		(\tau_2-1, \cA) \lb
	&& \myb + \sum_{\overline{\by} \in \mathbb{Y}
		\setminus \{(\tau_2-1,\cA)\}} 
	\Big( \mu(\overline{s}) \big( 
		\alpha \cdot \opi^{\phi_{\tau_2}}(\oby) 
		\label{eq:lemma7-1} \\
	&& \myhf + (1-\alpha) 
		\opi^{\phi_{\tau_1}}(\oby) \big)
	\phi'(\oby)
		\Big).
	\nonumber
\eeqa
Using the definition of $\phi'$ in \eqref{eq:form1}
and the fact that $\mathcal{T}(\phi) < s'$, 
\beqa
\myhb \eqref{eq:lemma7-1}
\myeq \mu(\tau_2 - 1) 
	\big( \alpha \cdot \opi^{\phi_{\tau_2}}
		(\tau_2-1,\cA) \lb
	&& \hspace{0.15in} 
		+ (1-\alpha) \opi^{\phi_{\tau_1}}
		(\tau_2-1,\cA) \big) \lb
	&& \myb \times 
	\big((1-\gamma)\phi_{\tau_1}(\tau_2-1,\cA) 
		+ \gamma\phi_{\tau_2}(\tau_2-1,\cA) \big)
	\label{eq:lemma7-2} \\
	&& \hspace{-0.2in}
	+ \sum_{\overline{\by} \in \mathbb{Y}
		\setminus \{(\tau_2-1,\cA)\} }
		\Big( \mu(\overline{s}) 
	\big( \alpha \cdot \opi^{\phi_{\tau_2}}(\oby)
		\lb 
	&& \hspace{0.8in} 
		+ (1-\alpha) \opi^{\phi_{\tau_1}}(\oby) 
			\big) \phi_{\tau_1}(\oby) \Big).
	\label{eq:lemma7-3}
\eeqa

First, using the expression in \eqref{eq:gamma} for 
$\gamma$ in the first term, we get
\beqan
\eqref{eq:lemma7-2}
\myeq \mu(\tau_2 - 1)\Big(
	(1-\alpha)\opi^{\phi_{\tau_1}}(\tau_2-1,\cA)
	\phi_{\tau_1}(\tau_2-1,\cA)\lb
	&& \hspace{0.15in} 
		+ \alpha\opi^{\phi_{\tau_2}}(\tau_2-1,\cA)
	\phi_{\tau_2}(\tau_2-1,\cA) \Big). 
\eeqan
Second, we conclude $\opi^{\phi_{\tau_2}}(\oby)
\phi_{\tau_1}(\oby) = \opi^{\phi_{\tau_2}}(\oby)
\phi_{\tau_2}(\oby)$ for all 
$\oby \in \mathbb{Y}
\setminus \{(\tau_2-1,\cA)\}$  
by considering the following three cases: (i)~If $\os \geq \tau_2$ and $w = \cA$, we have 
	$\phi_{\tau_1}(\os,\ow) = \phi_{\tau_2}(\os,\ow) = 0$ from the
	definition of $\phi_{\tau_1}$ and $\phi_{\tau_2}$. (ii)~If $\os < \tau_2 - 1$, then 
	 $\opi^{\phi_{\tau_2}}(\os,\ow) = 0$
	 (because $(\os, \ow)$ is transient). (iii)~If $\ow = \cB$, then 
	 $\phi_{\tau_1}(\os,\ow) = \phi_{\tau_2}(\os,\ow) = 1$.
As a result, 
\beqan
\eqref{eq:lemma7-3}
\myeq \sum_{\overline{\by} \in \mathbb{Y}
	\setminus \{(\tau_2-1,\cA)\} } 
	 \mu(\overline{s}) \Big(
		(1-\alpha)\opi^{\phi_{\tau_1}}(\oby)
		\phi_{\tau_1}(\oby) \lb
		&& \hspace{1.05in} 
		+ \alpha\opi^{\phi_{\tau_2}}(\oby)
		\phi_{\tau_2}(\oby)\Big). 
\eeqan
Summing \eqref{eq:lemma7-2} and 
\eqref{eq:lemma7-3}, we get
\beqa
\overline{\nu}^{\phi}
= \overline{\nu}^{\phi'}
\myeq \sum_{\overline{\by} \in \mathbb{Y}} 
	 \mu(\overline{s}) \Big(
		(1-\alpha)\opi^{\phi_{\tau_1}}(\oby)
		\phi_{\tau_1}(\oby) \lb
		&& \hspace{0.5in} 
		+ \alpha\opi^{\phi_{\tau_2}}(\oby)
		\phi_{\tau_2}(\oby)\Big)\lb
\myeq (1-\alpha)\onu^{\phi_{\tau_1}} + \alpha\onu^{\phi_{\tau_2}}. 
\eeqa
Following similar steps, we can show 
$\bar{\mathscr{U}}(\phi) = (1-\alpha)\bar{\mathscr{U}}(\phi_{\tau_1})$ $+ \alpha\bar{\mathscr{U}}(\phi_{\tau_2})$. Finally, we include $\alpha$ at zero and one for the Lemma statement to consider the case where $\phi$ is a deterministic policy without the randomization.
\end{proof}
\begin{lemma}
\label{lem:linear2}
For every $\phi\in\Phi^{\dagger}$ with $\phi(1,\cA) \in (0,1)$, there exist $\tau_2\in \mathbb{S}\cup\{n_s+1\}$ and $\beta\in[0,1]$ such that
\beqan
\onu^\phi 
\myeq \beta\onu^{\phi_{\tau_2}}
\ \mbox{ and } \ 
\bar{\mathscr{U}}(\phi) 
= \beta\bar{\mathscr{U}}(\phi_{\tau_2}).
\eeqan
\end{lemma}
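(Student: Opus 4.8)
The plan is to follow the template of the proof of Lemma~\ref{lem:linear1}, exploiting that now the unique state at which $\phi$ randomizes is forced to be $(1,\cA)$. Since $\phi \in \Phi^{\dagger}$ and $\phi(1,\cA) \in (0,1)$, the single exceptional activity state must be $1$, hence $\phi(\os,\ow) \in \{0,1\}$ for every $(\os,\ow) \neq (1,\cA)$. Put $\gamma := \phi(1,\cA)$ and let $\phi^{(1)}$ be the deterministic policy that agrees with $\phi$ off $(1,\cA)$ and has $\phi^{(1)}(1,\cA) := 1$. Then $\phi^{(1)} \in \Phi^{\dagger}$ and $\phi^{(1)}(1,\cA) = 1 > 0$, so $\phi^{(1)} \in \Phi_R^+$ and $\obY^{\phi^{(1)}}$ has a unique recurrent communicating class and a unique stationary PMF $\opi^{\phi^{(1)}}$. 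Arguing as in the proof of Lemma~\ref{lem:linear1}---for a deterministic policy the randomizing state is absent, so on its recurrent class $\phi^{(1)}$ coincides with the threshold policy $\phi_{\tau_2}$ with $\tau_2 := \mathcal{T}(\phi^{(1)}) + 1 \in \mathbb{S}\cup\{n_s+1\}$, and that class coincides with the recurrent class of $\phi_{\tau_2}$---we obtain $\onu^{\phi^{(1)}} = \onu^{\phi_{\tau_2}}$ and $\bar{\mathscr{U}}(\phi^{(1)}) = \bar{\mathscr{U}}(\phi_{\tau_2})$. It therefore remains to prove $\onu^{\phi} = \beta\,\onu^{\phi^{(1)}}$ and $\bar{\mathscr{U}}(\phi) = \beta\,\bar{\mathscr{U}}(\phi^{(1)})$ for some common $\beta \in [0,1]$.

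To get this I would compare the two stationary PMFs through the global balance equations. The policies $\phi$ and $\phi^{(1)}$ induce the same one-step transition law out of every state other than $(1,\cA)$; at $(1,\cA)$ the action $\mathcal{R}$ leaves the state unchanged (because $\rho_{1,0} = 0$), so the transition law of $\obY^{\phi}$ out of $(1,\cA)$ equals $\gamma$ times that of $\obY^{\phi^{(1)}}$ out of $(1,\cA)$ plus $(1-\gamma)$ on $(1,\cA)$ itself. Using these two facts one checks that the nonnegative measure $\tilde{\pi}$ on $\mathbb{Y}$ given by $\tilde{\pi}(\oby) := \opi^{\phi}(\oby)$ for $\oby \neq (1,\cA)$ and $\tilde{\pi}(1,\cA) := \gamma\,\opi^{\phi}(1,\cA)$ satisfies every balance equation of $\obY^{\phi^{(1)}}$: each equation at $\oby \neq (1,\cA)$ reduces to the balance equation of $\opi^{\phi}$ at $\oby$, while the equation at $(1,\cA)$ reduces to the balance equation of $\opi^{\phi}$ at $(1,\cA)$ after moving the self-loop term $(1-\gamma)\opi^{\phi}(1,\cA)$ to the left. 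Since $\phi^{(1)} \in \Phi_R^+$ has a unique recurrent class, any nonzero nonnegative invariant measure of $\obY^{\phi^{(1)}}$ is a positive multiple of $\opi^{\phi^{(1)}}$; moreover $\tilde{\pi} \neq 0$ because $\opi^{\phi}$ cannot be concentrated at $(1,\cA)$ (that would make $(1,\cA)$ absorbing, impossible as $\gamma > 0$). Hence $\tilde{\pi} = c\,\opi^{\phi^{(1)}}$ for some $c > 0$, and normalizing $\sum_{\oby} \opi^{\phi}(\oby) = 1$ yields $c = \bigl(1 + (\tfrac{1}{\gamma} - 1)\,\opi^{\phi^{(1)}}(1,\cA)\bigr)^{-1} \in (0,1]$.

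Finally I would substitute $\opi^{\phi}(\oby) = c\,\opi^{\phi^{(1)}}(\oby)$ for $\oby \neq (1,\cA)$ and $\opi^{\phi}(1,\cA) = c\,\opi^{\phi^{(1)}}(1,\cA)/\gamma$ into the definitions of $\onu^{\phi}$ and $\bar{\mathscr{U}}(\phi)$. Using $\phi(\oby) = \phi^{(1)}(\oby)$ for $\oby \neq (1,\cA)$ and $\phi^{(1)}(1,\cA) = 1$, the $(1,\cA)$ contribution to $\onu^{\phi}$ equals $\mu(1)\,\opi^{\phi}(1,\cA)\,\gamma = c\,\mu(1)\,\opi^{\phi^{(1)}}(1,\cA) = c\,\mu(1)\,\opi^{\phi^{(1)}}(1,\cA)\,\phi^{(1)}(1,\cA)$, and similarly for $\bar{\mathscr{U}}$; summing over $\mathbb{Y}$ gives $\onu^{\phi} = c\,\onu^{\phi^{(1)}} = c\,\onu^{\phi_{\tau_2}}$ and $\bar{\mathscr{U}}(\phi) = c\,\bar{\mathscr{U}}(\phi^{(1)}) = c\,\bar{\mathscr{U}}(\phi_{\tau_2})$, so $\beta := c \in (0,1] \subseteq [0,1]$ works. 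I expect the middle paragraph to be the delicate part: carrying out the balance-equation bookkeeping rigorously---in particular the cancellation in the equation at $(1,\cA)$---and correctly invoking uniqueness of the recurrent class of $\phi^{(1)}$ to pin $\tilde{\pi}$ down up to a scalar. The reduction of the deterministic $\phi^{(1)}$ to a single threshold policy is exactly the situation already treated within the proof of Lemma~\ref{lem:linear1}, so it requires no new argument.
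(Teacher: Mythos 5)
Your proof is correct, and it takes a genuinely different route from the paper's. The paper splits into two cases based on $\mathcal{T}(\phi)$: if $\mathcal{T}(\phi)>0$ then $(1,\cA)$ is transient and $\phi$ already has the same rates as the threshold policy $\phi_{\mathcal{T}(\phi)+1}$, so $\beta=1$; if $\mathcal{T}(\phi)=0$ it re-runs the Lemma~\ref{lem:linear1} machinery (the convex-combination identity \eqref{eq:distributionSplit}, proved via global balance in Appendix~\ref{appn:disSplit}) with $\phi_{\tau_1}=\phi_1$ (the always-rest policy, which contributes the point mass at $(1,\cA)$ and zero rates) and $\phi_{\tau_2}=\phi_2$, so $\beta=\alpha$. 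You instead avoid the case split entirely: you exhibit the measure $\tilde{\pi}$, obtained from $\opi^{\phi}$ by scaling its mass at $(1,\cA)$ by $\gamma$, as a nonnegative invariant measure of $\obY^{\phi^{(1)}}$, invoke uniqueness of the recurrent class of the deterministic policy $\phi^{(1)}\in\Phi_R^+$ to conclude $\tilde{\pi}=c\,\opi^{\phi^{(1)}}$, and read off $\beta=c$ by normalization. The two arguments are of course equivalent (unpacking your $\tilde{\pi}=c\,\opi^{\phi^{(1)}}$ gives exactly the paper's $\opi^{\phi}=(1-\alpha)\opi^{\phi_1}+\alpha\opi^{\phi_2}$ with $\alpha=c$), but yours is self-contained and uniform across both cases, whereas the paper leans on the separately proved split identity and the bookkeeping from Lemma~\ref{lem:linear1}. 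One point worth stating explicitly, which you only gesture at when you invoke ``arguing as in the proof of Lemma~\ref{lem:linear1}'': the reduction of the deterministic $\phi^{(1)}$ to $\phi_{\tau_2}$ with $\tau_2=\mathcal{T}(\phi^{(1)})+1$ requires that $\phi^{(1)}$ and $\phi_{\tau_2}$ agree \emph{on the common recurrent class} $\{\oby:\os\geq\mathcal{T}(\phi^{(1)})\}$ (they may disagree on transient states below it), and it is this agreement, together with having the same recurrent class, that gives $\opi^{\phi^{(1)}}=\opi^{\phi_{\tau_2}}$ and hence equal rates.
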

\begin{proof}
Because $\phi$ randomizes between two
actions only at state $(1,\cA)$, $\phi$ 
is deterministic at all other states. There are
two cases to consider: (i) $\cT(\phi) > 0$ 
and $\cT(\phi) = 0$. In the first case, 
$\phi$ has the same 
service rate and utilization rate as 
the threshold 
policy $\phi_{\cT(\phi)+1}$.
In the second case,
\beqan
\phi(\oby) =\begin{cases}
        \gamma &\mbox{if $\oby = (1,\cA)$,}\lb
        1      &\mbox{if $\ow = \cB$,}\lb
        0      &\mbox{otherwise.}
      \end{cases}
\eeqan
The rest of the proof is identical to 
that of Lemma~\ref{lem:linear1} after 
replacing (a) $\phi_{\tau_2}$ with $\phi_{2}$ 
and (b) $\phi_{\tau_1}$ with 
$\phi_{1}$, which is a policy that always rests
with $\onu^{\phi_1} = 
\bar{\mathscr{U}}(\phi_{1}) = 0$.
\end{proof}

Before we state the final lemma, note that, 
when $\phi(1,\cA) = 0$, the process $\obY^\phi$ 
could have two positive recurrent communicating 
classes. For such a policy $\phi$, the 
utilization rate $\bar{\mathscr{U}}$ is not well 
defined. Hence, for a policy $\phi$ with 
$\phi(1, \cA) = 0$, we define a set 
of pairs consisting of a service rate and 
a utilization rate.
\beqan
&&\myhb\bar{{\bf SU}}(\phi) \lb
&&\myhb\Eqdef \left\{\Big(\sum_{\overline{\by} \in \mathbb{Y}} 
    \mu(\overline{s}) 
	\overline{\pi}(\overline{\by}) 
	\phi(\overline{\by}),\ \sum_{\overline{\by} \in \mathbb{Y}} 
	\overline{\pi}(\overline{\by}) 
	\phi(\overline{\by})\Big):\ \overline{\bpi} \in 
	\overline{\boldsymbol{\Pi}}(\phi)\right\}
\eeqan

\begin{lemma}
\label{lem:linear3}
For every $\phi\in\Phi^{\dagger}$ with $\phi(1,\cA) = 0$, there exist $\tau_1, \tau_2 \in \mathbb{S}\cup\{n_s+1\}$ and $\alpha\in[0,1]$ such that
\begin{eqnarray*}
\bar{{\bf SU}}(\phi)
\myeq \Bigg\{\beta\Big((1-\alpha)\onu^{\phi_{\tau_1}} 
    	+ \alpha\onu^{\phi_{\tau_2}}, \\
&& (1-\alpha) \bar{\mathscr{U}}(\phi_{\tau_1}) 
   	+ \alpha\bar{\mathscr{U}}(\phi_{\tau_2})\Big)
   	:\ \beta \in [0,1]\Bigg\}.
\end{eqnarray*}
\end{lemma}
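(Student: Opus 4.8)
The plan is to reduce this case to Lemma~\ref{lem:linear1} by isolating the role of the state $(1,\cA)$. Since $\phi(1,\cA)=0$, at $(1,\cA)$ the policy rests, and because $\rho_{1,0}=0$ the activity state cannot drop below $1$; hence $(1,\cA)$ is an absorbing state of $\obY^\phi$, so $\{(1,\cA)\}$ is one positive recurrent communicating class. Its stationary PMF puts unit mass on $(1,\cA)$, and both $\sum_{\oby}\mu(\os)\overline{\pi}(\oby)\phi(\oby)$ and $\sum_{\oby}\overline{\pi}(\oby)\phi(\oby)$ vanish at that PMF because $\phi(1,\cA)=0$.

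Next I would argue that $\obY^\phi$ has at most one recurrent communicating class besides $\{(1,\cA)\}$. The key structural facts, all using non-preemption together with $\mu(s)<1$ and $\rho_{s,s+1}>0$ for $s<n_s$, are: from any busy state $(\os,\cB)$ the chain moves to $(\os+1,\cB)$ with probability $(1-\mu(\os))\rho_{\os,\os+1}>0$ when $\os<n_s$, from $(n_s,\cB)$ it reaches $(n_s,\cA)$ with positive probability, and from any available state at which $\phi$ chooses $\cW$ it enters a busy state with probability at least $1-\max_s\mu(s)>0$. Consequently every busy state communicates with $(n_s,\cB)$; and any closed communicating class other than $\{(1,\cA)\}$ must contain a busy state (an available state at which $\phi$ rests leads strictly downward, and one at which $\phi$ works leads to a busy state), hence must contain $(n_s,\cB)$. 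Therefore there is at most one such class, which I will call $\overline{\mathbb{C}}$ when it exists. Note $(1,\cA)\notin\overline{\mathbb{C}}$ (it is its own class), so $\phi(1,\cA)=0$ forces $\overline{\mathbb{C}}\subseteq\{2,\dots,n_s\}\times\{\cA,\cB\}$. If no such $\overline{\mathbb{C}}$ exists, then $\{(1,\cA)\}$ is the unique recurrent class, so $\bar{{\bf SU}}(\phi)=\{(0,0)\}$ and the claim holds with $\tau_1=\tau_2=1$ and any $\alpha,\beta$.

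Assume $\overline{\mathbb{C}}$ exists and set $\phi^+(\oby):=\phi(\oby)$ for $\oby\in\overline{\mathbb{C}}$ and $\phi^+(\oby):=1$ otherwise. Then $\phi^+(1,\cA)=1$, and since $\phi$ randomizes at no more than one state so does $\phi^+$, whence $\phi^+\in\Phi^{\dagger}$. Because $\phi^+$ agrees with $\phi$ on $\overline{\mathbb{C}}$, the set $\overline{\mathbb{C}}$ is still closed under $\phi^+$, and by the communication facts above every state reaches $\overline{\mathbb{C}}$ under $\phi^+$; hence $\overline{\mathbb{C}}$ is the unique recurrent class of $\obY^{\phi^+}$, and $\opi^{\phi^+}$ is supported on $\overline{\mathbb{C}}$ where it coincides with the unique stationary PMF of $\obY^\phi$ restricted to $\overline{\mathbb{C}}$. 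By standard finite Markov chain theory, $\overline{\boldsymbol{\Pi}}(\phi)=\{\overline{\pi}_\beta:\beta\in[0,1]\}$, where $\overline{\pi}_\beta(1,\cA)=1-\beta$ and $\overline{\pi}_\beta(\oby)=\beta\,\opi^{\phi^+}(\oby)$ for $\oby\neq(1,\cA)$. Evaluating the two affine functionals at $\overline{\pi}_\beta$, using $\phi(1,\cA)=0$ and $\phi=\phi^+$ on $\overline{\mathbb{C}}$, gives $\sum_{\oby}\mu(\os)\overline{\pi}_\beta(\oby)\phi(\oby)=\beta\,\onu^{\phi^+}$ and $\sum_{\oby}\overline{\pi}_\beta(\oby)\phi(\oby)=\beta\,\bar{\mathscr{U}}(\phi^+)$, so $\bar{{\bf SU}}(\phi)=\{\beta(\onu^{\phi^+},\bar{\mathscr{U}}(\phi^+)):\beta\in[0,1]\}$. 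Applying Lemma~\ref{lem:linear1} to $\phi^+$ yields $\tau_1,\tau_2\in\mathbb{S}\cup\{n_s+1\}$ and $\alpha\in[0,1]$ with $\onu^{\phi^+}=(1-\alpha)\onu^{\phi_{\tau_1}}+\alpha\onu^{\phi_{\tau_2}}$ and $\bar{\mathscr{U}}(\phi^+)=(1-\alpha)\bar{\mathscr{U}}(\phi_{\tau_1})+\alpha\bar{\mathscr{U}}(\phi_{\tau_2})$; substituting gives the stated description of $\bar{{\bf SU}}(\phi)$.

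The step I expect to be the main obstacle is establishing that there is at most one recurrent class besides $\{(1,\cA)\}$ — that is, ruling out two disjoint ``upper'' recurrent classes — which rests essentially on non-preemption and on $\mu$ being strictly below $1$ and the transition parameters $\rho$ being strictly positive (so that busy states communicate upward to $(n_s,\cB)$). Once that is in hand, the remainder is bookkeeping about stationary PMFs of a reducible chain plus a direct appeal to Lemma~\ref{lem:linear1}.
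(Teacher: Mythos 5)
Your proof is correct and follows essentially the same route as the paper's: separate the absorbing state $(1,\cA)$ (whose rate pair is $(0,0)$), note that there is at most one other recurrent class, replace $\phi$ by a policy in $\Phi^\dagger$ with $\phi(1,\cA)=1$ agreeing with $\phi$ on that class so Lemma~\ref{lem:linear1} applies, and parameterize $\overline{\boldsymbol{\Pi}}(\phi)$ linearly in $\beta$. You spell out more carefully the structural argument that rules out a second ``upper'' recurrent class and the exact form of $\overline{\boldsymbol{\Pi}}(\phi)$, which the paper leaves mostly implicit by invoking $\mathcal{T}(\phi)$, but the substance is the same.
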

\begin{proof}
If $\cT(\phi) = 0$ which implies that the policy always rest, it is clear that $(1,\cA)$ is an absorbing state and the service rate and the utilization rate are both zero. If $\cT(\phi) > 0$, we can represent $\phi$ as
\beqan
\phi(\oby) =\begin{cases}
        0 &\mbox{if $\oby = (1,\cA)$,}\lb
        \phi'(\oby) &\mbox{otherwise,}
      \end{cases}
\eeqan
where $\phi' = \phi_{\cT(\phi)+1}$ as in \eqref{eq:form1}. The MC now has two positive recurrent communicating classes, and the stationary PMF can be any convex combination of stationary PMFs of $\phi'$ and $\phi_1$. This is also true for utilization rate and service rate. 
\end{proof}
{\it Proof of Theorem~\ref{thm:U0Convex}: \quad} 
By Lemmas~\ref{lem:linear1} through \ref{lem:linear3}, the 
utilization rate and the service rate pair for every 
policy in $\Phi^\dagger$ can be written as a convex combination 
of rate pairs of at most two threshold policies and $(0,0)$. 
Hence, the optimization problem \eqref{eq:ReduceMinimize} can be 
transformed into a following optimization problem over two variables 
$\alpha,\beta\in[0,1]$ for convex combination and two thresholds
$\tau_1,\tau_2\in\mathbb{S}\cup\{n_s+1\}$:

\begin{equation*}
\begin{matrix}
\displaystyle \min_{\begin{matrix}\alpha,\beta\in[0,1]\\\tau_1,\tau_2\in\mathbb{S}\cup\{n_s+1\}\end{matrix}} & \beta\Big((1-\alpha)
	\bar{\mathscr{U}}(\phi_{\tau_1}) + \alpha\bar{\mathscr{U}}(\phi_{\tau_2})\Big)  \\
\textrm{s.t.} & \beta\Big((1-\alpha)\onu^{\phi_{\tau_1}} + \alpha\onu^{\phi_{\tau_2}}\Big)
	= \onu
\end{matrix}
\end{equation*}

It is clear from this argument that 
$\{ \bar{\mathscr{U}}^{0}_{\mathbb{L}}(\overline{\nu}): \overline{\nu} \in [0, \overline{\nu}^*] \}$ forms
the lower boundary of the {\it convex hull} of $\{(0,0)\} 
\cup \big\{ \big(\onu^{\phi_{\tau}}, 
\bar{\mathscr{U}}(\phi_{\tau})\big): 
\tau\in\mathbb{S}\cup\{n_s+1\} \big\}$.
Because there are a finite number of rate pairs 
of threshold policies, the lower bound of this convex hull 
is non-decreasing, piece-wise affine and convex for $\onu 
\in [0,\onu^*]$.

\subsection{Derivation of Stationary PMF in 
\eqref{eq:distributionSplit}}
	\label{appn:disSplit}

In order to prove that \eqref{eq:distributionSplit} 
is the correct stationary PMF, it suffices to show 
that the specified PMF satisfies the 
following global balance equations:
\beqa
\label{eq:distributionSplit2}
\opi^{\phi'}(\oby)
\myeq \sum_{\oby' \in \mathbb{Y}} 
	\opi^{\phi'}(\oby') \ 
	\overline{\bf P}^{\phi'}_{\oby', \oby} \ 
	\mbox{ for all } \oby \in \mathbb{Y}, 
\eeqa
where $\overline{\bf P}^{\phi'}$
is the one-step transition matrix of 
$\obY^{\phi'}$.
To this end, we shall demonstrate that the 
RHS of \eqref{eq:distributionSplit} is equal 
to the RHS of \eqref{eq:distributionSplit2}. 

First, we break the RHS of 
\eqref{eq:distributionSplit2} into two terms. 
\beqa
\hspace{-0.3in}	\sum_{\oby' \in \mathbb{Y}} 
	\opi^{\phi'}(\oby') \ 
	\overline{\bf P}^{\phi'}_{\oby', \oby} 
\myeq \overline{\pi}^{\phi'}(\tau_2-1, \cA)
	\overline{\bf P}^{\phi'}_{(\tau_2-1,\cA), \oby}
	\label{eq:split-1} \\
&& \myhb + \sum_{\oby' \in \mathbb{Y}
	\setminus \{(\tau_2-1,\cA)\} } \
	\opi^{\phi'}(\oby') \ 
	\overline{\bf P}^{\phi'}_{\oby', \oby}
	\label{eq:split-2}
\eeqa
We then rewrite each term on the RHS: 
from \eqref{eq:distributionSplit}
and \eqref{eq:form1}, we have 
\beqan
\eqref{eq:split-1}
\myeq \big(\alpha \cdot	
	\opi^{\phi_{\tau_1}}(\tau_2-1, \cA) 
	+ (1-\alpha) \opi^{\phi_{\tau_1}}(\tau_2-1,\cA)
		\big) \lb
&& \times \big( (1-\gamma) 
	\overline{\bf P}^{\phi_{\tau_1}}_{(\tau_2-1,
		\cA), \oby}
	+ \gamma 
	\overline{\bf P}^{\phi_{\tau_2}}_{(\tau_2-1,
		\cA), \oby} \big) .
\eeqan
Substituting the expression in \eqref{eq:gamma}
for $\gamma$,
\beqa
\eqref{eq:split-1}
\myeq (1-\alpha )\opi^{\phi_{\tau_1}}(\tau_2-1,\cA)
	\overline{\bf P}^{\phi_{\tau_1}}_{(\tau_2-1,
		\cA), \oby} 
	\label{eq:split-1a} \\
&& + \alpha \cdot \opi^{\phi_{\tau_2}}(\tau_2-1, 
	\cA) 
	\overline{\bf P}^{\phi_{\tau_2}}_{(\tau_2-1,
	\cA), \oby} .
	\nonumber
\eeqa
Second, from \eqref{eq:distributionSplit}
\beqan
\myb \eqref{eq:split-2}
\myeq \sum_{\oby' \in \mathbb{Y}
		\setminus \{(\tau_2-1,\cA)\} } 
	\big( \alpha \cdot \opi^{\phi_{\tau_2}}(\oby') \lb
&& \hspace{0.8in} 	
    + (1-\alpha)\opi^{\phi_{\tau_1}}(\oby')
			\big)
	\overline{\bf P}^{\phi'}_{\oby', \oby}. 
\eeqan
From \eqref{eq:form1}, for all $\oby'
= (\os', \ow') \in 
\mathbb{Y} \setminus \{ (\tau_2 - 1, 
\cA) \}$, we have 
$\phi'(\by')
= \phi_{\tau_1}(\by')$ and
$\overline{\bf P}^{\phi'}_{\oby', \oby}
= \overline{\bf P}^{\phi_{\tau_1}}_{\oby', 
\oby}$. Moreover, because $\phi_{\tau_2}$
is a deterministic policy with a threshold
on the activity state of the
server, $\opi^{\phi_{\tau_2}}(\oby')
= 0$ for all $\oby' = (\os', \ow')$ with 
$\os' < \tau_2 - 1$. 
Hence, for all $\oby' \in \mathbb{Y}
\setminus \{(\tau_2 - 1), \cA) \}$ with 
$\opi^{\phi_{\tau_2}}(\oby') > 0$, 
together with the assumption $\tau_1 \leq \tau_2$, 
we have 
\beqan
\phi_{\tau_1}(\oby') 
= \phi_{\tau_2}(\oby')
= \begin{cases}
	0 & \mbox{if } \os' \geq \tau_2 
		\mbox{ and } \ow' = \cA \\
	1 & \mbox{if } \ow' = \cB \\
	\end{cases}
\eeqan
and, consequently, 
$\overline{\bf P}^{\phi_{\tau_1}}_{\oby', \oby}
= \overline{\bf P}^{\phi_{\tau_2}}_{\oby', \oby}$. 
Therefore, 
\beqa
\eqref{eq:split-2}
\myeq \sum_{\oby' \in \mathbb{Y}
		\setminus \{(\tau_2-1,\cA)\} } 
	\Big( \alpha \cdot \opi^{\phi_{\tau_2}}(\oby')
		\overline{\bf P}^{\phi_{\tau_2}}_{\oby',
			\oby} 
	\label{eq:split-2a} \\
&& \hspace{0.9in} 
	+ (1-\alpha) \opi^{\phi_{\tau_1}}(\oby')
	\overline{\bf P}^{\phi_{\tau_1}}_{\oby', 
		\oby} \Big) .
	\nonumber
\eeqa

Substituting the new expressions in 
\eqref{eq:split-1a} and \eqref{eq:split-2a} for
\eqref{eq:split-1} and \eqref{eq:split-2}, 
respectively, we obtain
\beqan
\sum_{\oby' \in \mathbb{Y}} 
	\opi^{\phi'}(\oby') \ 
	\overline{\bf P}^{\phi'}_{\oby', \oby} 
\myeq \alpha \cdot \opi^{\phi_{\tau_2}}(\oby)
	+ (1-\alpha)\opi^{\phi_{\tau_1}}(\oby), 
\eeqan
where the equality follows from the 
fact that $\opi^{\phi_{\tau_1}}$ and 
$\opi^{\phi_{\tau_2}}$ are the stationary
PMFs of $\obY^{\phi_{\tau_1}}$
and $\obY^{\phi_{\tau_2}}$, respectively. 

\subsection{Proof of Lemma~\ref{lem:potential}}
\label{subsec:ProofOfLemmaPotential}

We shall first construct a temporary function 
$f$ that will be used to 
construct a potential function 
satisfying all conditions in 
the lemma.
For the simplicity of exposition, 
suppose that the states in $\mathbb{M}$
are ordered in some arbitrary fashion
and let $n^\star$ be some state belonging to 
the recurrent communicating 
class. First, assign $f({n^\star}) = 0$. 
Next, for each $m \in \mathbb{M}
\setminus \{n^\star\} =: \mathbb{M}^-$
(with $n_M^- := |\mathbb{M}^-|$), we rewrite 
the constraints in \eqref{eq:PF1}
as follows.
\beqa
&& \hspace{-0.3in} 
f(m) - \EX\left[f(M_{k+1}) 
		\ | \ M_{k} = m \right] \lb
\myeq f(m) 
- \sum_{\hat{m} \in \mathbb{M}}
	f(\hat{m}) 
	P_{M_{k+1} | M_{k}} ( \hat{m}
		\ | \ m) \lb
\myeq \big( 1 - P_{M_{k+1} | M_{k}} (m 
	\ | \ m) \big) f(m) \lb 
&& - \sum_{\hat{m} \in \mathbb{M}^- \setminus \{m\} } 
	f(\hat{m})
	P_{M_{k+1} | M_{k}} ( \hat{m} \ | \ m) \lb
\myeq r_{avg} - \EX \Big[ \mathcal{R}(M_{k+1},M_k) \ | \ M_k = m \Big] 
    =: \xi_m
	\label{eq:ftmp_constr}
\eeqa
These constraints can be put in
a matrix form as follows:
\beqa
{\bf B} \ {\bf f} = \boldsymbol{\xi}
    \label{eq:lemma9-1}
\eeqa
where ${\bf f} = \big( f(m): m \in \mathbb{M}^- \big)$ and  
$\boldsymbol{\xi} = \big( \xi_m: m \in \mathbb{M}^- \big)$
are $n_M^-$-dimensional column vectors, 
and ${\bf B}$ is an $n_M^- \times n_M^- $ matrix whose elements are given 
by
\beqan
\quad {\bf B}_{j, l} 
= \begin{cases}
    1 - P_{M_{k+1}|M_k}(j \ | \ j) & \mbox{if } j = l \\
    - P_{M_{k+1}|M_k}(l \ | \ j) & \mbox{if } j \neq l
\end{cases}, \quad j, l \in \mathbb{M}^-.
\eeqan

To complete the proof, we need
the following lemma. 

\begin{lemma}
The matrix ${\bf B}$ is 
weakly chained diagonally dominant.
\end{lemma}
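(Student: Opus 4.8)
The goal is to show that the matrix $\mathbf{B}$ defined by $\mathbf{B}_{j,l} = 1 - P_{M_{k+1}|M_k}(j\mid j)$ on the diagonal and $\mathbf{B}_{j,l} = -P_{M_{k+1}|M_k}(l\mid j)$ off-diagonal, indexed by $\mathbb{M}^- = \mathbb{M}\setminus\{n^\star\}$, is weakly chained diagonally dominant (w.c.d.d.). Recall the definition: $\mathbf{B}$ is w.c.d.d. if (i) it is weakly diagonally dominant, i.e. $|\mathbf{B}_{j,j}| \geq \sum_{l \neq j}|\mathbf{B}_{j,l}|$ for every row $j$, and (ii) for every row index $j$ there is a directed path in the graph $G(\mathbf{B})$ (with an edge $j \to l$ whenever $\mathbf{B}_{j,l}\neq 0$, $l \neq j$) from $j$ to some row $i$ that is \emph{strictly} diagonally dominant, $|\mathbf{B}_{i,i}| > \sum_{l\neq i}|\mathbf{B}_{i,l}|$.

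First I would verify weak diagonal dominance, which is immediate from the fact that $P_{M_{k+1}|M_k}(\cdot\mid j)$ is a probability distribution: for each $j \in \mathbb{M}^-$,
\[
\sum_{l \in \mathbb{M}^-,\, l\neq j} |\mathbf{B}_{j,l}| = \sum_{l \in \mathbb{M}^-,\, l\neq j} P_{M_{k+1}|M_k}(l\mid j) \leq \sum_{l \in \mathbb{M},\, l \neq j} P_{M_{k+1}|M_k}(l\mid j) = 1 - P_{M_{k+1}|M_k}(j\mid j) = |\mathbf{B}_{j,j}|,
\]
using that $1 - P_{M_{k+1}|M_k}(j\mid j) \geq 0$ so the diagonal entry is already nonnegative. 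Moreover this inequality is strict precisely when $P_{M_{k+1}|M_k}(n^\star \mid j) > 0$, i.e. when state $j$ can reach $n^\star$ in one step; call such rows \emph{strictly dominant}.

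Next I would establish the chain condition. Fix an arbitrary $j \in \mathbb{M}^-$. Since $M$ has a unique recurrent communicating class and $n^\star$ was chosen to lie in it, every state of $\mathbb{M}$ — transient or recurrent — reaches $n^\star$ with positive probability; hence there is a finite sequence $j = j_0, j_1, \ldots, j_p = n^\star$ with $P_{M_{k+1}|M_k}(j_{i+1}\mid j_i) > 0$ for each $i$, and we may take this path to be simple so that $j_0,\dots,j_{p-1}$ are distinct elements of $\mathbb{M}^-$ (only the last vertex equals $n^\star$). For $i = 0,\dots,p-2$ the edge $j_i \to j_{i+1}$ lies in $G(\mathbf{B})$ because $\mathbf{B}_{j_i,j_{i+1}} = -P_{M_{k+1}|M_k}(j_{i+1}\mid j_i) \neq 0$ and $j_{i+1} \in \mathbb{M}^-$. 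Thus $j = j_0 \to j_1 \to \cdots \to j_{p-1}$ is a directed path in $G(\mathbf{B})$, and its endpoint $j_{p-1}$ satisfies $P_{M_{k+1}|M_k}(n^\star \mid j_{p-1}) = P_{M_{k+1}|M_k}(j_p\mid j_{p-1}) > 0$, so by the observation above row $j_{p-1}$ is strictly diagonally dominant. This is the required path, and since $j$ was arbitrary the chain condition holds; therefore $\mathbf{B}$ is w.c.d.d.

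**Anticipated obstacle.** The only subtlety — and the step I would be most careful about — is handling the path extraction cleanly: I need the directed path in $G(\mathbf{B})$ to stay inside $\mathbb{M}^-$ except possibly at its very end, and I need to make sure I land on a \emph{strictly} dominant row rather than merely reaching $n^\star$ (which is not a vertex of $\mathbf{B}$ at all). Taking the reachability path to $n^\star$ to be simple and then truncating it one step before $n^\star$ resolves both issues: the truncated path lies in $\mathbb{M}^-$, and its last vertex is strictly dominant precisely because its next (now removed) step went to $n^\star$. The reachability of $n^\star$ from every state is exactly where the hypothesis of a \emph{unique} recurrent class is used — with two recurrent classes a state in the other class could never reach $n^\star$, and the argument would fail.
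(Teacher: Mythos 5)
Your proof is correct and follows essentially the same approach as the paper: weak dominance follows from row-stochasticity, strict dominance is identified with rows $j$ having $P_{M_{k+1}|M_k}(n^\star\mid j)>0$, and the unique-recurrent-class hypothesis supplies a path in $G(\mathbf{B})$ from any row to a strictly dominant one. The only difference is that you spell out the path extraction (taking a simple reachability path to $n^\star$ and truncating one step early) that the paper asserts without detail; this is a genuine but minor tightening of the argument.
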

\begin{proof} 
First, the matrix is 
weakly diagonally dominant because
\beqan
\big| {\bf B}_{j, j} \big|
\myeq 1 - P_{M_{k+1} | M_k}( j \ | \ j ) = \sum_{l \in \mathbb{M} \setminus \{j\} }
	P_{M_{k+1} | M_k} ( l \ | \ j) \lb
\mygeq \sum_{l \in \mathbb{M}^- \setminus \{ j \} }
	P_{M_{k+1} | M_k} ( l \ | \  j) = \sum_{l \in \mathbb{M}^- \setminus 
	\{j \} } \big| {\bf B}_{j, l} \big|. 
\eeqan
Second, for any state $j$ in $\mathbb{M}^-$ with 
$P_{M_1 | M_0} ( {n^\star} \ | \ j) > 0$, 
the $j$th row of ${\bf B}$ is strictly diagonally 
dominant (SDD) because
\beqan
\big| {\bf B}_{j,j} \big|
> \sum_{l \in \mathbb{M}^- \setminus \{j\} }
	\big| {\bf B}_{j, l} \big|. 
\eeqan

Finally, note that the $j$th row of ${\bf B}$, $j 
\in \mathbb{M}^-$, is 
not SDD only if $P_{M_1 | M_0} ( n^\star \ | \ j) = 0$. 
Suppose that there exists
a row of ${\bf B}$, say the $l'$th row, which 
is not SDD. Then, since $M$
has only one recurrent communicating class, there exist (i) some
$l^+$ in $\mathbb{M}^-$ 
such that the $l^+$th row is SDD and (ii) 
a path from state $l'$ to state 
${l^+}$ in the directed graph associated
with matrix ${\bf B}$. This proves that the
matrix ${\bf B}$ is weakly chained diagonally
dominant. 
\end{proof}

Since weakly chained diagonally dominant matrix 
is nonsingular~\cite{azimzadeh2016weakly}, there
is a unique solution to the set of 
linear equations in \eqref{eq:lemma9-1}, which 
is then assigned to the temporary function $f(m)$, 
$m \in \mathbb{M}^-$.
Recall that, by construction, the function 
$f$ satisfies the condition \eqref{eq:PF1} at
all states $m$ in $\mathbb{M}^-$.
We now prove that the condition 
\eqref{eq:PF1} holds at state $n^*$ as well.

First, we can show that the following equality
holds:
\beqa
&& \hspace{-0.5in}
r_{avg} = 
\sum_{m \in \mathbb{M}} \varrho^M(m)
	\Big(\EX \big[ \mathcal{R}(M_{k+1},M_k) 
		\ | \ M_k = m \big]
	\label{eq:ravg2} \\
&& \hspace{0.3in} - \EX \big[ f(M_{k+1}) - f(M_{k}) 
	\ | \ M_k = m \big] \Big), 
	\nonumber
\eeqa
where $\varrho^M$ is the stationary PMF of 
$M$. From the definition of stationary PMF, 
\beqan
&& \hspace{-0.3in}
\sum_{m \in \mathbb{M}} \varrho^M(m)
\EX \big[ f(M_{k+1}) \ | \ M_k = m \big]  \lb  
\myeq \sum_{m \in \mathbb{M}} \varrho(m)
    \Big( \sum_{m' \in \mathbb{M}} f(m') 
        P_{M_{k+1} | M_k}(m' \ | \ m) \Big) \lb 
\myeq \sum_{m' \in \mathbb{M}} f(m')
    \Big( \sum_{m \in \mathbb{M}} \varrho(m)
        P_{M_{k+1} | M_k}(m' \ | \ m) \Big) \lb 
\myeq \sum_{m' \in \mathbb{M}} f(m') \varrho(m').
\eeqan
Therefore, we get
\beqan
\sum_{m \in \mathbb{M}} \varrho^M(m)
	\Big( \EX \big[ f(M_{k+1}) - f(M_{k}) 
	\ | \ M_k = m \big] \Big) = 0, 
\eeqan
and the equality in \eqref{eq:ravg2} follows
from the definition of $r_{avg}$.

Rewriting \eqref{eq:ravg2} using the equality in 
\eqref{eq:ftmp_constr}, we obtain 
\beqan
r_{avg} 
\myeq \varrho^M(n^\star)
	\Big(\EX \big[ \mathcal{R}(M_{k+1},M_k) \ | \ 
	    M_k = n^\star \big]\lb
	&& \quad - \EX \big[ f(M_{k+1}) - f(M_{k}) \ | \ 
	    M_k = n^\star \big] \Big)\lb
	&& \myhb + \sum_{m \in \mathbb{M}^- }
		\varrho^M(m) \ r_{avg}.
\eeqan
Moving the second
term on the RHS to the LHS, we obtain
\beqan
&& \hspace{-0.3in}
\Big(1 - \sum_{m \in \mathbb{M}^- }
		\varrho^M(m) \Big) r_{avg}
= \varrho^M(n^\star) \ r_{avg} \lb
\myeq \varrho^M(n^\star)
	\Big(\EX \big[ \mathcal{R}(M_{k+1},M_k) \ | \ M_k = n^\star \big]\lb
	&& - \EX \big[ f(M_{k+1}) - f(M_{k}) \ | \ M_k = n^\star \big] \Big).
\eeqan
Thus, we have the desired equality 
$\EX \big[ \mathcal{R}(M_{k+1},M_k) \ | \ M_k = n^\star \big]$
$- \EX \big[ f(M_{k+1}) - f(M_{k}) \ | \ M_k = n^\star \big]$ 
$= r_{avg}$ 
because $\varrho^M(n^\star)>0$ from the 
assumption that $n^\star$ belongs to the unique positive 
recurrent communicating class. 

Finally, we define the nonnegative 
potential-like function 
$\mathscr{H}: \mathbb{M} \to \R_+$, where 
$\mathscr{H}(m) = f(m) - \min_{m' \in 
\mathbb{M}} f(m')$
for all $m \in \mathbb{M}$. From its
construction, the
function $\mathscr{H}$ is non-negative and 
satisfies all the constraints in the 
lemma.

\subsection{Auxiliary Results and Proof of Theorem~\ref{thm:MainDistConv}}
\label{app:ProofOfMainDistConv}

We make use of the following lemmas 
(Lemmas \ref{lem:DistConverge1} through
\ref{lem:DistConverge3}) to complete the
proof of the theorem. Let $\overline{\bf P}^{\phi}$
and $\obpi^\phi$ be the one-step 
transition matrix and the stationary PMF
(given as a row vector), respectively, of 
$\overline{\bY}^{\phi}$.
Recall that we defined 
$\Phi^\epsilon_{R}$ to be the set of $\phi\in\Phi_R$ 
such that $\phi(1,\cA) \geq \epsilon$ and that,  
for any $\onu$ in $(\lambda,\onu^*)$, the set 
$\Phi_R^{\epsilon}(\onu)$ is nonempty by 
Proposition~\ref{prop:PhiEpsilonNonEmpty}.

\begin{lemma}
\label{lem:DistConverge1}
There exists a positive constant $\eta_\epsilon$ such that, for any distribution ${\bf p}$ over $\mathbb{Y}$, we have
\beqan
\sum_{r=1}^\infty \left\Vert{\bf p} 
	\big(\overline{\bf P}^{\phi} \big)^r 
	- \obpi^\phi \right\Vert_1 \myleq \eta_\epsilon 
	\ \mbox{ for all } \phi \in \Phi_R^{\epsilon}.
\eeqan
\end{lemma}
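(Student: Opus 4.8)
The plan is to prove that the family $\{\overline{\bf P}^{\phi} : \phi \in \Phi_R^\epsilon\}$ (with $\epsilon>0$ fixed) is \emph{uniformly} geometrically ergodic and then sum a geometric series. Since $\epsilon>0$ we have $\Phi_R^\epsilon\subseteq\Phi_R^+$, so by \cite[Corollary~1]{Lin2019Scheduling-task} each $\obY^{\phi}$ with $\phi\in\Phi_R^\epsilon$ has a unique recurrent communicating class $\mathbb{C}_\phi$ and a unique stationary PMF $\obpi^{\phi}$. I would first observe that $\mathbb{C}_\phi$ is aperiodic. Indeed, a short argument, using $\phi(1,\cA)\ge\epsilon>0$ and $\rho_{s,s-1}>0$ for $s>1$ (so that the available state of lowest activity level in $\mathbb{C}_\phi$ cannot be closed under rest unless it is $(1,\cA)$, where $\phi$ works with positive probability), shows that $\mathbb{C}_\phi$ contains a busy state $(s_0,\cB)$; by \eqref{eq:REDWTran} this state carries a self-loop of probability $(1-\mu(s_0))(1-\rho_{s_0,s_0+1})>0$, hence $\mathbb{C}_\phi$ is aperiodic. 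As $\obY^{\phi}$ has finite state space, a single recurrent class, and that class is aperiodic, $(\overline{\bf P}^{\phi})^N\to{\bf 1}\,\obpi^{\phi}$ entrywise as $N\to\infty$ (here ${\bf 1}$ is the all-ones column vector). Writing $\delta(\cdot)$ for the Dobrushin ergodic coefficient, this gives $\delta\big((\overline{\bf P}^{\phi})^{N}\big)\to 0$, so there is $N_\phi\in\N$ with $\delta\big((\overline{\bf P}^{\phi})^{N_\phi}\big)<1$.

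Next I would upgrade this to a statement uniform in $\phi$. The entries of $\overline{\bf P}^{\phi}$ are affine in the coordinates of $\phi$ by \eqref{Eqdef:MDPRED}--\eqref{eq:REDWTran}, hence for each fixed $N$ the map $\phi\mapsto\delta\big((\overline{\bf P}^{\phi})^{N}\big)$ is continuous, and $\Phi_R^\epsilon=\{\phi\in\Phi_R : \phi(1,\cA)\ge\epsilon\}$ is compact. Around each $\phi_0\in\Phi_R^\epsilon$ I pick, via the previous paragraph, an exponent $N_{\phi_0}$ and an open neighbourhood $U_{\phi_0}$ on which $\delta\big((\overline{\bf P}^{\phi})^{N_{\phi_0}}\big)\le\gamma_{\phi_0}$ for some $\gamma_{\phi_0}<1$. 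Passing to a finite subcover $U_{\phi_1},\dots,U_{\phi_J}$, I set $N:=\mathrm{lcm}(N_{\phi_1},\dots,N_{\phi_J})$ and $\gamma:=\max_{j\le J}\gamma_{\phi_j}^{\,N/N_{\phi_j}}<1$; then submultiplicativity $\delta(PQ)\le\delta(P)\delta(Q)$ of the Dobrushin coefficient yields $\delta\big((\overline{\bf P}^{\phi})^{N}\big)\le\gamma$ for \emph{every} $\phi\in\Phi_R^\epsilon$.

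Finally, fix any distribution ${\bf p}$ on $\mathbb{Y}$ and any $\phi\in\Phi_R^\epsilon$, and write $r=kN+i$ with $k\ge 0$ and $0\le i<N$. Since $\obpi^{\phi}\overline{\bf P}^{\phi}=\obpi^{\phi}$, the signed row vector $v:={\bf p}(\overline{\bf P}^{\phi})^{i}-\obpi^{\phi}$ has zero total mass and $\|v\|_1\le 2$, and ${\bf p}(\overline{\bf P}^{\phi})^{r}-\obpi^{\phi}=v\,(\overline{\bf P}^{\phi})^{kN}$. The contraction inequality $\|w\,P\|_1\le\delta(P)\|w\|_1$ (valid for any stochastic $P$ and any zero-mass $w$), combined with submultiplicativity of $\delta$, then gives $\|{\bf p}(\overline{\bf P}^{\phi})^{r}-\obpi^{\phi}\|_1 \le \delta\big((\overline{\bf P}^{\phi})^{N}\big)^{k}\,\|v\|_1 \le 2\gamma^{k}$. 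Grouping the sum over $r\ge 1$ into consecutive blocks of length $N$ yields $\sum_{r\ge 1}\|{\bf p}(\overline{\bf P}^{\phi})^{r}-\obpi^{\phi}\|_1\le N\sum_{k\ge 0}2\gamma^{k}=\tfrac{2N}{1-\gamma}=:\eta_\epsilon$, which is uniform in ${\bf p}$ and in $\phi\in\Phi_R^\epsilon$, as claimed.

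The main obstacle is exactly this uniformity over $\phi$: the recurrent class $\mathbb{C}_\phi$ and the mixing time genuinely vary with $\phi$ (e.g.\ $\mathbb{C}_\phi$ drifts toward the high-activity states as $\phi$ approaches the always-work policy), so one cannot minorize toward a single fixed state and read off a rate that is independent of $\phi$. The compactness/continuity device supplies the uniformity; its one technical nuisance is that the exponents $N_{\phi_j}$ produced by the finite subcover differ, which is why one passes to their least common multiple before invoking submultiplicativity of $\delta$.
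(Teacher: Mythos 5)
Your proof is correct, and it reaches the conclusion by a genuinely different route than the paper. The paper obtains uniformity over $\phi\in\Phi_R^\epsilon$ by an \emph{explicit} Doeblin minorization: it constructs, for every $\by\in\mathbb{Y}$, a path of length $2n_s$ that hits $(n_s,\cB)$, and bounds its probability below by an explicit constant $\tilde{\alpha}_\epsilon$ depending only on $\epsilon$, the $\rho$'s, and $\mu$ (equations~\eqref{eq:lowerbdCondPAuxilary}--\eqref{eq:AlphaTildeDef}). This makes every column of $(\overline{\bf P}^\phi)^{2n_s}$ bounded below in the $(n_s,\cB)$ entry, whence Seneta's bound $h\big((\overline{\bf P}^\phi)^{2n_s}\big)\le 1-\tilde{\alpha}_\epsilon$ (the function $h$ in the paper is exactly your Dobrushin coefficient $\delta$). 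From there both proofs are the same: submultiplicativity plus geometric summation. Your version replaces the explicit path construction by a soft compactness–continuity argument: establish aperiodicity of the unique recurrent class for each fixed $\phi$, extract a finite subcover of $\Phi_R^\epsilon$, and take an LCM of exponents. Both are valid; the paper's approach buys an explicit formula for $\eta_\epsilon$ in terms of the model primitives (useful if one wants quantitative bounds), while yours is shorter on casework but produces a nonconstructive $\eta_\epsilon$.

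One remark in your closing paragraph is factually off and worth correcting: you write that ``one cannot minorize toward a single fixed state and read off a rate that is independent of $\phi$.'' That is precisely what the paper does. Although the recurrent class $\mathbb{C}_\phi$ does vary with $\phi$, the state $(n_s,\cB)$ lies in $\mathbb{C}_\phi$ for \emph{every} $\phi\in\Phi_R^\epsilon$: from any state one can descend (resting) to $(1,\cA)$ where $\phi(1,\cA)\ge\epsilon$ forces work with positive probability, and once busy the non-preemptive dynamics climb to $(n_s,\cB)$ with probability bounded below uniformly in $\phi$. The only role that $\epsilon$ plays is to bound the one transition out of $(1,\cA)$ that is under the policy's control; every other factor in $\tilde{\alpha}_\epsilon$ is policy-independent. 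So the direct minorization is not merely possible --- it is the shorter and more informative argument here, and it is what the paper uses. Your compactness route is a legitimate fallback, but the claimed obstacle to the minorization approach does not exist in this model.

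Finally, a small point on your aperiodicity argument: you should also note that $\mathbb{C}_\phi$ necessarily contains an available state (since from any busy state the chain becomes available with probability $\mu(s)>0$), so the ``available state of lowest activity level in $\mathbb{C}_\phi$'' you reason about is guaranteed to exist.
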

\begin{proof}
First, we can find positive $\tilde{\alpha}_\epsilon$
such that, for all $\phi' \in \Phi^\epsilon_{R}$, 
\begin{equation}
\label{eq:lowerbdCondPAuxilary}
P \Big ( \obY_{k+2n_s}^{\phi'} = (n_s,\cB) 
	\ \big | \ \obY_{k}^{\phi'} = \by \Big ) 
\geq \tilde{\alpha}_{\epsilon}, 
	\quad \by \in \mathbb{Y}. 
\end{equation}
One can verify that, for example, 
\begin{eqnarray*}
\label{eq:AlphaTildeDef}
\myb \tilde{\alpha}_{\epsilon} 
= \epsilon (1-\mu(n_s))^{2n_s}  
	\Pi_{s=1}^{n_s-1} (1-\mu(s)) \rho_{s+1,s}\rho_{s,s+1}
\end{eqnarray*}
satisfies the inequality in \eqref{eq:lowerbdCondPAuxilary}.

Next, we follow an analysis that is similar to the proof of Theorem 4.16 of \cite{seneta2006non}. We define a function $h:\mathbb{R}^{2 n_S \times 
2 n_S} \rightarrow \mathbb{R}_+$ with
\beqan
h({\bf P}) 
= \frac{1}{2}\max_{i,j}\sum_{\ell=1}^{2 n_S} 
    \big| {\bf P}_{i, \ell} - {\bf P}_{j, \ell} \big|, 
\eeqan
where ${\bf P}_{i, \ell}$ is the element in the $i$th
row and $\ell$th column of matrix ${\bf P}$.

Note that since $\phi \in \Phi^{\epsilon}_R$, by
\eqref{eq:lowerbdCondPAuxilary} every element in the column of $\big(\overline{\bf P}^{\phi}\big)^{2n_s}$ corresponding to $(n_s,\cB)$ is lower-bounded by some positive $\tilde{\alpha}_{\epsilon}$. Thus, equation (4.6) of \cite{seneta2006non} tells us 
\beqan
 h\Big(\big(\overline{\bf P}^{\phi}\big)^{2n_s}\Big) \leq 1-\tilde{\alpha}_{\epsilon}.
\eeqan
Proceeding with the proof, for every $r\geq2n_s$ and  $\kappa = \floor*{r/2n_s}$,
\beqan
&& \hspace{-0.3in} 
h\Big(\Big(\overline{\bf P}^{\phi}\Big)^{r}\Big) 
=  h\Big(\big(\overline{\bf P}^{\phi}\big)^{r-2 \kappa n_s}
    \big(\overline{\bf P}^{\phi}\big)^{2 \kappa n_s}\Big) \lb
\myleq 
     h\Big(\big(\overline{\bf P}^{\phi}\big)^{r-2 \kappa n_s}\Big)
        \Big( h\Big(\big(\overline{\bf P}^{\phi}\big)^{2n_s}\Big)
            \Big)^\kappa \lb
\myleq  
    \Big(1-\tilde{\alpha}_{\epsilon}\Big)^{\kappa} 
\leq \Big(1-\tilde{\alpha}_{\epsilon}\Big)^{\frac{r}{2n_s}-1}= K_\epsilon\sigma_\epsilon^r,
\eeqan
where $K_\epsilon = (1-\tilde{\alpha}_{\epsilon})^{-1}$ and $\sigma_\epsilon = (1-\tilde{\alpha}_{\epsilon})^{1/2n_s}$. The first inequality follows from Lemma 4.3 of \cite{seneta2006non}, 
which states
$h({\bf P} {\bf P}^\dagger) \leq  h({\bf P}) h({\bf P}^\dagger)$ for any two stochastic
matrices ${\bf P}$ and ${\bf P}^\dagger$. The second inequality follows from the observation $ h({\bf P}) \leq 1$ for any stochastic matrix ${\bf P}$, and $\Big(1-\tilde{\alpha}_{\epsilon}\Big) < 1$ leads to the final inequality.
Combining with Lemma 4.3 of \cite{seneta2006non} and the fact that the sum of all elements of ${\bf p}-\obpi^{\phi}$ equals zero, we know that, for every $r\geq2n_s$,
\beqa \label{eq:lem8-1}
\left\Vert{\bf p} \ 
	\big(\overline{\bf P}^{\phi} \big)^r 
	- \obpi^\phi \right\Vert_1
=
    \left\Vert{\bf p} \ 
	\big(\overline{\bf P}^{\phi} \big)^r 
	- \obpi^\phi \big(\overline{\bf P}^{\phi} \big)^r  \right\Vert_1      \lb
\leq
     h\Big(\big(\overline{\bf P}^{\phi}\big)^{r}\Big)
    \left\Vert{\bf p}
	- \obpi^\phi\right\Vert_1
\leq
    2K_\epsilon\sigma_\epsilon^r.
\eeqa
Hence, the inequality in \eqref{eq:lem8-1}
yields the following bound:
\beqan
&&\myhb \sum_{r=1}^\infty \left\Vert{\bf p} \ 
	\big(\overline{\bf P}^{\phi} \big)^r 
	- \obpi^\phi \right\Vert_1\lb
\myeq \sum_{r=1}^{2n_s} \left\Vert{\bf p} \ 
	\big(\overline{\bf P}^{\phi} \big)^r 
	- \obpi^\phi \right\Vert_1 
	+ \sum_{r=2n_s+1}^{\infty} \left\Vert{\bf p} \ 
	\big(\overline{\bf P}^{\phi} \big)^r 
	- \obpi^\phi \right\Vert_1\lb
\myleq 4n_s
	+ \sum_{r=2n_s+1}^{\infty} 2K_\epsilon\sigma_\epsilon^r
= 4n_s
	+ \frac{2K_\epsilon\sigma_\epsilon^{2n_s+1}}{1-\sigma_\epsilon}=:\eta_\epsilon
\eeqan
\end{proof}

Define $\bvarrho^{\mathscr{X}(\phi)}$ to be a row vector representing
the distribution of server state under the 
stationary PMF $\pi^{\mathscr{X}(\phi)}$ of  $\bX^{\mathscr{X}(\phi)}$, which is given by 
\beqa
\label{eq:varrhoDef}
\varrho^{\mathscr{X}(\phi)}(\by) \myeq \sum_{q \in \mathbb{Q}^{\overline{\by}}} \pi^{\mathscr{X}(\phi)}(\by,q), \quad \by\in\mathbb{Y}.
\eeqa

\begin{lemma}
\label{lem:DistConverge2}
Fix $\onu$ in $(\lambda, \onu^*)$ and $\epsilon$
in $(0,1)$, and let $\beta_{\lambda, \epsilon}$ 
be a positive constant satisfying 
Lemma~\ref{lem:QGoesToZero}. 
Then, the following bound holds for every 
$\phi$ in $\Phi_R^{\epsilon}(\onu)$ and 
all $r$ in $\N$:
\beqan
\left\Vert \bvarrho^{\mathscr{X}(\phi)} - \bvarrho^{\mathscr{X}(\phi)}\big(\overline{\bf P}^{\phi} \big)^r \right\Vert_1 \myleq 2r\frac{(\onu-\lambda)}{\beta_{\lambda,\epsilon}}
\eeqan
\end{lemma}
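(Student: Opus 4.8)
The plan is to compare the two Markov chains governing the server state: the auxiliary chain $\obY^{\phi}$ with transition matrix $\overline{\bf P}^{\phi}$, and the server-state marginal of $\bX^{\mathscr{X}(\phi)}$. The key observation is that $\mathscr{X}(\phi)$ applies exactly the policy $\phi$ whenever the queue is nonempty, so by the equivalence recorded in~\eqref{equivalenceXandY}, the server-state transition probabilities of $\bX^{\mathscr{X}(\phi)}$ agree with those of $\overline{\bf P}^{\phi}$ on every state-action pair \emph{except} those in which the queue is empty (i.e. states of the form $(s,\cA,0)$, where $\mathscr{X}(\phi)$ forces rest regardless of $\phi$). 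First I would make this precise: let $\bvarrho^{\mathscr{X}(\phi)}$ be the stationary server-state row vector defined in~\eqref{eq:varrhoDef}, and write a one-step transition identity for it. Because $\pi^{\mathscr{X}(\phi)}$ is stationary for $\bX^{\mathscr{X}(\phi)}$, pushing it one step forward and then marginalizing onto $\mathbb{Y}$ reproduces $\bvarrho^{\mathscr{X}(\phi)}$; the only discrepancy between ``marginalize then apply $\overline{\bf P}^{\phi}$'' and ``the true $\bX$ dynamics then marginalize'' is concentrated on the empty-queue states.

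The central step is therefore to show
\[
\left\Vert \bvarrho^{\mathscr{X}(\phi)} - \bvarrho^{\mathscr{X}(\phi)} \overline{\bf P}^{\phi} \right\Vert_1 \leq 2\,\frac{\onu-\lambda}{\beta_{\lambda,\epsilon}},
\]
the $r=1$ case. I would argue that the left-hand side is bounded by twice the total stationary probability that $\bX^{\mathscr{X}(\phi)}$ sits in an empty-queue state, namely $2\sum_{s\in\mathbb{S}}\pi^{\mathscr{X}(\phi)}(s,\cA,0)$, since each such state contributes a transition-probability row that may differ between the two dynamics, and rows of stochastic matrices differ by at most $2$ in $\ell_1$. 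Lemma~\ref{lem:QGoesToZero} gives exactly $\sum_{s\in\mathbb{S}}\pi^{\mathscr{X}(\phi)}(s,\cA,0)\leq (\onu-\lambda)/\beta_{\lambda,\epsilon}$ for every $\theta=\mathscr{X}(\phi)$ with $\phi\in\Phi_R^{\epsilon}(\onu)$, which delivers the $r=1$ bound.

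To pass from $r=1$ to general $r$, I would telescope:
\[
\bvarrho^{\mathscr{X}(\phi)} - \bvarrho^{\mathscr{X}(\phi)}\big(\overline{\bf P}^{\phi}\big)^r
= \sum_{j=0}^{r-1}\Big(\bvarrho^{\mathscr{X}(\phi)} - \bvarrho^{\mathscr{X}(\phi)}\overline{\bf P}^{\phi}\Big)\big(\overline{\bf P}^{\phi}\big)^{j},
\]
and then use the fact that multiplication on the right by a stochastic matrix is an $\ell_1$-contraction (it does not increase $\ell_1$ norm) together with the $r=1$ estimate on each of the $r$ summands. This yields the claimed $2r(\onu-\lambda)/\beta_{\lambda,\epsilon}$ and completes the proof. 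The main obstacle I anticipate is the bookkeeping in the first step: carefully writing the stationarity identity for $\bvarrho^{\mathscr{X}(\phi)}$ so that the difference $\bvarrho^{\mathscr{X}(\phi)} - \bvarrho^{\mathscr{X}(\phi)}\overline{\bf P}^{\phi}$ is cleanly exhibited as supported on (and bounded by the mass at) the empty-queue states — one must be careful that both dynamics act on the \emph{same} input vector $\bvarrho^{\mathscr{X}(\phi)}$ and that the queue-length degrees of freedom collapsed by the marginalization do not secretly reintroduce a discrepancy when the queue is nonempty, which is precisely what~\eqref{equivalenceXandY} rules out.
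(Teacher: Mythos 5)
Your proposal is correct and matches the paper's proof essentially step for step: the paper likewise exhibits the discrepancy $\bvarrho^{\mathscr{X}(\phi)} - \bvarrho^{\mathscr{X}(\phi)}\overline{\bf P}^{\phi}$ as a vector $\boldsymbol{\gamma}^{\phi}$ supported on the contribution of the empty-queue states, bounds $\|\boldsymbol{\gamma}^{\phi}\|_1 \leq 2\sum_{s}\pi^{\mathscr{X}(\phi)}(s,\cA,0)$, invokes Lemma~\ref{lem:QGoesToZero}, and then accumulates the one-step error $r$ times. Your telescoping sum combined with the $\ell_1$-contractivity of right-multiplication by a stochastic matrix is algebraically identical to the paper's iterated substitution $\bvarrho^{\mathscr{X}(\phi)} = \bvarrho^{\mathscr{X}(\phi)}(\overline{\bf P}^{\phi})^r + \boldsymbol{\gamma}^{\phi}\sum_{\tau=1}^{r}(\overline{\bf P}^{\phi})^{\tau-1}$ followed by the bound $\|\mathbf{v}\,{\bf A}\|_1 \leq \|\mathbf{v}\|_1\|{\bf A}\|_\infty$.
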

\begin{proof}
Let $\overline{\bf P}^{\phi_1}$ 
be the one-step
transition matrix of $\overline{\bY}$ 
under the policy $\phi_1$, which always chooses 
$\mathcal{R}$ when the server is available.
We denote the row of $\overline{\bf P}^{\phi}$ 
(resp. $\overline{\bf P}^{\phi_1}$) 
corresponding to the server state $\by 
= (s, w) \in 
\mathbb{Y}$ by $\overline{\bf P}^{\phi}_{\by}$
(resp. $\overline{\bf P}^{\phi_1}_{\by}$).

Since $\bvarrho^{\mathscr{X}(\phi)}$ 
remains the same after one step transition, using 
the equality in \eqref{eq:varrhoDef}, we can rewrite
$\bvarrho^{\mathscr{X}(\phi)}$ as 
\beqa
&& \hspace{-0.55in} 
\bvarrho^{\mathscr{X}(\phi)}
= \sum_{s \in \mathbb{S}} \Big[
	\pi^{\mathscr{X}(\phi)}(s, \cA, 0)
			\ \overline{\bf P}^{\phi_1}_{(s,\cA)} \lb
&& \hspace{0.35in}
	+ \sum_{w \in \mathbb{W}} \Big( \sum_{q = 1}^\infty 
	\pi^{\mathscr{X}(\phi)}(s, w, q)  \Big)
			\overline{\bf P}^\phi_{(s,w)} \Big] \lb 
&& \hspace{-0.55in} 
= \sum_{s \in \mathbb{S}}
	\Big[ \pi^{\mathscr{X}(\phi)}(s, \cA, 0) 
			\ \overline{\bf P}^{\phi_1}_{(s, \cA)} 
	+\varrho^{\mathscr{X}(\phi)}(s, \cB) 
		\ \overline{\bf P}^\phi_{(s, \cB)} \lb
&& + \Big( \varrho^{\mathscr{X}(\phi)}(s, \cA)
	- \pi^{\mathscr{X}(\phi)}(s, \cA, 0) \Big)
			\overline{\bf P}^\phi_{(s, \cA)} \Big] \lb
&& \hspace{-0.55in} 
= \sum_{s \in \mathbb{S}} 
	\Big[ \pi^{\mathscr{X}(\phi)}(s, \cA, 0) 
		\big( \overline{\bf P}^{\phi_1}_{(s, \cA)} 
			- \overline{\bf P}^{\phi}_{(s, \cA)}  \big)
					\Big] 
+ \bvarrho^{\mathscr{X}(\phi)} \ 
	\overline{\bf P}^\phi.
		\label{eq:DC3-1}
\eeqa

Define 
$\boldsymbol{\gamma}^{\phi}
\Eqdef \sum_{s \in \mathbb{S}} 
	\big[ \pi^{\mathscr{X}(\phi)}(s, \cA, 0) 
		\big( \overline{\bf P}^{\phi_1}_{(s, \cA)} 
			- \overline{\bf P}^{\phi}_{(s, \cA)}  \big)
					\big]$. 
Applying (\ref{eq:DC3-1}) iteratively, we
obtain 
\beqa
\bvarrho^{\mathscr{X}(\phi)}
\myeq \bvarrho^{\mathscr{X}(\phi)}
	\big( \overline{\bf P}^\phi \big)^r 
	+ \boldsymbol{\gamma}^{\phi}\sum_{\tau=1}^{r} 
		\big( \overline{\bf P}^\phi \big)^{\tau-1}, 
	\ r \in \N.
	\label{eq:DC3-2}
\eeqa
Subtracting the first term on the RHS
of (\ref{eq:DC3-2}) from both sides and taking
the norm, 
\beqan
&& \hspace{-0.3in}
\left\Vert \bvarrho^{\mathscr{X}(\phi)}
	- \bvarrho^{\mathscr{X}(\phi)}
		\big( \overline{\bf P}^\phi \big)^r  
	 		\right\Vert_1 
= \left\Vert \boldsymbol{\gamma}^{\phi}\sum_{\tau=1}^{r} 
		\big( \overline{\bf P}^\phi \big)^{\tau-1}
			\right\Vert_1 \lb
\myleq \left\Vert
    \boldsymbol{\gamma}^{\phi}
    \right\Vert_1
        \sum_{\tau=1}^{r} 
	\left\Vert 
		\big( \overline{\bf P}^\phi \big)^{\tau-1}
			\right\Vert_\infty 
= r\left\Vert
    \boldsymbol{\gamma}^{\phi}
    \right\Vert_1, 
	\label{eq:DC3-3}
\eeqan
where the last equality is a consequence
of $\left\Vert {\bf P} \right\Vert_\infty = 1$
for a stochastic matrix ${\bf P}$.
Substituting the expression for $\gamma^{\phi}$
and using the inequality 
$\norm{\overline{\bf P}^{\phi_1}_{\by}
- \overline{\bf P}^\phi_{\by}}_1 \leq 2$ for all 
$\by \in \mathbb{Y}$, we get
\beqan
r\left\Vert
    \boldsymbol{\gamma}^{\phi}
    \right\Vert_1
\myleq 2 r \Big( \sum_{s \in \mathbb{S}}
	\pi^{\mathscr{X}(\phi)}(s, \cA, 0) \Big).
	\label{eq:DC3-4}
\eeqan
Thus, we get
\beqan
\left\Vert \bvarrho^{\mathscr{X}(\phi)} - \bvarrho^{\mathscr{X}(\phi)}\big(\overline{\bf P}^{\phi} \big)^r \right\Vert_1
\myleq 2 r \Big( \sum_{s \in \mathbb{S}}
	\pi^{\mathscr{X}(\phi)}(s, \cA, 0) \Big)\lb
\myleq 2r\frac{(\onu-\lambda)}{\beta_{\lambda,\epsilon}}.
\eeqan
The last inequality holds because $\beta_{\lambda, \epsilon}$ 
satisfies Lemma \ref{lem:QGoesToZero}.
\end{proof}

\begin{lemma}
\label{lem:DistConverge3}
Fix $\onu$ in $(\lambda, \onu^*)$ and $\epsilon$
in $(0,1)$, and let $\beta_{\lambda, \epsilon}$ 
and $\eta_\epsilon$ be positive constants satisfying 
Lemmas~\ref{lem:QGoesToZero} and
\ref{lem:DistConverge2}, respectively. 
Then, the following inequality holds for every $N\in\N$:
\beqan
\left\Vert \bvarrho^{\mathscr{X}(\phi)} - \obpi^\phi \right\Vert_1 \myleq \frac{\eta_{\epsilon}}{N}+\frac{(N+1)(\onu - \lambda)} {\beta_{\lambda,\epsilon}}, \ \phi \in \Phi_R^{\epsilon}(\onu)
\eeqan
\end{lemma}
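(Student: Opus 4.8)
The goal is to bound $\|\bvarrho^{\mathscr{X}(\phi)} - \obpi^\phi\|_1$ by combining the two preceding lemmas with a telescoping/averaging argument over the powers of $\overline{\bf P}^{\phi}$. The key identity to exploit is that $\obpi^\phi$ is the stationary PMF of $\overline{\bf P}^{\phi}$, so $\obpi^\phi\big(\overline{\bf P}^{\phi}\big)^r = \obpi^\phi$ for every $r$, while $\bvarrho^{\mathscr{X}(\phi)}$ is \emph{not} invariant under $\overline{\bf P}^{\phi}$ (Lemma~\ref{lem:DistConverge2} quantifies exactly how far it drifts). The plan is to write, for any $N\in\N$,
\beqan
\bvarrho^{\mathscr{X}(\phi)} - \obpi^\phi
= \frac{1}{N}\sum_{r=1}^{N}\Big( \bvarrho^{\mathscr{X}(\phi)} - \obpi^\phi \Big)
= \frac{1}{N}\sum_{r=1}^{N}\Big( \bvarrho^{\mathscr{X}(\phi)} - \bvarrho^{\mathscr{X}(\phi)}\big(\overline{\bf P}^{\phi}\big)^r \Big)
+ \frac{1}{N}\sum_{r=1}^{N}\Big( \bvarrho^{\mathscr{X}(\phi)}\big(\overline{\bf P}^{\phi}\big)^r - \obpi^\phi \Big),
\eeqan
using $\obpi^\phi = \obpi^\phi\big(\overline{\bf P}^{\phi}\big)^r$ in the second sum. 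Then take $\|\cdot\|_1$ and the triangle inequality.

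The first sum is controlled termwise by Lemma~\ref{lem:DistConverge2}: $\|\bvarrho^{\mathscr{X}(\phi)} - \bvarrho^{\mathscr{X}(\phi)}\big(\overline{\bf P}^{\phi}\big)^r\|_1 \leq 2r(\onu-\lambda)/\beta_{\lambda,\epsilon}$, so $\frac{1}{N}\sum_{r=1}^N$ of these is at most $\frac{1}{N}\cdot\frac{2(\onu-\lambda)}{\beta_{\lambda,\epsilon}}\cdot\frac{N(N+1)}{2} = \frac{(N+1)(\onu-\lambda)}{\beta_{\lambda,\epsilon}}$, which is the second term in the claimed bound. The second sum is controlled by Lemma~\ref{lem:DistConverge1} applied with ${\bf p} = \bvarrho^{\mathscr{X}(\phi)}$ (a valid distribution over $\mathbb{Y}$): since $\phi\in\Phi_R^{\epsilon}(\onu)\subseteq\Phi_R^{\epsilon}$, we have $\sum_{r=1}^\infty \|\bvarrho^{\mathscr{X}(\phi)}\big(\overline{\bf P}^{\phi}\big)^r - \obpi^\phi\|_1 \leq \eta_\epsilon$, hence $\frac{1}{N}\sum_{r=1}^{N}\|\bvarrho^{\mathscr{X}(\phi)}\big(\overline{\bf P}^{\phi}\big)^r - \obpi^\phi\|_1 \leq \frac{\eta_\epsilon}{N}$, which is the first term. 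Adding the two bounds yields the lemma.

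I do not expect a genuine obstacle here—the lemma is essentially a bookkeeping step that packages Lemmas~\ref{lem:DistConverge1} and~\ref{lem:DistConverge2} into a single inequality with a tunable parameter $N$ (which will later be optimized, roughly $N\sim(\onu-\lambda)^{-1/2}$, to produce the $(\onu-\lambda)^{1/2}$ rate in Theorem~\ref{thm:MainDistConv}). The only mild point of care is confirming that Lemma~\ref{lem:DistConverge1} applies uniformly over $\Phi_R^{\epsilon}$ with the \emph{same} $\eta_\epsilon$—which it does, since its statement already quantifies over all $\phi\in\Phi_R^{\epsilon}$ and all initial distributions ${\bf p}$—and noting that $\bvarrho^{\mathscr{X}(\phi)}$ is indeed a probability vector on $\mathbb{Y}$ by its definition in~\eqref{eq:varrhoDef} together with the fact that $\pi^{\mathscr{X}(\phi)}$ is a PMF on $\mathbb{X}$ and $\{\mathbb{Q}^{\oby}:\oby\in\mathbb{Y}\}$ partitions the queue-coordinate fibers of $\mathbb{X}$.
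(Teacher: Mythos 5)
Your proof is correct and is essentially identical to the paper's: the same insert-and-subtract decomposition averaged over $r=1,\dots,N$, with Lemma~\ref{lem:DistConverge2} controlling the first sum via $\sum_{r=1}^N r = N(N+1)/2$ and Lemma~\ref{lem:DistConverge1} controlling the second. The paper simply writes the triangle-inequality step directly without displaying the intermediate telescoping identity, but the argument is the same.
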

\begin{proof}
The proof of the lemma is a simple application
of Lemmas~\ref{lem:DistConverge1} and
\ref{lem:DistConverge2}. 
\beqan
\left\Vert \bvarrho^{\mathscr{X}(\phi)} 
	- \obpi^\phi \right\Vert_1 
\myleq \frac{1}{N}\sum_{r = 1}^N 
	\left\Vert \bvarrho^{\mathscr{X}(\phi)} 
	- \bvarrho^{\mathscr{X}(\phi)}
		\big(\overline{\bf P}^{\phi} \big)^r
		\right\Vert_1\lb
    && \myb +\frac{1}{N} \sum_{r = 1}^N 
    	\left\Vert \bvarrho^{\mathscr{X}(\phi)}
    	\big(\overline{\bf P}^{\phi} \big)^r
    	- \obpi^\phi \right\Vert_1 \lb
\myleq \frac{(N+1)(\onu - \lambda)}
	{\beta_{\lambda,\epsilon}} 
	+ \frac{\eta_{\epsilon}}{N},
\eeqan
where the last inequality utilizes $\sum_{r=1}^N r
= N (N+1) / 2$.
\end{proof}

\subsubsection{\it Proof of Theorem~\ref{thm:MainDistConv}}

We have
\beqan
&& \hspace{-0.3in}
\sum_{\by\in\mathbb{Y}} \left|\opi^\phi(\by) - \sum_{q>0} \pi^{\mathscr{X}(\phi)}(\by,q) \right|\lb
\myleq
    \left\Vert \obpi^\phi - \bvarrho^{\mathscr{X}(\phi)} \right\Vert_1 +\sum_{\by\in\mathbb{Y}}  \left|\varrho^{\mathscr{X}(\phi)}(\by)- \sum_{q>0} \pi^{\mathscr{X}(\phi)}(\by,q) \right|\lb
\myeq
    \left\Vert \obpi^\phi - \bvarrho^{\mathscr{X}(\phi)} \right\Vert_1 + \sum_{s\in\mathbb{S}}  \pi^{\mathscr{X}(\phi)}(s,\cA,0)\lb
\myleq
    \frac{\eta_{\epsilon}}{N}+\frac{(N+1)(\onu - \lambda)}{\beta_{\lambda,\epsilon}}+\frac{\onu-\lambda}{\beta_{\lambda,\epsilon}},
\eeqan
where the final inequality follows from Lemmas \ref{lem:QGoesToZero} and 
\ref{lem:DistConverge3}. By selecting $N = \ceil*{\frac{\eta_\epsilon}{(\onu-\lambda)^{\frac{1}{2}}}}$, we obtain
the inequality in \eqref{eq:MainDistConvIneq}:
\beqan
&& \hspace{-0.3in}
\sum_{\by\in\mathbb{Y}} \left|\opi^\phi(\by) - \sum_{q>0} \pi^{\mathscr{X}(\phi)}(\by,q) \right|\lb
\myleq
    (\onu-\lambda)^{\frac{1}{2}}+\frac{\Big(\frac{\eta_\epsilon}{(\onu-\lambda)^{\frac{1}{2}}}+1+1\Big)(\onu - \lambda)}{\beta_{\lambda,\epsilon}}+\frac{\onu-\lambda}{\beta_{\lambda,\epsilon}}\lb
\myleq
    \frac{\beta_{\lambda,\epsilon} + \eta_\epsilon}{\beta_{\lambda,\epsilon}}(\onu-\lambda)^{\frac{1}{2}}+\frac{3}{\beta_{\lambda,\epsilon}}(\onu-\lambda)
\eeqan

\bibliographystyle{ieeetr}
\bibliography{martinsrefs}

\end{document}